\newcommand{\cc}{\ensuremath{\mathbbm{C}}}
\newcommand{\zz}{\ensuremath{\mathbbm{Z}}}
\newcommand{\nn}{\ensuremath{\mathbbm{N}}}
\newcommand{\rr}{\ensuremath{\mathbbm{R}}}
\newcommand{\torus}{\ensuremath{\mathbb{T}}}
\newcommand{\Ei}{{\rm Ei}}
\newcommand{\sech}{{\rm sech}}
\newcommand{\arsinh}{{\rm arsinh}}
\newcommand{\jmod}{\,{\rm mod}}
\newcommand{\me}{\ensuremath{\mathrm{e}}}
\newcommand{\mi}{\ensuremath{\mathrm{i}}}
\DeclareMathOperator{\wind}{wind}
\DeclareMathOperator{\bild}{range}
\newcommand{\id}{\ensuremath{\mathbbm{1}}}
\newcommand{\je}[1]{\textcolor{black}{#1}}
\newcommand{\hb}[1]{\textcolor{black}{#1}}
\DeclareMathOperator{\tr}{tr}
\begin{document}

\title{Mutual information area laws for thermal free fermions}

\author{H.\ Bernigau\inst{1}, M.\ J.\ Kastoryano\inst{2}, and J.\ Eisert\inst{2}}

\institute{1 Max Planck Institute for Mathematics in the Sciences, 04103 Leipzig, Germany\\
2 Dahlem Center for Complex Quantum Systems, Freie Universit{\"a}t Berlin, 14195 Berlin, Germany}

\date{\today}

  \maketitle
 \begin{abstract}
We provide a rigorous and asymptotically exact expression 
of the mutual information of translationally invariant free 
fermionic lattice systems in a Gibbs state. In order to arrive at this result, we introduce a novel framework
for computing determinants of T\"oplitz operators with smooth symbols, and
for treating T\"oplitz matrices with system size dependent entries. The 
asymptotically exact mutual information for a partition of the one-dimensional lattice
satisfies an area law, with a prefactor which we compute explicitly.
As examples, we discuss the fermionic  XX model in one dimension and free fermionic models on the 
torus in higher dimensions in detail. Special emphasis is put onto the discussion of 
the temperature dependence of the mutual information, 
scaling like the logarithm of the inverse temperature, hence confirming an expression suggested by
conformal field theory. We also comment on the applicability of the formalism to treat open systems driven by
quantum noise. 
In the appendix, we derive useful bounds to the mutual information in terms of
purities. Finally, we provide a detailed error analysis for finite system sizes. This analysis 
is valuable in its own right for the abstract theory of T\"oplitz determinants.

\end{abstract}

\tableofcontents

\section{Introduction}

How do the  correlations of natural quantum states of many-body systems behave? If ``natural'' 
is taken to mean ``generic'' 
in the sense of a random pure state drawn from the Haar measure, 
then the answer to this question is: Subsystems will almost surely be very nearly maximally correlated with their
complementary subsystems. However, this is 
 not the situation that one is usually interested in in many-body and condensed-matter physics. Ground
states of local Hamiltonians typically exhibit far less entanglement 
than that suggested by the previous argument.
Indeed, there is a large body of evidence 
\cite{Area,Area2,Peschel,Bombelli,Srednicki,Wilczek,Harmonic,Latorre,Cardy,Jin,GraphStates,Area1,Zanardi,MMW,Klitch,NumericsFermions,Toeplitz2,Toeplitz3,AreaLong,Area3,Fermi1,Fermi3,Fermi4,Hastings,HalfSpaces,FH,UniversalKorepin,Toeplitz1,OurPRL,Detectability,Brandao,Micha,Bruno}, suggesting 
that ground states of gapped quantum many-body systems on lattices satisfy an \textit{area law} \cite{Area}
for the entanglement entropy. In other words, given the pure state of a lattice system, if one distinguishes a certain (connected) region of the lattice, then 
the von-Neumann (or  Renyi) entropy of the reduced state associated with this region 
does not grow, as one might expect, like the  number of degrees of freedom of this region (its "volume"). Instead, it  scales 
like the number of degrees of freedom on the boundary, hence as its ``area''. 
Area laws have been shown for all gapped one-dimensional models with local interactions \cite{Hastings,Detectability,Brandao}.  
Extensions to classes of higher-dimensional lattice models have also been obtained \cite{Area1,Area2,OurPRL}. 
These results and others therefore suggest that ground states of quantum many-body systems
are much less entangled than they could be. 
These measures of entanglement -- refined quantities revealing much 
more detailed information about the structure of correlations
than more conventional correlation functions -- in a way inherit the
decay of correlations. 
% Put that in again, ok?
This deep insight is also at the basis of classical efficient simulations
of quantum many-body systems:
The formulation and analysis of many-body systems in terms of matrix-product and other tensor network states have put this intuition on a solid theoretical  footing \cite{Scholl,2d,Area}.
In numerical simulations of many-body systems, the importance of such approximations can hardly be overestimated. As a consequence, the last decade has seen an enormous amount of  
interest in the study of entanglement in quantum many-body systems in the 
condensed matter context and in issues of numerical simulation \cite{Scholl,2d,Area,Peschel3}, 
entanglement spectra \cite{Peschel3,Peschel2,Spectrum,Spectrum2,Spectrum3,Boundary}, and their
relationship to quantifiers of topological order \cite{Spectrum,Spectrum2,Spectrum3,Boundary}.

Can this analysis be extended to thermal states? A moment of thought reveals that the entanglement entropy should in fact satisfy a volume law for thermal states. However, it turns out that the entanglement entropy is not the right quantity to grasp 
the correlations of a mixed quantum state.  There exist many such measures, but the most natural and the most frequently adopted one 
is the {\it mutual information}. For 
a subset $A$ of sites of the lattice and its complement $B$, 
it is defined as
\begin{equation}
	I(A:B)= S(\rho_A)+ S(\rho_B) - S(\rho),
\end{equation}
where $\rho_A$ and $\rho_B$ are the reduced states of $\rho$ with respect to $A$ and $B$, respectively. For pure states, the mutual information reduces to twice the entanglement entropy. One might then hope that this quantity fulfills an area law for thermal states of local hamiltonians on a lattice.
This turns out to be true in fact for any fixed temperature \cite{Mutual,BR}. Specifically, for
the Gibbs states of a Hamiltonians with local interactions on a lattice, and with bounded operator norm $\|h\|$, one finds that
\begin{equation}
\label{linearAreaLaw}
	I(A:B)\leq 2 \beta \|h\| \,\, 
	|\partial A|,
\end{equation}
where $\partial A$ denotes the boundary area of the region labelled $A$. This statement is in fact
surprisingly simple to prove.  It following rather directly from the extremality of Gibbs states with respect to the 
free energy.

The strength of this result - that it is completely general -  also constitutes its weakness; Eq. (\ref{linearAreaLaw}) does not say anything about the correlation behaviour of thermal states of specific models.  Given that the bound is linearly divergent in the inverse temperature, the law becomes less and less tight in the limit of small temperatures.  In particular, at zero temperature, we know that there can be logarithmic corrections to the area law for critical systems. Hence, it would be desirable to have bounds which depend more explicitly on the properties of the system at hand. What is more, asymptotically exact results for important classes of ``laboratory'' models seem important.

In this work, we present asymptotically exact results for important classes of ``laboratory'' models.
We introduce a framework capable of rigorously computing 
the mutual information in free fermionic lattice models. This framework is based on new approximation techniques as well as on  proof tools for dealing with T{\"o}plitz matrices with smooth symbols.
We separate the presentation in a discussion of the one-dimensional case and one of the 
higher-dimensional situation of a cubic lattice with the topology of a torus. We find rigorous
area laws for the mutual information
that are meaningful in the limit of small temperatures, and converge to the known results for the entanglement entropy for ground states \cite{FH,Jin}. This feature is even present in higher-dimensional fermionic lattice models  \cite{Fermi1,Fermi3,Fermi4,HalfSpaces,Klitch}. \\
For \hb{the critical model at} low temperatures, we \hb{bound the mutual information by} a logarithmic divergent term in the inverse temperature, in fact confirming predictions of conformal field theory. This scaling suggests 
that in order to detect criticality, extremely low temperatures are
needed. This is an exponential improvement in the inverse temperature to the above bound. \hb{For the low temperature limit of the non-critical model we can bound the mutual information by a term that is exponentially decaying in the inverse temperature. The  best decaying rate in our estimate can be bounded by the energy gap. Finally for the high temperature limit we prove an asymptotic decay of the mutual information  proportional to the inverse square of the temperature.}

Mathematically, we discuss new methods of approximation, allowing to use T{\"o}plitz matrix
techniques to studying arbitrary subsystems of translationally invariant fermionic models. What is more,
we introduce a novel second order Szeg{\"o} theorem, simplifying Widom's formula for smooth symbols. This approach allows to use the trace formula without encountering an infinite sum leading to arbitrarily highly
oscillatory terms. Hence, apart from the application in the study of quantum many-body physics, we expect our results to be a 
highly 
useful tool in the abstract theory of T\"oplitz determinants. The paper is structured as follows:
\begin{itemize}
\item In Section \ref{Stage}, we introduce Renyi entropy mutual informations for free fermionic models, discuss fermionic
covariance matrices and present new, easily computable 
and practical upper and lower bounds to mutual informations.
\item 
In Section \ref{Statement}, we go on to formulate the problem, and we provide a synopsis of the mathematical argument. A general proof strategy for computing the mutual information is also provided.
\item 
Section \ref{mainresult} contains the main theorem -- the asymptotically exact expression for the
mutual information -- and the core of the proof. Since the argument is general enough to capture
expressions deriving from all Renyi entropies, the complete knowledge of the spectrum of reduced states
is also obtained in this fashion. This section reports also the main technical progress when it comes
to dealing with T{\"o}plitz matrices with smooth symbols. 
\item 
Section \ref{sec:temp} is dedicated to a thorough discussion of the temperature dependence of the area law, 
with an emphasis on the the very low temperatures behavior.
\item  We then turn to higher-dimensional free fermionic models on the torus in Section \ref{TorusSec}, 
which we can treat with similar methods.
\item  Finally, we present an outlook in Section \ref{Outlook}, 
comparing the findings with predictions of conformal field theory, and discussing
implications to the study of entanglement spectra \cite{Peschel3,Peschel2,Spectrum,Spectrum2,Spectrum3,Boundary}. 
We also consider the implications for  
noise-driven, open fermionic quantum many-body systems, as have recently been discussed also in the
context of open Majorana wires and notions of noise-driven criticality \cite{EisertProsen,Diehl2,Diehl,Cirac,Timing}.
\end{itemize}

\section{Free fermionic models}\label{Stage}

In this work, we focus on isotropic translationally invariant fermionic models. For now, we assume the most general form for the couplings. Later, we will concentrate on the fermionic variant of the XX spin model (under the Jordan-Wigner transformation), 
which is a particularly important special case. In this section, we 
introduce the class of models discussed here.

\subsection  {Hamiltonians}
We consider free fermionic models on cubic lattices $(\zz_N)^D$ for some even $N\in \nn$ as $N$ becomes large. Bi-sected 
geometries on the torus can be related to the situation of having $D=1$. For simplicity of notation, but without loss of generality, 
we will therefore present this one-dimensional case in the main
part of this work, while discussing the higher-dimensional situation 
in Section \ref{TorusSec}. 
The Hamiltonian takes the general form
\begin{equation}
\label{eqn:Hamiltonian}
	H:= \sum_{i,j\in \zz_N}f_j^\dagger V_{j,k} f_k,
\end{equation}
%
%
% The Hamiltonian matrix is now called $V$.
%
with $V=V^T \in \rr^{N\times N}$ being a circulant matrix, referred to as the 
{\it Hamiltonian matrix}. We assume the coupling to be 
an specified, finite-ranged interaction of arbitrary interaction length $r\in\nn$.
The couplings are defined by 
a sequence of numbers $(v_k)_{k\in\nn}$ 
with the property that
$v_k=0$ for $k>r$. Then take
\begin{equation}\label{dcoupling}
	d_j:= v_{|j+1|},\, j=-N/2+1,\dots, N/2,	
\end{equation}
having a natural reflection symmetry, $d_{-k \jmod N}= d_{k}$, and consider the Hamiltonian
matrix with entries
\begin{equation}
	V_{i,j}= d_{(i-j)\jmod N}
\end{equation}
for $i,j\in \zz_N$. This situation is depicted in Fig.\ \ref{fig:1d}. 	
The most important special case is constituted by the fermionic variant of the {\it XX model}, which is originally a model for spin-$1/2$ systems. Here, we directly consider its fermionic instance, by virtue of the Jordan Wigner transformation. In this language,
\begin{equation}\label{XX}
	v_1=a,\,
	v_2=b,\,
	v_k= 0
\end{equation}	
for $a,b\in\rr$ and $2<k\in \nn$, so that $V$ is a circulant matrix with $a$ on the main diagonal and $b$ on the first 
off-diagonal.
All these models are integrable and exactly
solvable. Thermal states are defined entirely by the collection of second moments of fermionic operators.
The arbitrary finite range of interactions has in the main
text been chosen for simplicity of notation only, and exponentially decaying interactions can easily be 
accommodated as well. 
Since $V$ is circulant, we find the spectrum $\{\varepsilon_k\}$ of $V$ to \je{satisfy}
\begin{equation}
	\frac{1}{N}\sum_{k\in \zz_N}\varepsilon_k \me^{2 \pi \mi k/N}=d_k .
\end{equation}
For the well-studied fermionic variant of the XX model, see Eq.\ (\ref{XX}), one gets the familiar expression
\begin{equation}
	\varepsilon_k = a +2b \cos \left(\frac{2\pi k}{N}\right).
\end{equation}

\begin{figure}
	\begin{center}
	\includegraphics[width=0.7\linewidth]{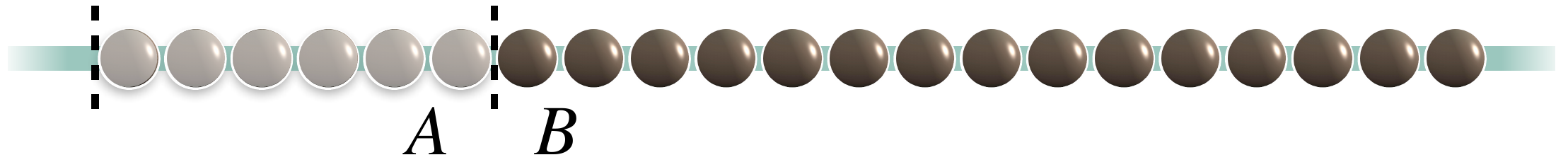}
	\end{center}
	\caption{Geometry of the considered situation for one-dimensional system, consisting of $N$ degrees of freedom, coupled with a local Hamiltonian equipped with periodic boundary conditions.
	The distinguished region embodying sites $\{0,\dots, L-1\}$
	is referred 
	to as $A$, its complement is called $B$.} \label{fig:1d} 				
\end{figure}

\subsection  {Majorana fermions and covariance matrices}
It is convenient to define fermionic {\it covariance matrices} in terms of
{\it Majorana fer\-mions}:
\begin{equation}
   r_i := \frac{f_i + f_i^{\dagger}}{\sqrt{2}},\,
   r_{i+N} := \frac{f_i - f_i^{\dagger}}{\sqrt{2} \mi},\, 
   i=0,\dots, N-1,
\end{equation}
similar to the canonical coordinates for bosonic operators.
These Majorana fermions are
Hermitian, traceless, and form a Clifford algebra.
We are interested in thermal states of the quadratic Hamiltonians, which are particular instances of (quasi)-free or Gaussian fermionic states.
Such states are completely specified by their fermionic covariance matrix
\cite{lagrangefermionic,HalfSpaces,EisertProsen,Terhal}
\hb{$\Gamma\in \rr^{2N\times 2N}$} with entries
\begin{equation}
 \label{eqn:defcovmatrix}
    \Gamma_{i,j}=\mi \tr\left(\rho {[}r_i, r_j{]}\right),
\end{equation}
%\oldtext{
%\begin{equation}
%    \gamma_{i,j}=\mi \tr\left(\rho {[}r_i, r_j{]}\right),
%\end{equation}
%}
$i,j=0,\dots, 2N-1$, where the brackets denote the commutator.
\hb{$\Gamma$} is always anti-symmetric, 
\hb{
\begin{equation}
	\Gamma=-\Gamma^T,
\end{equation}}
and satisfies \hb{$-\Gamma^2\leq \id$}. 

\subsection  {Covariance matrices of Gibbs states and their reductions}

Matrix functions of the covariance matrix can be computed exactly.
This is the case because 
any covariance matrix is unitarily equivalent to a direct sum of $2\times 2$ covariance matrices, reflecting a
situation of entirely uncoupled fermionic modes. This is the fermionic analogue of what is often called  the Williamson normal form in the bosonic setting.
As a consequence, one can identify an explicit expression
for the covariance matrix of {\it Gibbs states}
\begin{equation}
	\rho = \frac{\me^{-\beta H}}{\tr(\me^{-\beta H})}
\end{equation}
at inverse temperature $\beta>0$. The covariance matrix $\Gamma\in\rr^{2N \times2N}$
of $\rho$
is given by
\hb{
\begin{equation}
\Gamma = \left( 
\begin{array}{cc}
	  0 & \eta \\
	   -\eta & 0\\
\end{array}
   \right),
\end{equation}
where
\begin{equation}
\eta=f(V)
\end{equation}}
where \je{for all values of $\beta>0$ the smooth function} $f:\rr\rightarrow \rr$ is defined as
\begin{equation}
	f (x):=2\frac{\me^{-\beta x}}{\me^{-\beta x}+1}-1= -\tanh \left(\frac{\beta x}{2}\right).
\end{equation}
\je{The expression $f(V)$ is to be interpreted as a matrix function, that is, $f$ is applied to the spectral values of Hermitian matrices $V$ \cite{Bhatiamatrix}.}
We suppress
the temperature dependence here: Throughout this work, we will be concerned with Gibbs states with
respect to some $\beta$. We allow for arbitrary temperatures and will also later consider the 
asymptotic limits $\beta\rightarrow\infty$ and $\beta \rightarrow 0$. 

Reduced states of Gaussian states are always Gaussian (as can be seen most easily by considering their Grassman representation; 
compare Ref.\ \cite{lagrangefermionic}). The covariance matrix of a reduced state is the appropriate principle sub-matrix of the full covariance matrix. Applied to Gibbs states, we find that the reduced state of a subset $A=\{0,\dots, L-1\}\subset \{0,\dots, N-1\}$ is a Gaussian state 
with covariance matrix
 \begin{equation}
     \Gamma|_A=\left(
\begin{array}{cc}
0 & f(V)|_A\\
-f(V)|_A& 0	
\end{array}
\right),
\end{equation}
where $.|_A$ denotes the principal sub-matrix associated with the degrees of freedom of subsystem $A$. 

\subsection  {T\"oplitz matrices}
T\"oplitz matrices 
\cite{Widom,Widom2,Toeplitzbuch}
may be viewed as principal sub-matrices of infinite circulant matrices. Such matrices will play a crucial role in this work.
Throughout, we will encounter families of {\it T\"oplitz matrices} $T_n\in \rr^{n\times n}$, whose entries are given by 
\begin{equation}
	(T_n)_{i,j} = t_{i-j}
\end{equation}
for some sequence of reals $(t_k)_{k\in \zz}$. 
Colloquially speaking, such T\"oplitz matrices largely resemble circulant
matrices, with the ``upper right and lower left corners'' deviating from a strict circulant matrix. In fact, most
of the theory on T\"oplitz matrices (T\"oplitz determinants) is in one way or the other concerned with the error made when replacing a T\"oplitz
matrix by a circulant matrix of the same dimension. 
Later on we will consider sub-matrices of large finite circulant matrices. 
It is clear that this sequence of T\"oplitz matrices is defined entirely in terms of the sequence of numbers
$(t_k)_{k\in \nn}$. 
This sequence of numbers - and hence the sequence of T\"oplitz matrices - is most conveniently represented in therms of its symbol, defined as the inverse Fourier transform on the torus $\torus:=\left\{x \in \cc :\left|x\right|=1\right\}$ 
of $(t_k)_{k\in \zz}$; i.e. the {\it symbol} $t\in L_{\infty}(\torus)$\footnote{As usual, $L_p\left(\torus\right)$ denotes the set of equivalence classes of $p-$integrable functions up to functions that vanish almost everywhere on $\torus$, with $p=\infty$ denoting the equivalence classes with finite essential supremum.}
is defined as
  \begin{equation}
   t_k = \frac{1}{2 \pi} \int^{2 \pi}_{0} d\theta t(\me^{\mi \theta}) \me^{-\mi k \theta}.
 \end{equation}
The decay of the Fourier coefficients is directly related to the regularity properties of the symbol. 
The summability of the coefficients (i.e., $\sum_{k \in \zz} \left|t_k\right| < \infty$) is sufficient to ensure $t\in L_{\infty}(\torus)$. The relationship between the decaying behaviour of the Fourier coefficients and the regularity of the symbol will be used very frequently. In fact, 
given our assumptions on the interaction parameters, much better regularity results can be derived as we will show and use later on. Mathematically, $t \in L_{\infty}(\torus)$ is equivalent to the requirement that the associated T\"oplitz operator generates a continuous linear operator on $l^2(\nn)$. Furthermore, note that \hb{the spectra} 
can be expressed entirely in terms of the symbol.

The starting point of our analysis is the following observation:
covariance matrices of subsystems of thermal states of translationally 
invariant fermionic models are well approximated by T\"oplitz matrices.
Since the Hamiltonian matrix $V$ in Eq.\ (\ref{eqn:Hamiltonian})
is circulant, the matrix $\eta$ of a Gibbs state $\rho$ is, for any $\beta>0$, circulant 
as well. 
That is to say,
\begin{equation}
	\eta_{i,j} = l_{i-j \jmod N},
\end{equation}	
for some real sequence $(l_k)_{k\in \zz}$, 
suppressing the temperature dependence. Indeed, 
sub-matrices $\eta|_A$  for a region $A=\{0,\dots, L-1\}$ of the matrix $\eta$
(of the large but finite system)
can be well approximated by the  T\"oplitz matrix $\eta_L\in \rr^{L\times L}$ for large $N$ (see Section \ref{AP}).
Th\hb{e} family of T\"oplitz matrices
$\eta_L$ can be expressed in terms of the symbol
\begin{equation}\label{lambda}
	\lambda = f \circ \varepsilon
\end{equation}
where $\varepsilon\in L_\infty(\torus)$ is defined as
\begin{equation}\label{Epsilon}
	\varepsilon(x)= \sum_{k\in\zz} d_k x^k.
\end{equation}
Again, via the inverse Fourier transform on the torus, one recovers
\begin{equation}
	d_k = \frac{1}{2\pi}\int_0^{2\pi}
	d\theta \varepsilon(\me^{\mi\theta}) \me^{-\mi k \theta},
\end{equation}
where $(d_k)_{k\in\zz}$ 
is just the sequence of numbers that govern the coupling in the Hamiltonian matrix in Eq.\ (\ref{eqn:Hamiltonian}).

\subsection  {Entropies of fermionic Gaussian states}
Given that the states we consider are completely described by their covariance matrices, we are able to explicitly compute their Renyi, and 
specifically von-Neumann, entropies explicitly and efficiently. In particular, we consider covariance matrices of the form
\begin{equation}
	\Gamma= \left(
	\begin{array}{cc}
	0 & X\\
	-X & 0
	\end{array}
	\right)
\end{equation}
where $X=X^T\in\rr^{n\times n}$. For fermionic Gaussian states $\rho$ of $n$ modes with
a covariance matrix of this form
one finds
\begin{eqnarray}
	S_\alpha(\rho)&=& \frac{1}{1-\alpha}\log_2
	\tr(\rho^\alpha) = \tr(s_\alpha(X)),
\end{eqnarray}
where the function $s_\alpha:[-1,1]\rightarrow [0,1]$ is defined as
\begin{eqnarray}
	s_\alpha(x) := \frac{1}{1-\alpha}\log_2
	\left(
	\left(
	\frac{1+x}{2}
	\right)^\alpha
	+ \left(
	\frac{1-x}{2}
	\right)^\alpha\right).
\end{eqnarray}
Taking the limit as $\alpha\rightarrow1$, one recovers the von-Neumann entropy,
\begin{eqnarray}\label{Entropy}
	S (\rho)= \tr(s (X))
\end{eqnarray}
with  $s:[-1,1]\rightarrow [0,1]$ being
 \begin{equation}
 	 s(x)= - \frac{1+x}{2} \log_2 \left(\frac{1+x}{2}\right) - \frac{1-x}{2} \log_2 \left(\frac{1-x}{2}\right).
 \end{equation}
  These expressions will be central for our analysis.  \je{With respect to an operational interpretation, the von-Neumann 
  entropic version of the mutual information is by far the most natural quantity in this context, but we keep the generality at this point, partially also because the results found also hold in this 
  setting. With similar techniques, other Renyi divergences can presumably also be treated.}

\section{Statement of the problem}\label{Statement}

\subsection  {Computing mutual informations}
We will now turn to the main object of our study, the {\it mutual information}; a measure of the correlations between two non-overlapping subsystems. 
We will consider a one-dimensional lattice with $N$ lattice sites and let  $A=\{0, \dots ,L-1\}$ 
constitute one part and $B=\{L, \ldots , N-1\}$ its complement. The
quantum mutual information between $A$ and $B$ is defined as
\begin{equation}
	I(A:B)= S(\rho_A)+ S(\rho_B) - S(\rho_{AB}) ,
\end{equation}
where again, $\rho$ is a state on the entire system, $\rho_A$ ($\rho_B$) is obtained by tracing out subsystem $A$ ($B$). The mutual information naturally captures all correlations,
quantum as well as classical, between $A$
and $B$. It is a meaningful measure of correlation for mixed states, including finite temperature thermal states as a special case. For zero temperature, it reduces
to (twice the) {\it entanglement entropy}. Similarly, expressions of the kind
\begin{equation}
	I_\alpha(A:B)= S_\alpha(\rho_A)+ S_\alpha(\rho_B) - S_\alpha(\rho)
\end{equation}
for an $\alpha>0$ different from $1$ are well defined mutual Renyi entropies. 
We will identify novel formulae for the asymptotic behavior
of the mutual information for Gibbs states of isotopic translationally invariant free-fermionic models, and present bounds that allow to study the limit of large $\beta$ analytically.
The formulae given will be exact in the asymptotic limit of large $N$.
 
\subsection  {Structure of the argument for one-dimensional systems}

%We now turn to approximation statements that will become important later on. 
A number of steps will be necessary in order to arrive at an asymptotically exact
expression of the mutual information. To start with, we need 
to have a handle on how to make the intuition rigorous that we can compute $S_\alpha(\rho_A)$ (and $S_\alpha(\rho_B)$) as if it was the reduced
state of an infinite system. The result has to fit then with the 
expression for $S_\alpha(\rho)$  for the entire system, in a way that only the boundary terms remain. 
\je{This turns out to be a delicate affair, and requires very different tools than the pure state analysis, where the asymptotically exact entanglement entropy is obtained using the Fisher-Hartwig formalism 
\cite{Jin}}. In the case of thermal states, the so-called ``double scaling limit'' makes sense, where $A$ is taken as a constant fraction of the total system size, and the total system is taken to infinity. 
\je{Here, the reduced states of any part necessarily maintain a system size dependence, and one cannot simple compute spectra of reduced states of infinite systems. This is why the
technical tools developed in Lemma \ref{AE} and Lemma \ref{theo:analyticerror} will be necessary. In fact, we show in these lemmas more than what is needed for the main result, in that bounds exponentially
tight in the system size are being provided. Given the prominent status the computation of entanglement Renyi entropy has in the literature, and since it is important to have
rigorous bounds also available for finite system sizes, we expect 
these bounds to be very valuable even outside the precise context of the present work.}

%Turning again to the asymptotically exact expression for the mutual information, one can surely also obtain an expression for $S_\alpha(\rho_A)$ in this regime. But then, needless to say, the entries of $\rho_B$ would still depend on the system size $N$. 

We aim at computing an asymptotically exact approximation of the entropy of a subsystem
\begin{equation}
     S_\alpha(\rho_A) = \tr \left(s_\alpha\left(X|_A\right) \right),
\end{equation}
where $X\in\rr^{N\times N}$ is given by
\begin{equation}
\label{eqn:Xendlich}
    X_{k,l} := x_{k-l}:=\frac{1}{N} \sum^{N-1}_{j=0} 
    \lambda (\me^{ 2\pi \mi j/{N}})
    \me^{- 2\pi \mi j (k-l)/{N}}.
\end{equation}
The proof strategy is as follows:
\begin{itemize}
\item We perform a continuum limit on the full system
\begin{equation}
\label{eqn:Xunendlich}
   X^{(N)}_{k,l}\rightarrow X^{(\infty)}_{k,l} :=  x^{(\infty)}_{k-l}:=\frac{1}{2\pi} \int^{2 \pi}_{0}  \lambda (\me^{\mi \phi}) \me^{-\mi (k-l)\phi} d\phi,
\end{equation}
and show that the entropy of a subsystem can be computed with an exponentially small error in $N$.

\item When computing entropies of sub-matrices $S_\alpha(\rho_A)$, we can in this continuum limit invoke the
theory of T\"oplitz matrices, even though the entries of the T\"oplitz matrices actually depend (very slightly) on the system size. We can therefore
consider families of T\"oplitz matrices with symbol
\begin{equation}
	\lambda = f \circ \varepsilon.
\end{equation}
The same approach is feasible for $S_\alpha(\rho_B)$.

\item We apply trace formulae of T\"oplitz matrices with smooth symbols. This will allow us
to compute an asymptotically exact expression of the entropy of both subsystems.

\item We find an expression for the entropy of the total system $S_\alpha(\rho)$, asymptotically exact in the limit of large $N$.

\item The expression for the mutual information found in this way will contain an infinite sum, reflecting the infinite number of modes in momentum space in the asymptotic limit. As such, this formula can not be evaluated. 
This obstacle we will overcome by a new technique that we introduce, 
making use of the theory of highly oscillatory integrands. This technique is expected to be of significant
interest also outside this context,
when analysing properties of families of T\"oplitz matrices with smooth symbols.
\end{itemize}
The combination of these steps will allow us to provide an asymptotically exact rigorous and computable expression for the mutual information.

\section{An exact asymptotic expression for the mutual information}
In this section, we will state the main results for one-dimensional systems. We will first present the main theorem 
and then continue 
with the proof of the theorem. This will require a number of techniques that we will lay out in later subsection s. The specifically
important case of the XX model, the temperature dependence as well as the situation of free fermions on the torus will be discussed in
later sections. 

In accordance with the analysis of the previous section, we will evaluate the mutual information by analyzing the determinants of the covariance matrices, as the system is taken to the thermodynamics limit. Covariance matrices of free fermionic systems in one dimension are T\"oplitz matrices, so we are free to use the tools available from the theory of T\"oplitz determinants. In particular, it is well known that T\"oplitz determinants behave in a very regular manner as the dimension of the matrices are taken to infinity. In the case when the symbol is continuous, Szeg\"o's strong limit theorem \cite{Szego} gives the precise scaling of the determinant with the dimension of the T\"oplitz matrix. When the symbol has discontinuities, then it is necessary to use the Fisher Hartwig formula in order to obtain accurate asymptotics. As we will be dealing with thermal states, the symbols will always be continuous, even though a discontinuous symbol reflecting the Fermi surface
can be arbitrarily well
approximated for low temperatures. This creates a quite intriguing situation: We can ``approximate'' the situation covered by 
the Fisher-Hartwig theorem arbitrarily well with smooth symbols, and can hence ``interpolate'' between these situations.
As mentioned before, the methods developed here are expected to be useful also outside the context of quantum
many-body systems.

\subsection{Explicit result for one-dimensional systems}\label{mainresult}
We consider 
the mutual information in the large $N$ limit, where
$|A|= L= \theta(N)$ and  $|B|=\theta(N)$ in asymptotic Landau notation,
meaning that both subsystems grow essentially linearly in $N$. Specifically we assume
\begin{equation}
\label{eqn:whatisA}
A= \left\{0,1, \ldots, \lceil qN \rceil-1\right\}
\end{equation}
and
\begin{equation}
\label{eqn:whatisB}
B= \left\{ \lceil qN \rceil,\lceil qN \rceil+1, \ldots, N-1 \right\}.
\end{equation}
% subsystem $A$ is of size $|A|= \lceil qN \rceil$ 
%and subsystem $B$ is of size $|B|=N-|A|$, where $q\in(0,1) $.  
The main result for one-dimensional bi-sected systems can be stated as follows. 

\begin{theorem}[Asymptotically exact expression for the mutual information]\label{MainTheorem} 
For any inverse temperature $\beta>0$, and
for any $\alpha\in[0,\infty)$, the mutual information is given by the, in $N$ asymptotically exact, expression
\begin{equation} 
\label{eqn:main}
I_\alpha(A:B)= 
\frac{1}{4 \pi^2}\int^{\pi}_{-\pi} d\phi \int^{\pi}_{-\pi} d\theta 
\frac{s_\alpha(\lambda(\me^{\mi\theta})) - s_\alpha(\lambda(\me^{\mi\phi}))}{\lambda(\me^{\mi\theta}) - \lambda(\me^{\mi\phi})} 
\frac{\lambda'(\phi) - \lambda'( \theta)}{\tan \left(({\phi - \theta})/{2}\right)} +o(1) .
\end{equation}
\end{theorem}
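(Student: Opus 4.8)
The plan is to reduce $I_\alpha(A:B)$ to the second-order (strong Szeg\"o) asymptotics of $\tr s_\alpha$ applied to the relevant T\"oplitz matrices, to observe that the leading volume contributions cancel, and to recast the surviving boundary constant in the stated closed form. Throughout I would use that for a \emph{fixed} $\beta>0$ the symbol $\lambda=f\circ\varepsilon=-\tanh(\beta\varepsilon/2)$ is real-analytic on $\torus$ and maps into a compact subinterval of $(-1,1)$, so that $s_\alpha$ is smooth on a neighbourhood of $\bild(\lambda)$ for every $\alpha\in[0,\infty)$. It is precisely the failure of these two facts as $\beta\to\infty$ -- the symbol approaching a step discontinuity at the Fermi surface -- that would force the Fisher--Hartwig machinery and the zero-temperature logarithms, and this is excluded here.

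First I would pass from the finite data $X^{(N)}_{k,l}=x_{k-l}$, with $x_m$ the $N$-point discrete Fourier transform of $\lambda$ at the roots of unity, to the infinite T\"oplitz coefficients $x^{(\infty)}_m=\widehat{\lambda}_m$. Poisson summation gives $x_m=\sum_{j\in\zz}x^{(\infty)}_{m+jN}$, and analyticity of $\lambda$ forces exponential decay of $\widehat\lambda_m$, so $|x_m-x^{(\infty)}_m|$ is exponentially small in $N$ uniformly for $|m|\le N/2$. I then replace the genuine principal blocks $X|_A$ and $X|_B$ (whose entries depend on $N$) by the honest T\"oplitz matrices $T_{|A|}(\lambda)$ and $T_{|B|}(\lambda)$ built from the fixed $x^{(\infty)}_m$; bounding $|\tr(s_\alpha(M)-s_\alpha(M'))|\le\norm{s_\alpha'}_\infty\,\norm{M-M'}_1$ and estimating the trace norm of the difference by the $\mathcal{O}(N^2)$ entrywise errors, the exponential decay defeats the polynomial count and both subsystem entropies change only by an exponentially small amount. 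This is the step that licenses T\"oplitz techniques for arbitrary linear-size subsystems despite the system-size dependence of the entries. The same Riemann-sum estimate shows that the full (circulant) system obeys $\tr s_\alpha(X)=\tfrac{N}{2\pi}\int_{-\pi}^{\pi}s_\alpha(\lambda(\me^{\mi\theta}))\,d\theta+\mathcal{O}(\me^{-cN})$, with \emph{no} constant term, since a circulant matrix has no boundary.

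For a real-symmetric T\"oplitz matrix with analytic symbol and smooth $s_\alpha$, the strong Szeg\"o--Widom theorem yields
\begin{equation}
\tr s_\alpha(T_n(\lambda))=\frac{n}{2\pi}\int_{-\pi}^{\pi}s_\alpha(\lambda(\me^{\mi\theta}))\,d\theta+E+o(1),
\end{equation}
with a constant $E=E_{s_\alpha}(\lambda)$ depending only on the symbol and on $s_\alpha$, not on $n$. Applying this to $A$ and $B$ (which share the symbol $\lambda$, hence the same $E$) and subtracting the purely linear full-system term, the three volume contributions cancel identically and
\begin{equation}
I_\alpha(A:B)=2E+o(1),
\end{equation}
the factor two accounting for the two interfaces of the arc $A$ on the ring.

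What remains, and is the crux, is to evaluate $E$. In its classical guise the Szeg\"o--Widom constant is an infinite sum over momentum modes of the form $\sum_{k\ge1}k\,(\,\cdot\,)_k(\,\cdot\,)_{-k}$; for a composite symbol such as $\lambda=f\circ\varepsilon$ the summand carries arbitrarily high frequencies and resists direct evaluation. The decisive step is to resum it into a closed double integral -- a ``second-order Szeg\"o theorem'' for smooth symbols. Two structural facts drive the reformulation. First, because $s_\alpha$ is applied to $T_n(\lambda)$ rather than to $T_n(s_\alpha\circ\lambda)$, the non-commutative correction is encoded by the first divided difference $\big(s_\alpha(\lambda(\me^{\mi\theta}))-s_\alpha(\lambda(\me^{\mi\phi}))\big)/\big(\lambda(\me^{\mi\theta})-\lambda(\me^{\mi\phi})\big)$. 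Second, the weight $k$ in the momentum sum is reconstructed geometrically: the conjugation operator has kernel $\cot\!\big(\tfrac{\phi-\theta}{2}\big)=-\mi\sum_{k\ne0}\sgn(k)\,\me^{\mi k(\phi-\theta)}$ -- exactly the Hardy-space obstruction distinguishing a T\"oplitz matrix from a circulant one -- and the symbol-derivative differences $\lambda'(\phi)-\lambda'(\theta)$ supply a factor $\mi k$, so that $\sgn(k)\cdot\mi k$ rebuilds $|k|$ after the mode sum is collapsed against the kernel (the spurious single-variable ``diagonal'' terms drop out by the oddness of $\cot$). Carrying this out gives
\begin{equation}
E=\frac{1}{8\pi^2}\int_{-\pi}^{\pi} d\phi\int_{-\pi}^{\pi} d\theta\,\frac{s_\alpha(\lambda(\me^{\mi\theta}))-s_\alpha(\lambda(\me^{\mi\phi}))}{\lambda(\me^{\mi\theta})-\lambda(\me^{\mi\phi})}\,\frac{\lambda'(\phi)-\lambda'(\theta)}{\tan((\phi-\theta)/2)},
\end{equation}
whence $I_\alpha(A:B)=2E+o(1)$ is exactly (\ref{eqn:main}). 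The normalisation can be checked by specialising $s_\alpha$ to the monomial $x^2$, for which both sides reduce to $-2\sum_{k\ge1}k\,\widehat\lambda_k\widehat\lambda_{-k}$. I expect the main obstacle to be this last step: the uniform control of the resummation for the composite, highly oscillatory symbol and the rigorous interchange of the momentum sum with the $N\to\infty$ and continuum limits -- rather than the routine volume cancellation -- is where the real work lies.
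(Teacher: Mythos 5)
Your proposal is correct and follows essentially the same route as the paper: exponentially accurate replacement of the $N$-dependent blocks by true T\"oplitz matrices, the second-order Szeg\"o expansion for $A$ and $B$ against the purely extensive circulant full-system entropy so that $I_\alpha=2E_{s_\alpha}(\lambda)+o(1)$, and the resummation of Widom's momentum sum against the $\cot((\phi-\theta)/2)$ kernel (which the paper makes rigorous by summing $\sum_{k\le n}\sin(k\phi)$ into a closed-form kernel $K_n$ and invoking Riemann--Lebesgue on the Lipschitz divided-difference integrand). The only cosmetic difference is the perturbation bound in the approximation step, where you use a Lipschitz/trace-norm estimate on the compact range of $\lambda$ while the paper uses Hoffman--Wielandt together with H\"older continuity of $s_\alpha$ on all of $[-1,1]$.
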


\begin{proof}
The proof of Theorem \ref{MainTheorem} follows from Proposition \ref{Momentum}, which still contains an infinite sum in $k$, and Theorem \ref{Dist}, which take care of the 
infinite sum.
\end{proof}

Here, the temperature dependence is only implicit in $\lambda= f\circ \varepsilon$ (since $f$ depends on $\beta$), while $\lambda'$ denotes the derivative of $\lambda$,
which has to be understood in the following way,
\begin{equation}\label{deriva}
	\lambda'(\theta) := \frac{d}{d\theta} \lambda(\me^{\mi \theta}).
\end{equation}
This is an asymptotically exact and simply evaluated expression. 
The most important instance is the one for the von-Neumann mutual information. 

%Before we prove the above theorem, we will prove a related statement, different in a sum over $k$. 
%The above theorem, however, no longer contains this sum, constituting a technical advance in the study of 
%T\"oplitz operators. \je{I do not understand Holger's remark here in the comments line.} \hb{O.K. - when I wrote I wanted to indicate where the $k-$ sum comes from. Our simplification of the Widom series is very general. This is a new Mathematical result (unfortunately it is similar but not equivalent to the one derived in this Master thesis that we (and apparently B\"otcher) did not know before. On the other hand our proof is very different, the resulting expression is a bit different and our proof is valid for all absolutely continuous symbols. Hence our result is a little bit stronger I think). I admit that my comment is rather confusing at this stage. Let us better remove it and do some meta-discussion of the general validity of our subresult in the corresponding section or in the introduction but not here.}

\begin{proposition}[Mutual information as an infinite sum]\label{Momentum} 
For any inverse temperature $\beta>0$, and
for any $\alpha\in[0,\infty)$, the mutual information is given by the, in $N$ asymptotically exact, expression
\begin{eqnarray}
 	I_\alpha(A:B)&=&\frac{1}{2 \pi^2} \sum^{\infty}_{k=1} \int^{\pi}_{-\pi} d\phi  \int^{\pi}_{-\pi} d\theta
	 \frac{s_\alpha \left(\lambda(\me^{\mi \theta})\right)-s_\alpha \left(\lambda(\me^{\mi 	\phi})\right)}
	 {\lambda(\me^{\mi \theta})-\lambda(\me^{\mi \phi})}  
	 \nonumber\\ 
	 &\times& \sin\left(k(\theta - \phi)\right)\left(\lambda'(\phi)-\lambda'(\theta)\right) +o(1)
	   \label{eqn:lemma3}.
\end{eqnarray}
\end{proposition}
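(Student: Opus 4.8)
The plan is to deduce Proposition~\ref{Momentum} from Theorem~\ref{MainTheorem} by replacing the cotangent kernel in Eq.~(\ref{eqn:main}) with its Fourier sine series. Writing $1/\tan(x/2)=\cot(x/2)$, the identity driving the whole computation is
\begin{equation}
\cot(x/2)=2\sum_{k=1}^{\infty}\sin(kx),
\end{equation}
valid in the Abel sense, the series not converging pointwise. Substituting $x=\phi-\theta$ into Eq.~(\ref{eqn:main}) and interchanging the summation with the double integral formally reproduces the summand of Eq.~(\ref{eqn:lemma3}): the factor $2$ turns the prefactor $1/(4\pi^2)$ into $1/(2\pi^2)$, while the reflection $\sin(k(\phi-\theta))=-\sin(k(\theta-\phi))$ accounts for the ordering of the arguments printed in Eq.~(\ref{eqn:lemma3}). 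The entire content of the proposition is thus to justify this interchange rigorously.

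Rather than the pointwise (non-convergent) series, I would use its exact finite-$n$ form
\begin{equation}
\cot(x/2)=2\sum_{k=1}^{n}\sin(kx)+\frac{\cos\!\big((n+\tfrac12)x\big)}{\sin(x/2)} .
\end{equation}
Setting $x=\phi-\theta$, multiplying by the integrand of Eq.~(\ref{eqn:main}), abbreviated as
\begin{equation}
g(\theta,\phi):=\frac{s_\alpha(\lambda(\me^{\mi\theta}))-s_\alpha(\lambda(\me^{\mi\phi}))}{\lambda(\me^{\mi\theta})-\lambda(\me^{\mi\phi})}\,\big(\lambda'(\phi)-\lambda'(\theta)\big),
\end{equation}
and integrating over $[-\pi,\pi]^2$ splits $I_\alpha(A:B)$ into exactly the finite truncation of the sum in Eq.~(\ref{eqn:lemma3}) plus a remainder
\begin{equation}
R_n:=\frac{1}{4\pi^2}\int_{-\pi}^{\pi}\!\!\int_{-\pi}^{\pi} g(\theta,\phi)\,\frac{\cos\!\big((n+\tfrac12)(\phi-\theta)\big)}{\sin((\phi-\theta)/2)}\,d\theta\,d\phi .
\end{equation}
It then suffices to show $R_n\to0$ as $n\to\infty$, which simultaneously establishes convergence of the truncated sums to $I_\alpha(A:B)$.

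The step I expect to be the \emph{main obstacle} is showing that the kernel $g(\theta,\phi)/\sin((\phi-\theta)/2)$ is smooth, hence bounded, across the diagonal $\theta=\phi$, for then the Riemann--Lebesgue lemma applied to the oscillatory factor $\cos((n+\tfrac12)(\phi-\theta))$ gives $R_n\to0$. Here the structure of $g$ is essential. The divided difference of $s_\alpha\circ\lambda$ equals $\Phi(\lambda(\me^{\mi\theta}),\lambda(\me^{\mi\phi}))$, where $\Phi$ is the smooth divided-difference function of $s_\alpha$, smooth on the interior of $[-1,1]$ because $|\lambda|=|\tanh(\beta\varepsilon/2)|<1$ for every finite $\beta$; composed with the smooth symbol $\lambda=f\circ\varepsilon$ this is a smooth function of $(\theta,\phi)$. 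The antisymmetric factor $\lambda'(\phi)-\lambda'(\theta)$ forces $g$ to vanish on the diagonal, so by Hadamard's lemma $g(\theta,\phi)=(\phi-\theta)\,h(\theta,\phi)$ with $h$ smooth, whence $g(\theta,\phi)/\sin((\phi-\theta)/2)=h(\theta,\phi)\,(\phi-\theta)/\sin((\phi-\theta)/2)$ is smooth. This cancellation of the kernel singularity against the zero forced by the isotropy of the model is exactly the mechanism making the oscillatory remainder negligible.

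Finally, I would record that the sum in Eq.~(\ref{eqn:lemma3}) is absolutely convergent, not merely Abel-summable: its $k$-th term is a Fourier coefficient of the smooth function $g$ on the torus and hence decays faster than any power of $k$. Consequently the rearrangement leaves the $o(1)$ error inherited from Theorem~\ref{MainTheorem} untouched, and Eq.~(\ref{eqn:lemma3}) follows.
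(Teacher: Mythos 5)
Your argument runs in the wrong direction and is therefore circular within the paper's logic. Here, Proposition~\ref{Momentum} is the \emph{source} of Theorem~\ref{MainTheorem}, not a consequence of it: the closed-form kernel $1/\tan((\phi-\theta)/2)$ in Eq.~(\ref{eqn:main}) is obtained precisely by resumming the sine series of Eq.~(\ref{eqn:lemma3}) (via the kernels $K_n$, the Riemann--Lebesgue lemma and the distributional convergence theorems of the subsequent subsection). Since you offer no independent proof of Theorem~\ref{MainTheorem}, deducing the proposition from it establishes nothing. What the proposition actually requires --- and what your proposal omits entirely --- is the identification of the mutual information with twice the second-order Szeg\"o term, $I_\alpha(A:B)=2E_{s_\alpha}(\lambda)+o(1)$. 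This needs: (i) Lemma~\ref{AE}, which replaces the principal submatrices of the finite circulant covariance matrix by genuine truncated T\"oplitz matrices with exponentially small error; (ii) the second-order Szeg\"o theorem (Theorem~\ref{th:Szego}) applied separately to $A$ and to $B$, each contributing one copy of $E_{s_\alpha}(\lambda)$ on top of the extensive term; (iii) a \emph{separate} computation of the full-system entropy $S_\alpha(\rho)$ from the exact circulant spectrum (one cannot apply the T\"oplitz trace formula to $A\cup B$, since the periodic boundary conditions are not those of a truncated T\"oplitz operator), showing it equals $N G_{s_\alpha}(\lambda)+o(1)$ with no second-order contribution; and (iv) Widom's representation of $E_{s_\alpha}$ as the $k$-sum (Theorem~\ref{th:Widomformula}). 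None of this appears in your write-up.

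That said, the resummation mechanics you describe --- the exact truncation identity for $\cot(x/2)$, the vanishing of the integrand on the diagonal forced by the factor $\lambda'(\phi)-\lambda'(\theta)$, and the Riemann--Lebesgue argument killing the oscillatory remainder --- is essentially the paper's own passage from Eq.~(\ref{eqn:lemma3}) to Eq.~(\ref{eqn:main}), run backwards. Had the task been to prove Theorem~\ref{MainTheorem} \emph{given} the proposition, this would be the right idea (modulo the corner singularity at $\phi-\theta=\pm 2\pi$, which is handled by periodicity, and a careful sign bookkeeping of $\sin(k(\theta-\phi))=-\sin(k(\phi-\theta))$ against the antisymmetry of the integrand). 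As a proof of Proposition~\ref{Momentum} itself, however, it misses the entire T\"oplitz-theoretic content.
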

The proof of this statement will require some preparation. For clarity of the main argument, we present some of the technical steps in the appendix.

\subsection  {Approximation statements}\label{AP} 
In this subsection, we collect approximation statements that are being used in the proof
of Theorem \ref{MainTheorem}. We have a more detailed look at the expression
\begin{equation}
    X^{(N)}_{k,l} = x^{(N)}_{k-l}=\frac{1}{N} \sum^{N-1}_{j=0} 
    \lambda (\me^{ 2\pi \mi j/{N}})
    \me^{- 2\pi \mi j (k-l)/{N}}
\end{equation}
%\footnote{\je{This might look funny, but is ok: The sites are done modulo $N$. But the indices of the function $x_k$ can become
%negative, this is no problem. It simplifies the notation.}}
as well as at
\begin{equation}
   X_{k,l} =  x_{k-l}=\frac{1}{2\pi} \int^{2 \pi}_{0}  \lambda (\me^{\mi \phi}) \me^{-\mi (k-l)\phi} d\phi,
\end{equation}
where the entries of the finite T\"oplitz matrices are always assumed mod $N$. We know that 
\begin{equation}
	S_\alpha(\rho_A)= \tr(s_\alpha(X^{(N)}|_A)),\,\,\,
	S_\alpha(\rho_B)= \tr(s_\alpha(X^{(N)}|_B)).
\end{equation}
Expression Eq.\ (\ref{eqn:Xunendlich}) 
is the continuum limit of Eq.\ (\ref{eqn:Xendlich}).
Therefore, from a physical perspective, it seems justified to replace $x^{(N)}_{k-l}$ by its infinite-system counterpart $x^{(\infty)}_{k-l}$ in the calculation of the entropy if the system is large enough. In the next lemma we will show that for any fixed temperature, the error made by this replacement
decays exponentially  in $N$.
% \footnote{\hb{This is significantly different from algebraically decaying interactions. However If the decay is sufficiently fast, the error decays as well.}}.

%We begin by stating an important approximation lemma, showing that instead of referring to $X|_A$ and
%$X|_B$, one can equally well resort to 
%$X^{(\infty)}|_A$ and $X^{(\infty)}|_B$, respectively. \je{Is this a good way of putting it? We have to state early
%what the entire point of this approximation is, otherwise no way anybody understands a single word.}

\begin{lemma}[Approximation of entropies of subsystems]\label{Approx}
\label{AE} For $A,B$ given by Eq.\ (\ref{eqn:whatisA}) and Eq.\ (\ref{eqn:whatisB}) the errors
\begin{equation}
	e_{A,\beta}(N) := |\tr(s_\alpha(X^{(N)}|_A)) - \tr(s_\alpha(X|_A))|
\end{equation}
and
\begin{equation}
	e_{B,\beta}(N):=|\tr(s_\alpha(X^{(N)}|_B)) - \tr(s_\alpha(X|_B))| 
\end{equation}
are exponentially small in $N$ for fixed $\beta>0$.
More precisely, there exist some $N$-inde\-pen\-dent constants $\alpha_{A,\beta}, \alpha_{B,\beta} > 0$, (depending on $\beta$, $q$ and the explicit form of the symbol $\lambda$), such that
\begin{equation}
 \lim_{N \rightarrow \infty} \exp(N \alpha_{X,\beta}) e_{k,\beta}(N) = 0 \textnormal{ for } X \in \left\{A,B\right\}.
\end{equation}
\end{lemma}
This Lemma, along with other auxiliary Lemmas, will be proven in Appendix \ref{ApproxLemma}.

\subsection  {Relating the mutual information to trace functions of T\"oplitz matrices}

Given that we are interested not in the entropy of a subsystem, but in the mutual information, we will need to consider a second order  asymptotic formula 
for the entropy functionals. This is due to the fact that the bulk contribution of the entropy is cancelled out in the expression for the mutual information, and we are left only with the contribution originating
from the boundary. 
We will therefore require a Szeg\"o's strong limit theorem which contains explicit expressions for the second order contributions. In order to formulate the theorem, we will have to introduce some definitions and notation. 
It will be necessary to consider sequences of truncations of T\"oplitz matrices -- in our case principal sub-matrices of 
covariance matrices. 
We will define the following classes of symbols. The {\it Wiener algebra} 
\begin{equation}
   W:= \left\{ \mu \in L_{\infty}\left(\torus \right) : \sum_{n \in \zz} \left| \frac{1}{2 \pi} \int^{2 \pi}_{0}\mu(\me^{\mi \theta}) \me^{-\mi n \theta} d\theta \right| < \infty  \right\},
\end{equation}
the {\it Besov space}
\begin{equation}
    B^{\frac{1}{2}}_2:=\left\{\mu \in L_2\left(\mathbb{T}\right): \sum_{n \in \zz}(\left|n\right|+1)\left| \frac{1}{2 \pi} \int^{2 \pi}_{0} \mu(\me^{\mi \theta}) \me^{-\mi n \theta} d\theta \right|^2<\infty  \right\},
\end{equation}
the {\it Krein algebra}
\begin{equation}
  	W \cap B^{\frac{1}{2}}_2 ,
\end{equation}
and the space of piecewise continuous symbols, denoted by
$PC(\mathbb{T})$ play an important role; see Ref.\ \cite{Toeplitzbuch} for more details. 
Consider a continuous symbol $\lambda: \torus \rightarrow \cc$ and a point $x \notin \bild(\lambda)$. There exists a (unique up to some constant offset of the form $2\pi k$ with $k \in \zz$) continuous argument function 
\begin{equation}
	\arg(\lambda-a): {[}-\pi,\pi{]} \rightarrow \rr; \phi \mapsto \arg\left[\lambda (\me^{\mi \phi} ) - a\right] .
\end{equation}
Independent of the choice of offset, the winding number
\begin{equation}
	\wind(\lambda,a):=\frac{1}{2 \pi}\arg(\lambda-a)(\pi) - \arg(\lambda-a)(-\pi)
\end{equation}
is a well-defined integer.
Before formulating Szeg\"o's limit theorem, we state an important lemma about the spectrum of T\"oplitz operators
(explaining where the spectrum of the truncated matrices $T_n$ eventually concentrates).

\begin{lemma}[Spectrum of T{\"o}plitz operators and truncated T{\"o}plitz matrices \cite{Toeplitzbuch}]
\label{theo:asymptoticspectrum}
Suppose $\lambda \in PC(\mathbb{T})$  and let $T(\lambda)$ be the infinite T\"oplitz matrix associated to the symbol $\lambda$. Assume $U \subseteq \cc$ is an open set and assume that the spectrum $\sigma(T(\lambda))$ is a subset of $U$.
Then there exists an index $n_0 \in \nn$ such that $n \geq n_0$ implies $\sigma(T_n(\lambda)) \subseteq U$. 
If $\lambda$ is continuous the spectrum of $T(\lambda)$ follows entirely from geometric properties of the symbol,
\begin{equation}
\sigma(T(\lambda))= \bild(\lambda) \cup \left\{x \in \cc \left| \wind(\lambda,x) \neq 0\right.\right\} .
\end{equation}
\end{lemma}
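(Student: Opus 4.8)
The plan is to split the statement into two largely independent parts: the inclusion of $\sigma(T_n(\lambda))$ in a neighbourhood of $\sigma(T(\lambda))$ for large $n$, and the explicit geometric description of $\sigma(T(\lambda))$ for continuous symbols. I would treat the second, purely spectral-theoretic identity first, since it is the more structural statement and does not involve the truncation $T_n$ at all.

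For the spectral identity, the key input is the classical theory of Toeplitz operators with continuous symbol on $l^2(\nn)$. The plan is to recall that for $\lambda \in C(\torus)$ the Toeplitz operator $T(\lambda)$ is Fredholm if and only if $a \notin \bild(\lambda)$, and that in this Fredholm case the index is given by the negative of the winding number, $\Ind T(\lambda - a) = -\wind(\lambda, a)$. I would then argue as follows: a point $a$ lies in the resolvent set $\cc \setminus \sigma(T(\lambda))$ precisely when $T(\lambda - a)$ is invertible. Invertibility forces $T(\lambda - a)$ to be Fredholm (hence $a \notin \bild(\lambda)$) with index zero (hence $\wind(\lambda, a) = 0$); conversely, a standard result for continuous symbols states that a Fredholm Toeplitz operator of index zero is actually invertible. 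Combining these two directions gives
\begin{equation}
\cc \setminus \sigma(T(\lambda)) = \left\{ a \in \cc : a \notin \bild(\lambda) \textnormal{ and } \wind(\lambda, a) = 0 \right\},
\end{equation}
and taking complements yields the claimed formula $\sigma(T(\lambda)) = \bild(\lambda) \cup \{ x : \wind(\lambda, x) \neq 0\}$.

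For the truncation statement I would use the notion of stability of the sequence $(T_n(\lambda))$. The plan is to fix an open $U \supseteq \sigma(T(\lambda))$ and to show that for every $a \notin U$ the shifted sequence $(T_n(\lambda) - a)$ is stable, i.e. the $T_n(\lambda) - a$ are invertible for $n$ large with uniformly bounded inverses; this immediately gives $a \notin \sigma(T_n(\lambda))$ for large $n$. The standard route here, valid for $\lambda \in PC(\torus)$, invokes the finite-section method together with the associated $C^*$-algebra of approximation sequences: stability of $(T_n(\lambda) - a)$ is equivalent to the invertibility of $T(\lambda) - a$ together with the invertibility of the limit operator $T(\tilde\lambda) - a$ obtained from the reflected symbol at the opposite corner. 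Since both limit operators have the same essential spectral data for these symbol classes, invertibility of $T(\lambda) - a$ (guaranteed by $a \notin \sigma(T(\lambda))$) suffices. The remaining point is a compactness argument: because $\sigma(T(\lambda))$ is compact and $U$ is open, one can choose a single $n_0$ uniformly over the relevant boundary of $U$, which upgrades the pointwise ``for $n$ large'' into the uniform index $n_0$ in the statement.

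The main obstacle I anticipate is the truncation part rather than the spectral identity: one must control the invertibility of $T_n(\lambda) - a$ \emph{uniformly} in $n$, and the naive estimate $\|T_n(\lambda) - T(\lambda)\| \to 0$ is false, since the corner entries do not vanish. This is exactly the subtlety the paper flags about Toeplitz matrices differing from circulant ones at the corners. The correct handling requires the $C^*$-algebraic stability criterion (or an explicit second limit operator at the reflected corner) rather than a norm-resolvent convergence argument, and verifying that the piecewise-continuous symbol class is covered is the delicate step. Because this is a cited result from Ref.\ \cite{Toeplitzbuch}, I would in practice quote the finite-section stability theorem for $PC(\torus)$ symbols and reduce the lemma to it, keeping the self-contained work confined to the spectral identity above.
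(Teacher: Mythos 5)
The paper does not prove this lemma at all: it is imported verbatim from Ref.\ \cite{Toeplitzbuch}, so there is no in-text argument to compare against. Your sketch is the standard textbook proof and is essentially correct. For the spectral identity you correctly combine the Gohberg/Hartman--Wintner criterion (for $\lambda\in C(\torus)$, $T(\lambda-a)$ is Fredholm iff $a\notin\bild(\lambda)$, with $\Ind T(\lambda-a)=-\wind(\lambda,a)$) with Coburn's lemma (a Fredholm Toeplitz operator of index zero is invertible); one should only add the convention that $\wind(\lambda,x)$ is defined for $x\notin\bild(\lambda)$, so the set $\{x:\wind(\lambda,x)\neq 0\}$ is read over the complement of the range. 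For the truncation part, the finite-section stability criterion for $PC(\torus)$ is the right tool, and your observation that the second limit operator $T(\tilde\lambda)-a$ causes no extra obstruction is justified most quickly by noting that $T(\tilde\lambda)$ is the transpose of $T(\lambda)$, hence has the same spectrum.

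Two small points deserve care. First, your uniformity argument is phrased over ``the relevant boundary of $U$''; invertibility of $T_n(\lambda)-a$ for $a\in\partial U$ alone does not exclude eigenvalues in components of $\cc\setminus\overline{U}$. The correct compact set is $K:=(\cc\setminus U)\cap\{a:|a|\leq\|\lambda\|_\infty+1\}$: stability gives locally uniform bounds on $\|(T_n(\lambda)-a)^{-1}\|$ there, a Neumann-series/covering argument yields a single $n_0$ on $K$, and $\|T_n(\lambda)\|\leq\|\lambda\|_\infty$ disposes of $|a|>\|\lambda\|_\infty+1$. Second, you are right that norm convergence $\|T_n(\lambda)-T(\lambda)\|\to 0$ fails and that the $C^*$-algebraic (or two-limit-operator) stability machinery is the unavoidable ingredient; since the lemma is cited rather than proved in the paper, reducing to the stability theorem of Ref.\ \cite{Toeplitzbuch} as you propose is entirely appropriate.
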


Szeg\"o's Theorem holds for symbols in the Krein algebra \cite{Toeplitzbuch}:

\begin{theorem}[A second order Szeg\"o theorem \cite{Widom,Toeplitzbuch}]
\label{th:Szego}
%\label{thm:secondorder}
Let $\mu \in W \cap B^{\frac{1}{2}}_2$ be a symbol and let $T_n\in\cc^{n\times n}$ 
be the family of 
associated T\"oplitz matrices. 
Let $\Omega \subseteq \cc$ be an open subset that contains the spectrum of $T(\mu)$.
For an analytic function $g:\Omega\rightarrow \cc$ 
the following trace formula holds, 
\begin{equation}
 \tr \left(g(T_n)\right) = n G_g(\mu) + E_g(\mu) + o(1)
\end{equation}
 with 
\begin{eqnarray}
 G_{g}(\mu) &=&  \frac{1}{2 \pi} \int^{2 \pi}_{0}g(\mu(\me^{\mi \theta})) d\theta ,\\
  E_{g}(\mu)&=& \frac{1}{2 \pi \mi} \int_{\partial \Omega} g(\lambda) \frac{d}{d\lambda} \log E(\mu-\lambda) d \lambda  ,
  \label{ToeplityResidue}\\
  E(\mu) &=& \exp \sum^{\infty}_{k=1}k  \left(\log \mu\right)_k \left(\log \mu\right)_{-k}, 
  \end{eqnarray} where 
  \begin{equation}
  	\left(\log\mu\right)_k :=  \frac{1}{2 \pi} \int^{2 \pi}_{0}\me^{-\mi k \theta} \log\mu(\me^{\mi \theta}) d\theta 
\end{equation}
is the Fourier transform of $\log(\mu)$.
\end{theorem}

\begin{theorem}[Alternative expression for the second order term \cite{Widom2}]
\label{th:Widomformula}
Let $\mu \in W \cap B^{\frac{1}{2}}_2$ be absolutely continuous and let $T_n\in\cc^{n\times n}$ 
be the family of  associated T\"oplitz matrices. Let $\Omega \subseteq \cc$ be an open subset that contains the spectrum of $T(\mu)$ and let $g: \Omega \rightarrow \cc$ be analytic on $\Omega$. Then the second order term $E_g(\mu)$ from Theorem~\ref{th:Szego} can 
be written as
\begin{eqnarray}
 	E_g(\mu)&=&\frac{1}{4 \pi^2} \sum^{\infty}_{k=1} \int^{\pi}_{-\pi} d\phi  \int^{\pi}_{-\pi} d\theta
	 \frac{g \left(\mu(\me^{\mi \theta})\right)-g \left(\mu(\me^{\mi 	\phi})\right)}
	 {\mu(\me^{\mi \theta})-\mu(\me^{\mi \phi})}  
	 \nonumber\\ 
	 &\times& \sin\left(k(\theta - \phi)\right)\left(\mu'(\phi)-\mu'( \theta)\right) 
	   \nonumber.
\end{eqnarray}
\end{theorem}
Again, the derivative is to be read as in Eq.\ (\ref{deriva}).

{\it Proof of Proposition \ref{Momentum}}. 
We are now in the position to prove Proposition \ref{Momentum}.
We can now collect the results from the previous sections. 
By the results from Subsection \ref{AP}, specifically Lemma
\ref{AE},
we know that for any subset of sites whose cardinality 
is much larger than $\beta$, we can work with infinite 
truncated T\"oplitz matrices and their symbols. 
The asymptotic expression for the entropy of the full system  can be obtained directly from the continuum approximation.
Remember that $|A|= \lceil qN \rceil$ 
and subsystem $B$ is of size $|B|=N-|A|$, where $q\in(0,1) $.
Theorem~\ref{th:Szego} states that the block entropies are asymptotically equal to 
\begin{equation}
   S_\alpha(\rho_A) = \tr \left(s_\alpha(X|_A)\right)= q N G_{s_\alpha}(\lambda) + E_{s_\alpha}(\lambda) +o(1)
\end{equation} 
and 
\begin{equation}
   S_\alpha(\rho_B) = \tr \left(s_\alpha(X|_B)\right)= (1-q) N G_{s_\alpha}(\lambda) + E_{s_\alpha}(\lambda)+o(1).
\end{equation} 

We now turn to the computation of the entropy $S_\alpha(\rho)$ of the entire system. This is subtle, and one cannot
employ the same formula for the larger system of $A\cup B$: Lemma \ref{AE} would no longer be valid, and the 
boundary conditions would not be respected.
But we can still find an asymptotically exact expression.
The spectrum of $X$ is given by 
\begin{equation}
	\left\{
		(f\circ \varepsilon)(\me^{2\pi \mi k/N}): k=1,\dots, N
	\right\}.
\end{equation}
Hence,
\begin{equation}
   S_{\hb{\alpha}}(\rho)= \sum_{k \in \zz_N}(s_\alpha\circ f\circ \varepsilon)(\me^{2\pi \mi k/N}).
\end{equation}
Using Lemma~\ref{theo:analyticerror} again, we find that this expression can be replaced by
\begin{equation}
   S(\rho) = N G_s(\lambda) +o(1).
\end{equation}
Combining the terms, the mutual information is asymptotically equal to
\begin{equation}
   I_\alpha(A:B) = 2 E_{s_\alpha}(\lambda) +o(1).
\end{equation}
Since the symbol $\mu:=f \circ \varepsilon$ is analytic (and therefore absolutely continuous) Theorem~\ref{th:Widomformula} can be applied resulting in the desired expression
\begin{eqnarray}\label{eqn:integralforconst}
 	I_\alpha(A:B)&=&\frac{1}{2 \pi^2} \sum^{\infty}_{k=1} \int^{0}_{2\pi} d\phi  \int^{0}_{2\pi} d\theta
	 \frac{s_\alpha \left(\lambda(\me^{\mi \theta})\right)-s_\alpha \left(\lambda(\me^{\mi 	\phi})\right)}
	 {\lambda(\me^{\mi \theta})-\lambda(\me^{\mi \phi})}  
	 \nonumber\\ 
	 &\times& \sin\left(k(\theta - \phi)\right)\left(\lambda'(\me^{\mi \phi})-\lambda'(\me^{\mi \theta})\right) + o(1)
	   \nonumber.
\end{eqnarray}
which completes the proof.\qed

\subsection  {Simplifying Widom's second order expression}

We have almost proven the main theorem, but we still need to eliminate the infinite sum in $k$ in Eq. (\ref{eqn:lemma3}). 
The sum in $k$ is rather awkward; since the integrand is more and more oscillatory for larger and larger 
$k$, it is a priori far from clear 
where one may 
truncate the sum in order to arrive at a reliable result. In this way, the expression cannot be easily computed, not even numerically,
and practically only for low temperatures. We will hence go a technical step further and will prove the validity of Theorem  
\ref{MainTheorem}, the main result for one-dimensional systems. A first step in this direction is the following observation.

\begin{lemma}[Function kernel] For any $n\in\nn$ and any $\phi,\theta\in [0,2\pi)$,
\begin{equation}
   \sum_{1 \leq k \leq n} \sin(k \phi ) =
    \frac{1}{2} \frac{\cos ({\phi}/{2}) - \cos ((n + {1}/{2})\phi)}{ \sin(\phi/2)}=: K_n(\phi).
\end{equation}
\end{lemma}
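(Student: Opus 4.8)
The plan is to establish this identity by a telescoping argument built on the product-to-sum formula, which converts each summand $\sin(k\phi)$ into a difference of cosines after multiplication by $\sin(\phi/2)$. Equivalently one could write $\sin(k\phi) = \Im(\me^{\mi k \phi})$ and sum the resulting geometric series, but the telescoping route avoids having to simplify a complex quotient at the end, so I would prefer it.

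First I would assume $\phi \in (0,2\pi)$, so that $\sin(\phi/2) \neq 0$ and the right-hand side $K_n(\phi)$ is genuinely well defined. The key step is the elementary product-to-sum identity
\begin{equation}
2 \sin(k\phi)\sin(\phi/2) = \cos\bigl((k - \tfrac{1}{2})\phi\bigr) - \cos\bigl((k + \tfrac{1}{2})\phi\bigr),
\end{equation}
valid for every integer $k$. Summing this over $1 \le k \le n$ makes the right-hand side telescope: the term $-\cos((k+\tfrac{1}{2})\phi)$ at index $k$ cancels the term $+\cos((k+1-\tfrac{1}{2})\phi)$ at index $k+1$, so only the two endpoint contributions survive, giving
\begin{equation}
2\sin(\phi/2)\sum_{1 \le k \le n}\sin(k\phi) = \cos(\phi/2) - \cos\bigl((n + \tfrac{1}{2})\phi\bigr).
\end{equation}
Dividing both sides by $2\sin(\phi/2)$, which is nonzero throughout $(0,2\pi)$, yields exactly $K_n(\phi)$ as claimed.

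Finally I would dispose of the boundary point $\phi = 0$, which is the only place requiring a word of care. There the left-hand side is a sum of zeros and hence vanishes, while the closed form $K_n(\phi)$ takes the indeterminate shape $0/0$; reading it as its removable-singularity value (equivalently, taking $\phi \to 0$ and applying l'Hopital once) again returns $0$, so the identity persists by continuity. I do not expect any genuine obstacle here: the statement is a standard trigonometric identity, and the telescoping cancellation together with the treatment of the degenerate case $\sin(\phi/2) = 0$ constitutes the entire argument.
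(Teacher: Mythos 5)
Your proof is correct, but it takes a different route from the one in the paper. The paper writes $\sin(k\phi)=\frac{1}{2\mi}\left(\me^{\mi k\phi}-\me^{-\mi k\phi}\right)$, sums the two resulting geometric series in closed form, and then reduces the complex quotient to the stated real expression via elementary trigonometric identities. You instead multiply through by $2\sin(\phi/2)$ and invoke the product-to-sum identity
\begin{equation}
2 \sin(k\phi)\sin(\phi/2) = \cos\bigl((k - \tfrac{1}{2})\phi\bigr) - \cos\bigl((k + \tfrac{1}{2})\phi\bigr),
\end{equation}
so that the sum telescopes and only the endpoint terms survive. The two arguments are of essentially equal length; the geometric-series route generalises more readily (e.g.\ to Dirichlet- and Fej\'er-type kernels where one wants the full complex sum anyway), while your telescoping route stays entirely within real trigonometry and makes the cancellation structure transparent. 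You also explicitly dispose of the degenerate point $\phi=0$, where $K_n$ has a removable singularity with value $0$ --- a point the statement's domain $[0,2\pi)$ technically includes and which the paper's proof passes over in silence. No gap in either direction.
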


\begin{proof} 
 We first note that the sum over sine functions encountered 
 in Eq.\ (\ref{eqn:integralforconst}) can be brought into a closed form reminiscent 
 of the Dirichlet kernel. Indeed, if we truncate the sum at $n$, then it can be expressed in closed form as
\begin{eqnarray}
	\sum_{1 \leq k \leq n} \sin(k\phi)&=& \frac{1}{2\mi} \sum_{1 \leq k \leq n}  (\me^{\mi k\phi}-e^{-\mi k\phi})\\
	&=&\frac{1}{2\mi} \sum_{1 \leq k \leq n}  \left((\me^{\mi\phi})^k-(\me^{-\mi\phi})^k\right)\nonumber \\
		&=&\frac{1}{2\mi} 
	\left(
	\frac{1-e^{(n+1)\mi\phi}}{1-e^{\mi\phi}}
	- 
	\frac{1-e^{-(n+1)\mi\phi}}{1-e^{-\mi\phi}}
	\right)\nonumber\\
		&=& \frac{1}{2} \frac{\cos\left({\phi}/{2}\right) - \cos\left((n + {1}/{2})\phi\right)}{ \sin\left(\phi/2\right)},\nonumber
\end{eqnarray}
where the last line follows from elementary trigonometric identities. \qed
\end{proof}

That is to say, the mutual information can be re-expressed in terms of the kernel $K_n$ as
\begin{eqnarray}
\label{eqn:miwithsummedkernel}
 	I_\alpha(A:B)&=&\frac{1}{2 \pi^2} \lim_{n\rightarrow\infty} \int^{\pi}_{-\pi} d\phi  \int^{\pi}_{-\pi} d\theta
	 \frac{s_\alpha \left(\lambda(\me^{\mi \theta})\right)-s_\alpha \left(\lambda(\me^{\mi 	\phi})\right)}
	 {\lambda(\me^{\mi \theta})-\lambda(\me^{\mi \phi})}\\ 
	 &\times& \left(\lambda'(\me^{\mi \phi})-\lambda'(\me^{\mi \theta})\right) K_n\left((\theta - \phi)\right) + o(1)
	   \nonumber.
\end{eqnarray}
We can further simplify the expression by invoking results from the theory of highly oscillatory integrals. Recall the subsequent
fundamental lemma.

\begin{lemma}[Riemann-Lebesgue]
Let $f \in L_1({[} a,b {]},\cc)$, let $g \in L_{\infty}(\rr,\cc)$ be periodic (with period $T$) and assume $\int^{T}_{0}g(x) dx =0$ (the probably most important special instances of such functions $g$ are the trigonometric functions $\sin$ and $\cos$) then 
\begin{equation}
 \lim_{n \rightarrow \infty} \int^{b}_{a}f(t)g(n t)dt = 0 .
\end{equation}
\end{lemma}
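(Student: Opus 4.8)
The plan is to prove this via the standard two-step density argument, exploiting that the maps $f \mapsto \int_a^b f(t) g(nt)\,\md t$ form a uniformly bounded family of linear functionals on $L_1([a,b])$. Writing $L_n(f) := \int_a^b f(t) g(nt)\,\md t$, one has the uniform bound $|L_n(f)| \leq \norm{g}_\infty \norm{f}_1$, since $|g(nt)| \leq \norm{g}_\infty$ almost everywhere. Consequently it suffices to establish $L_n(f) \to 0$ on a dense subset of $L_1([a,b])$ and then transfer the conclusion to all of $L_1$ by approximation. The natural dense subset is the linear span of indicator functions $\chi_{[c,d]}$ of subintervals $[c,d] \subseteq [a,b]$ (equivalently, the step functions), so by linearity of $L_n$ everything reduces to the single case $f = \chi_{[c,d]}$.

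First I would treat this elementary case. Define the antiderivative $G(x) := \int_0^x g(s)\,\md s$. The decisive point is that $G$ is again $T$-periodic: for every $x$ one has $G(x+T) - G(x) = \int_x^{x+T} g(s)\,\md s = \int_0^T g(s)\,\md s = 0$, where the middle equality uses that the integral of a $T$-periodic function over any full period is independent of the base point, and the last equality is precisely the mean-zero hypothesis $\int_0^T g = 0$. Being continuous and periodic, $G$ is bounded, say $|G| \leq M < \infty$. Then the substitution $s = nt$ gives
\begin{equation}
	L_n(\chi_{[c,d]}) = \int_c^d g(nt)\,\md t = \frac{1}{n} \int_{nc}^{nd} g(s)\,\md s = \frac{1}{n}\bigl(G(nd) - G(nc)\bigr),
\end{equation}
whose modulus is at most $2M/n$ and therefore tends to $0$ as $n \to \infty$.

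Finally I would assemble the pieces by an $\varepsilon$-argument. Given $f \in L_1([a,b])$ and $\varepsilon > 0$, choose a step function $\phi$ with $\norm{f - \phi}_1 < \varepsilon$, which is possible by density of step functions in $L_1$. Then
\begin{equation}
	|L_n(f)| \leq |L_n(f - \phi)| + |L_n(\phi)| \leq \norm{g}_\infty \norm{f-\phi}_1 + |L_n(\phi)| \leq \norm{g}_\infty \varepsilon + |L_n(\phi)|,
\end{equation}
and since $L_n(\phi) \to 0$ by the previous step together with linearity, we obtain $\limsup_{n\to\infty} |L_n(f)| \leq \norm{g}_\infty \varepsilon$. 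As $\varepsilon > 0$ was arbitrary, $L_n(f) \to 0$, as claimed. The only genuinely substantive step is the boundedness of the antiderivative $G$; this is exactly where the mean-zero condition $\int_0^T g = 0$ is indispensable, since without it $G$ would grow linearly in $x$, cancelling the crucial factor $1/n$ and leaving $L_n(\chi_{[c,d]})$ bounded away from zero in general.
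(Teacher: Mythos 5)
Your proof is correct. It shares the overall skeleton of the paper's argument -- a uniform bound $|L_n(f)|\leq \|g\|_\infty\|f\|_1$ plus convergence on a dense subclass of $L_1([a,b])$ -- but the dense subclass and the mechanism for the elementary case differ. The paper approximates $f$ by compactly supported smooth functions and integrates by parts, transferring the derivative onto $f$ and leaving the factor $1/n$ together with the bounded antiderivative $\int_a^{nt}g(s)\,\md s$; you instead approximate by step functions and, for a single indicator $\chi_{[c,d]}$, obtain $\frac{1}{n}\bigl(G(nd)-G(nc)\bigr)$ directly by substitution. Both routes hinge on exactly the same key fact, namely that the mean-zero hypothesis together with periodicity makes the antiderivative $G$ periodic and hence bounded, so the $1/n$ survives. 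Your version is marginally more elementary (no smoothness or integration by parts needed, only the fundamental theorem of calculus for an $L_\infty$ integrand), while the paper's version generalizes more readily to situations where one wants to iterate the integration by parts to extract higher-order decay for smoother $f$. Either way the argument is complete and the quantitative bookkeeping in your final $\varepsilon$-step is sound.
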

\begin{proof}
If $f \in C_C^{\infty}({[} a,b {]},\cc)$ 
(hence is a compactly supported smooth function from $[a,b]$ to $\cc$),
then partial integration gives
\begin{equation}
   	\int^{b}_{a} f(t) g(n t)dt = - \frac{1}{n} \int^{b}_{a} f'(t) \int^{nt}_{a}g(s)ds dt .
\end{equation}
Note that $f'(t)$ is bounded by assumption and that $ |\int^{nt}_{0}g(s)ds | \leq T\left\|g\right\|_{\infty} $ for every $t \in \rr$.
Hence,
\begin{equation}
	\lim_{n \rightarrow \infty} \int^{b}_{a} f(t) g(n t)dt = 0. 
\end{equation}
For general $f \in L_1({[} a,b {]},\cc)$, note that the smooth, compactly supported functions are dense in $L_1({[} a,b {]},\cc)$. 
Hence, for an arbitrary $\varepsilon>0$ there exists $f_2 \in {C_{C}}^{\infty}({[} a,b {]},\cc)$ with 
\begin{equation}
	\left\|f_2 - f\right\|_1 \leq \frac{\varepsilon}{\left\|g\right\|_{\infty}(b-a)}. 
\end{equation}
Therefore,
\begin{equation}
 \lim_{n \rightarrow \infty} \left|\int^{b}_{a} f(t) g(n t)dt\right| \leq \lim_{n \rightarrow \infty} \left(\left|\int^{b}_{a} (f(t) - f_2(t)) g(n t)dt\right| + \left|\int^{b}_{a} f_2(t) g(n t)dt\right|\right) \leq \varepsilon 
\end{equation}
by validity of the statement for compactly supported, smooth functions. Since $\varepsilon>0$ was arbitrary, 
the result also holds true for general $f \in L_1({[} a,b {]},\cc)$.\qed
\end{proof}

This well-known lemma simplifies the understanding of the limiting behaviour of these kernels $K_n$ action on sufficiently regular test functions.

\begin{theorem}[Distributional convergence of the kernels $K_n$]\label{Dist}
Let $g \in L_1(\torus,\cc)$ be locally Lipschitz continuous at $\me^{\mi\theta} \in \torus$ then
\begin{equation}
	\label{eqn:firstdistridentity}
	 \lim_{n \rightarrow \infty} \int^{\pi}_{-\pi} g(\me^{\mi \phi}) K_n(\phi-\theta) d\phi = 
	 \frac{1}{2} \int^{\pi}_{-\pi} \frac{g(\me^{\mi \phi}) - g(\me^{\mi \theta})}{\tan \left( (\phi - \theta)/{2} \right)} d\phi.
\end{equation}
\end{theorem}
\begin{proof}
Without any loss of generality let $\theta=0$ (otherwise take $\phi':= \phi - \theta$). 
Assume $g(1)=0$ first. By Lipschitz continuity of $g$, there exists some $\varepsilon>0$ and some constant $L>0$ such that $\left|g(\me^{\mi \phi})\right| < L \left|\phi\right|$, whenever $\left|\phi\right|<\varepsilon$. By continuity, 
there exists some constant $M>0$ such that 
\begin{equation}
	\left|\frac{\phi}{\tan (\phi/2)}\right| < M
\end{equation}
for $\phi \in {[}-\pi,\pi{]}$. Therefore
\begin{equation}
  \left|\frac{g\left(\me^{\mi \phi} \right)}{\tan \left(\phi/2\right)}\right| \leq \chi_{\left(-\varepsilon,\varepsilon\right)}(\phi) M\cdot L + \chi_{{[}-\pi,\pi{]} \setminus \left(-\varepsilon,\varepsilon\right)}(\phi) \frac{\left|g(\me^{\mi \phi})\right|}{\tan(\varepsilon/2)},
\end{equation}
where $\chi_A$ is the characteristic function of the (measurable) set $A$. 
Define $\tilde{g}(\phi):=g\left(\me^{\mi \phi} \right) / \tan \left(\phi/2\right)$. By the previous estimate $\tilde{g} \in L_1({[}-\pi,\pi{]},\cc)$ and the Riemann-Lebesgue Lemma can be applied. Applying the Riemann-Lebesgue Lemma twice and using elementary angle-sum identities for trigonometric functions yields
\begin{eqnarray}
	\lim_{n \rightarrow \infty} \int^{\pi}_{-\pi} g(\me^{\mi \phi}) \frac{1}{2} \frac{\cos(\phi/2) - 
	\cos\left((n + {1}/{2})\phi\right)}{ \sin(\phi/2)} d\phi & = & \lim_{n \rightarrow \infty}  
	\frac{1}{2} \int^{\pi}_{-\pi} g(\me^{\mi \phi})  
	\frac{1 - \cos(n\phi)}{ \tan(\phi/2)} d\phi \nonumber \\
	& = & \lim_{n \rightarrow \infty}  
	\frac{1}{2} \int^{\pi}_{-\pi} \tilde{g}(\me^{\mi \phi})  
	\left(1 - \cos(n\phi)\right) d\phi \nonumber \\
	& = &   \frac{1}{2} \int^{\pi}_{-\pi} g(\me^{\mi \phi})  \frac{1}{ \tan(\phi/2)} d\phi.
\end{eqnarray}
If $g(1) \neq 0$ write $g(\me^{\mi \phi}) = (g(\me^{\mi \phi}) - g(1)) + g(1)$, apply the former result to the first summand and note that
\begin{equation}
 \int^{\pi}_{-\pi} \frac{\cos(\phi/2) - \cos\left((n + {1}/{2})\phi\right)}{ \sin(\phi/2)} d\phi = 0,
\end{equation}
for every $n \in \nn$ by anti-symmetry of the integrand. \qed

\end{proof}
We need an extension to double integrals, a proof of which follows the same line of reasoning.

\begin{theorem}[Distributional convergence of the kernels $K_n$ for double integrals]
Let $g \in L_1(\torus^2,\cc)$ be Lipschitz continuous in a neighborhood of the diagonal 
\begin{equation}
	D_{\torus^2}:=\left\{(x,y) \in \torus^2 \left| x=y\right.\right\} 
\end{equation}	
	then
\begin{equation}
 \lim_{n \rightarrow \infty} \int^{\pi}_{-\pi}  \int^{\pi}_{-\pi}  g(\me^{\mi \phi},\me^{\mi \theta}) K_n(\phi-\theta) d\phi d\theta = \frac{1}{2}\int^{\pi}_{-\pi} \int^{\pi}_{-\pi} \frac{g(\me^{\mi \phi},\me^{\mi \theta}) - g(\me^{\mi \theta},\me^{\mi \theta})}{\tan \left(\frac{\phi - \theta}{2} \right)} d\phi d \theta.
\end{equation}
\end{theorem}
As a consequence, Eq.\ (\ref{eqn:miwithsummedkernel}) can be rewritten in the following way:
\begin{equation}
 	I_\alpha(A:B)=\frac{1}{4 \pi^2} \int^{\pi}_{-\pi} d\phi  \int^{\pi}_{-\pi} d\theta
	 \frac{s_\alpha \left(\lambda(\me^{\mi \theta})\right)-s_\alpha \left(\lambda(\me^{\mi 	\phi})\right)}
	 {\lambda(\me^{\mi \theta})-\lambda(\me^{\mi \phi})} \frac{\left(\lambda'(\phi)-\lambda'(\theta)\right)}
	 {\tan \left((\phi-\theta)/2\right)} + o(1).
\end{equation}
Let us conclude this subsection  by commenting on the second order term in the trace formula of Theorem \ref{th:Widomformula}.
\je{In unpublished work that we learnt of upon completion of our work}, another related approach aiming at a simplification of Widom's formula has 
been discussed: The result of a \je{masters' thesis} \cite{MasterVassiliev} is 
a derivation of a second order trace formula from asymptotic inverses and appropriate factorisations
 of the symbol. The presented final expression of the second order term does not contain $k$-sum as well and finally only involves the calculation of an appropriate double integral -- a result 
similar to ours. \je{The real-valued integration approach presented here is elementary and readily gives rise to a computable formula in context at hand.}

Indeed, the idea to approach the trace formula starting with the inverse function rather then the usual logarithm 
(the term $E(\mu)$ from Szeg\"o's theorem is just the second order term of the logarithm) has many didactical and computational advantages in our opinion. First of all, this computationally inappropriate and practically incomputable $k-$sum in $E_h(\mu)$ 
is missing right from the beginning -- note that 
for larger and larger $k$, the integrand becomes more and more oscillatory in a fashion that is hard to grasp. 
Furthermore, there exist well-known asymptotic expansions for asymptotic inverses even for higher order terms 
(which are very useful for approximate solutions of systems of linear equation with
 coefficient matrix being a T{\"o}plitz matrix for example). Last but not least the authors of Ref.\
\cite{MasterVassiliev} were able to derive second order formulas for block T{\"o}plitz matrices (which require -- unlike the final result in the one-dimensional setting -- an explicit factorisation of $\lambda \id-\mu$ where $\mu$ is the matrix-valued symbol and $\lambda$ is an arbitrary complex number on the integration contour).

\begin{figure}
{
\begin{minipage}{0.98\linewidth}
  \vspace{0.2cm}
  \begin{minipage}{0.98\linewidth}
  \begin{minipage}[b]{0.48\linewidth}
			\centering
				\mbox{{\bf Critical phase} (with $b=1$, $a=1$)}
	\end{minipage}
	\hspace{0.05\linewidth}
	\begin{minipage}[b]{0.48\linewidth}
			\centering
				\mbox{{\bf Non-critical phase} (with  $b=1$, $a=4$)}
	\end{minipage}
	\vspace{0.01cm}
	\end{minipage}
	\vspace{0.3cm}
  \begin{minipage}[b]{0.98\linewidth}
			\centering
	%			\mbox{Temperature dependence of the mutual information} % This is already the caption.
	\end{minipage}
	 \begin{minipage}[b]{0.48\linewidth}
			\centering
				\includegraphics[width=1.25\linewidth]{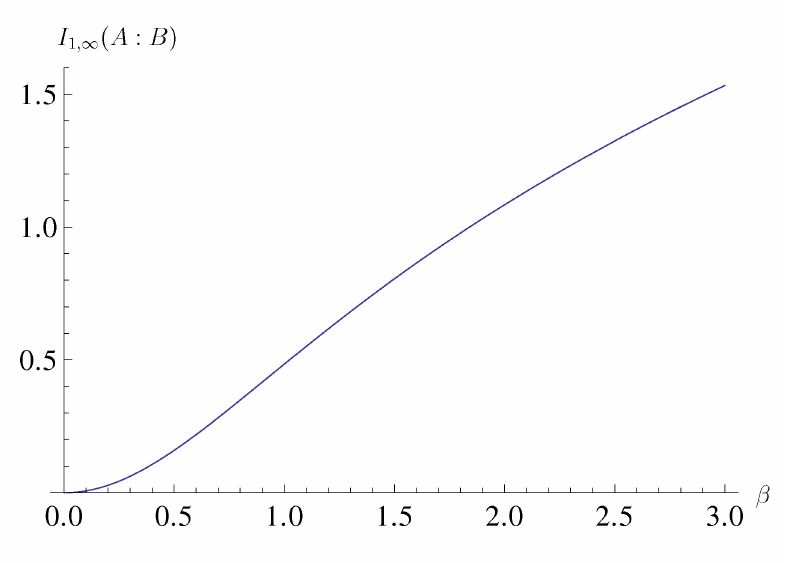}
	\end{minipage}
	\hspace{0.01\linewidth}
	\begin{minipage}[b]{0.48\linewidth}
			\centering
				\includegraphics[width=1.25\linewidth]{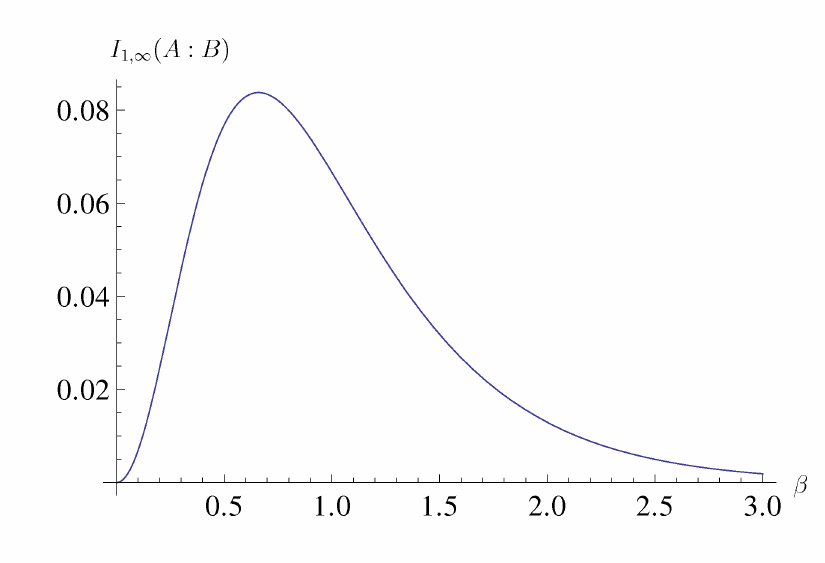}
	\end{minipage}
	\vspace{0.1cm}
  \begin{minipage}[b]{0.98\linewidth}
			\centering
				\mbox{High temperature asymptotics}
	\end{minipage}
	\begin{minipage}[b]{0.98\linewidth}
	\centering
					\mbox{\footnotesize $(\textnormal{where }R_1(\alpha,b):={\alpha b^2}/({2 \log(2)}))$}
	\end{minipage}	
\begin{minipage}[b]{0.48\linewidth}
			\centering
				\includegraphics[width=1.25\linewidth]{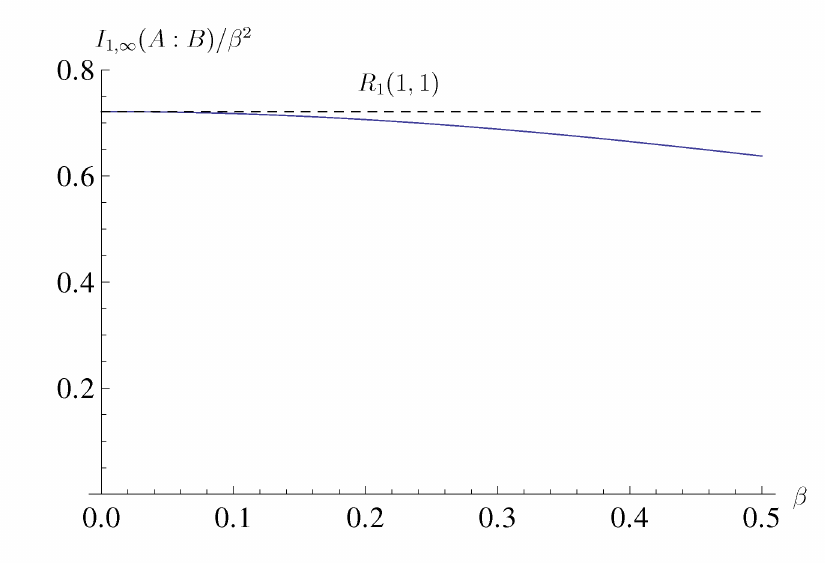}
	\end{minipage}
	\hspace{0.01\linewidth}
	\begin{minipage}[b]{0.48\linewidth}
			\centering
				\includegraphics[width=1.25\linewidth]{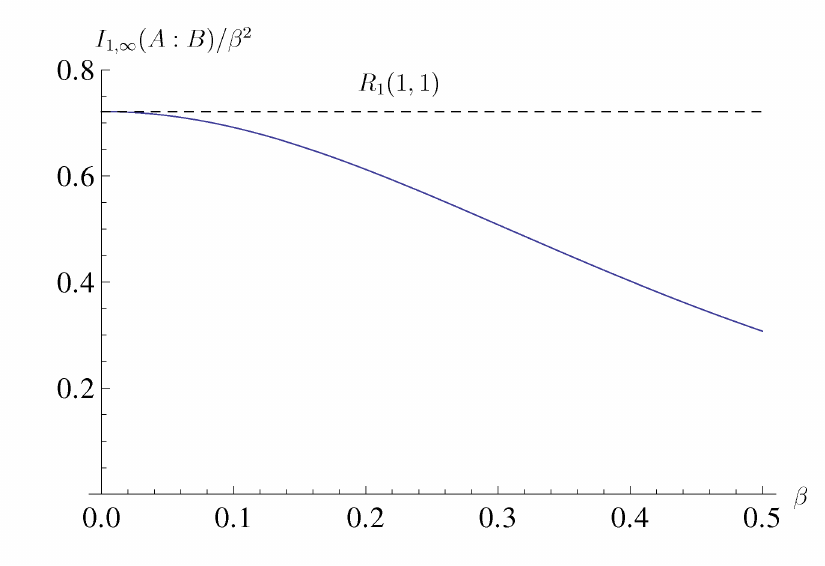}
	\end{minipage}
		\vspace{0.1cm}
  \begin{minipage}[b]{0.98\linewidth}
			\centering
				\mbox{Low temperature asymptotics}
				\mbox{\footnotesize $( \textnormal{where }R_2(\alpha):={\alpha b^2}/({\pi^2 \log(2)}) \min \left\{\frac{27}{\left(\min\left\{a,1\right\}\right)^2},\frac{8}{\left|1-\alpha\right|}\right\}, \hspace{0.1cm} R_3(a,b):=-\left(\left|a\right|-2b\right))$} 
	\end{minipage}
	\begin{minipage}[b]{0.48\linewidth}
			\centering
				\includegraphics[width=1.25\linewidth]{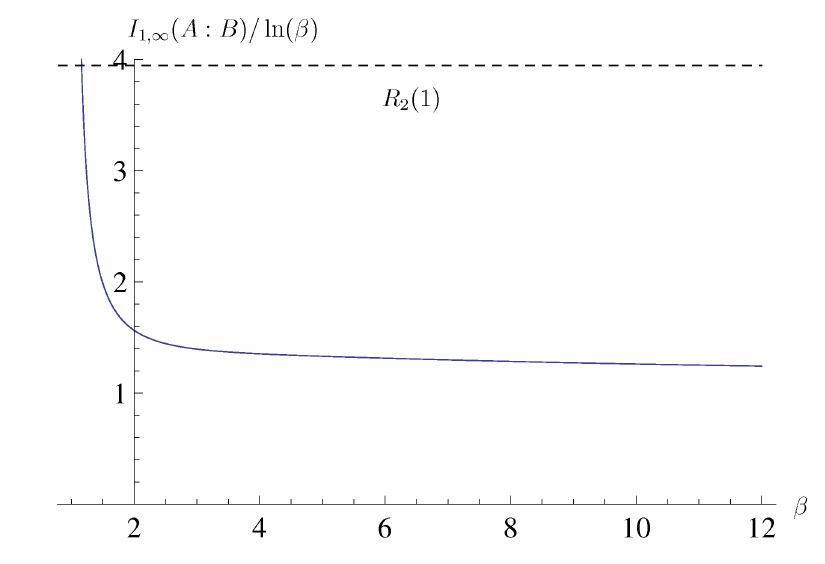}
	\end{minipage}
	\hspace{0.01\linewidth}
	\begin{minipage}[b]{0.48\linewidth}
			\centering
				\includegraphics[width=1.25\linewidth]{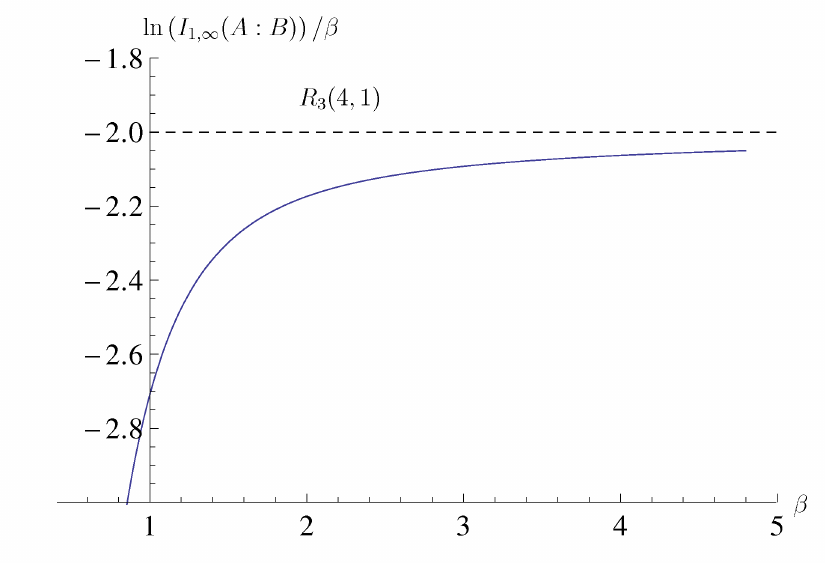}
	\end{minipage}
	\end{minipage}}
	\caption{Temperature dependence of the von-Neumann mutual information for 
	various parameters of the fermionic instance of the XX model.}
	\label{fig:Pic11}
\end{figure}

\section{Temperature dependence}\label{sec:temp}

\subsection  {General remarks}
Our main result, Theorem \ref{mainresult}, gives an asymptotically exact expression for the mutual information 
of two neighbouring blocks of fermions in one dimension. It has been shown that for a given inverse temperature $\beta>0$,
\begin{equation}
   I_{\alpha}(A:B)= I_{\alpha,\infty}(A:B) + o(1)
\end{equation}
in the system size $N$. The result given constitutes an easily computable expression that paves the way for studying 
a number of physically meaningful regimes. At this point, we would like to pause for a moment, however, and would like
to come back to one of the questions posed in the introduction, namely
of the possible asymptotic behaviour of $I_{\alpha,\infty}(A:B)$ for large and small inverse 
temperatures.
As we have seen above, the general bound of Ref.\ \cite{Mutual} following from the extremality of the free energy
suggests that the mutual information should scale like $\beta$
for large inverse temperatures,
\begin{equation}
	 I_{\alpha,\infty}(A:B)= O(\beta).
\end{equation}
One might wonder whether this bound actually 
gives the proper asymptotic scaling on the temperature. Conformal
field theory actually suggests a behaviour which is  logarithmic rather than linearly in the inverse temperature \cite{UniversalKorepin,Area2}. In this section, we corroborate the prediction from conformal field theory by showing that the low temperature asymptotics are given by
\begin{equation}
	 I_{\alpha,\infty}(A:B)= O(\log(\beta)).
\end{equation}
This has a quite remarkable consequence: In order to see features of the criticality of the ground state, one has to go to
extremely low temperatures. This dependence is also convincingly depicted in Fig.\ \ref{fig:Pic11}, where  
a logarithmic scale has been chosen -- otherwise, signatures of ground state features could hardly be detected.
Only at extraordinarily low temperatures, the familiar logarithmic divergence in the 
system size of the sub-block chosen becomes visible for a reasonably sized subsystem.

\subsection  {Analysis of the XX model}

Needless to say, the low temperature asymptotics depends on the choice of the model parameters $a$ and $b$. Specifically,
the scaling of the mutual information reflects signatures of the zero temperature 
quantum phase transition of this model taking place at $\left|a\right|=2b$. Whenever $\left|a\right| < 2b$ the model is critical in its
ground state, i.e., there is no energy gap. It is known that the entanglement entropy then exhibits a logarithmic divergence -- 
signatures of that are also seen in the mutual information at small but non-zero temperature. If the model is gapped, i.e., 
whenever  $\left|a\right| > 2b$, the ground state is the vacuum or the fully occupied state, depending on the sign of $a$. As expected, 
the mutual information converges exponentially quickly to zero in the non-critical phase at a rate proportional to the energy gap. The main insights for this model are summarised in the  following theorem:

%\begin{theorem}[Temperature dependence]
%The mutual information (von-Neumann) of the XX model with parameters $a=0$, $b=1$ satisfies
%\begin{equation}
%I(A:B)\leq \begin{cases} \frac{1}{4}\beta^2, & \mbox{as } \beta\rightarrow 0 \\ \frac{8}{\pi^2} \log{\beta} +C, & \mbox{as } \beta \rightarrow \infty %\end{cases}
%\end{equation}
%where $C$ is a constant independent of $\beta$.
%\end{theorem}
\begin{theorem}[Temperature dependence] 
\label{theo:tempdep}
The  ($\alpha-$) mutual information of the thermal state of the  fermionic XX model with parameters $a \in \rr$ and $b>0$ satisfies
\begin{equation}
\label{eq:hightemplimit}
	\lim_{\beta\rightarrow 0}  \frac{I_{\alpha,\infty}(A:B)}{\beta^2}=\frac{\alpha b^2}{2 \log(2)} .
\end{equation}
Whenever $|a|<2b$, then
\begin{eqnarray}
\label{eq:lowtemplimit1}
&& \limsup_{\beta \rightarrow \infty}	\frac{I_{\alpha,\infty}(A:B)}{\log \beta} \\
&\leq & \frac{2 \alpha}{\pi^2 \log(2)\left(\min\left\{1,\alpha\right\}\right)^2}  \min\left\{\frac{27}{2  \left(\min\left\{1,\alpha\right\}\right)},\frac{4}{\left|1-\alpha\right|}\right\} .\nonumber
\end{eqnarray}
Whenever $|a|>2b$, then for any $\kappa < (\left|a\right| - 2 b) \cdot \min\left\{\alpha,1\right\}$
\begin{equation}
\label{eq:lowtemplimitferro}
\lim_{\beta \rightarrow \infty} \exp(\kappa \beta) I_{\alpha,\infty}(A:B) = 0.
\end{equation} 
\end{theorem}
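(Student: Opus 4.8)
The plan is to start from the closed double integral of Theorem~\ref{MainTheorem}, specialised to the XX symbol $\lambda(\me^{\mi\theta})=-\tanh(\tfrac{\beta}{2}\varepsilon(\me^{\mi\theta}))$ with $\varepsilon(\me^{\mi\theta})=a+2b\cos\theta$, and to treat the three regimes by separate asymptotic analyses. A tool used repeatedly is the elementary identity $\frac{\sin\phi-\sin\theta}{\tan((\phi-\theta)/2)}=\cos\phi+\cos\theta$, which collapses the cotangent kernel whenever $\lambda'$ is proportional to $\sin$. Here $\lambda'(\phi)=\beta b\,\sin\phi\,\sech^2(\tfrac{\beta}{2}\varepsilon(\me^{\mi\phi}))$, so the whole analysis is governed by how the $\sech^2$ weight and the entropy difference quotient behave as $\beta$ varies.

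\textbf{High temperature ($\beta\to0$).} Here $\lambda=-\tfrac{\beta}{2}\varepsilon+O(\beta^3)\to0$, so I would Taylor-expand $s_\alpha$ at its maximum. Since $s_\alpha$ is even with $s_\alpha(0)=1$ and $s_\alpha''(0)=-\alpha/\log2$ (a short computation from the definition, valid for all $\alpha$ including the von-Neumann limit), the difference quotient satisfies $\frac{s_\alpha(x)-s_\alpha(y)}{x-y}\to-\frac{\alpha}{2\log2}(x+y)$. At the same time $\sech^2\to1$, so $\lambda'(\phi)-\lambda'(\theta)\to\beta b(\sin\phi-\sin\theta)$ and the cotangent kernel collapses to $\cos\phi+\cos\theta$. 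The integrand is then $O(\beta^2)$ and dominated convergence lets me pass the limit inside, reducing $I_\alpha/\beta^2$ to the explicit trigonometric integral
\begin{equation}
\frac{1}{4\pi^2}\,\frac{\alpha b}{2\log2}\int_{-\pi}^{\pi}\!\!\int_{-\pi}^{\pi}(a+b\cos\theta+b\cos\phi)(\cos\theta+\cos\phi)\,d\theta\,d\phi=\frac{\alpha b^2}{2\log2},
\end{equation}
where only the two $\cos^2$ terms survive the $\theta,\phi$ integrations; this is Eq.~(\ref{eq:hightemplimit}).

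\textbf{Low temperature, non-critical ($|a|>2b$).} Taking $a>0$ without loss of generality, $\varepsilon(\theta)\ge a-2b=:\Delta>0$ and $\lambda\to-1$ uniformly and exponentially fast, with $1+\lambda(\theta)\le2\me^{-\beta\varepsilon(\theta)}$ and $|\lambda'(\theta)|\lesssim\beta\me^{-\beta\varepsilon(\theta)}$ (using $\sech^2(t)\le4\me^{-2t}$). Because $s_\alpha$ is Hölder near $-1$ with exponent $\gamma=\min\{\alpha,1\}$ by Lemma~\ref{theo:Hoeldercone}, I would bound the difference quotient and $\lambda'$ \emph{jointly} rather than factor by factor: near $\lambda=-1$ their product decays like $\beta\,\me^{-\gamma\beta\varepsilon(\theta)}$, the growth of $s_\alpha'$ as $\lambda\to-1$ being outpaced by the $\sech^2$ suppression. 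Integrating and applying Laplace's method fixes the slowest rate at $\min_\theta\varepsilon=\Delta$, giving $I_\alpha\lesssim\mathrm{poly}(\beta)\,\me^{-\gamma\beta\Delta}$, so that $\me^{\kappa\beta}I_\alpha\to0$ for every $\kappa<\gamma\Delta=(|a|-2b)\min\{\alpha,1\}$, which is Eq.~(\ref{eq:lowtemplimitferro}).

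\textbf{Low temperature, critical ($|a|<2b$) --- the main obstacle.} Now $\varepsilon$ has two simple zeros $\pm\theta_F$ with $\cos\theta_F=-a/2b$, and $\lambda$ develops smoothed steps of width $\sim1/\beta$ at the Fermi points while remaining smooth elsewhere. The $\log\beta$ growth is the statement that the second-order Szegő coefficient $E_{s_\alpha}(\lambda)$ of Theorem~\ref{th:Widomformula}, finite for every smooth symbol, diverges logarithmically as $\lambda$ approaches the discontinuous zero-temperature symbol. I would localise the double integral to neighbourhoods of $(\pm\theta_F,\pm\theta_F)$ (the complement stays bounded), linearise $\varepsilon(\theta)\approx\varepsilon'(\theta_F)(\theta-\theta_F)$, and rescale $u=\beta(\theta-\theta_F)$, $v=\beta(\phi-\theta_F)$; then $\lambda\to-\tanh(\tfrac{1}{2}\varepsilon'(\theta_F)u)$ and $1/\tan((\phi-\theta)/2)$ becomes a Cauchy kernel $\sim2\beta/(u-v)$, so the integral reduces to a universal scaling integral whose only divergence is logarithmic and is cut off at $|u|,|v|\sim\beta$, the edge of the linearisation window, producing $\log\beta$. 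The hard part is turning this heuristic into a rigorous \emph{upper} bound with the stated constant: one must control the error of linearising $\varepsilon$ and of replacing the cotangent by its leading singularity, the near-endpoint behaviour of the difference quotient where $\lambda\to\pm1$ and $s_\alpha'$ is singular, and the additive contribution of the two Fermi points. The explicit constant in Eq.~(\ref{eq:lowtemplimit1}) then comes from a uniform estimate of the relevant integral of $s_\alpha$ across the jump: the two terms in the inner minimum correspond to two distinct ways of bounding $s_\alpha$ near its singular points (hence the $27$ versus the $|1-\alpha|^{-1}$), and the $\min\{1,\alpha\}$ factors encode its Hölder data, while bounding crudely makes the geometric slope $|\varepsilon'(\theta_F)|$ drop out --- which is precisely why one obtains an inequality rather than an identity.
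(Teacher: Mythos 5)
Your high-temperature argument is correct and essentially the paper's: the paper likewise expands the integrand to order $\beta^2$ and evaluates the same trigonometric integral (your use of the identity $(\sin\phi-\sin\theta)/\tan((\phi-\theta)/2)=\cos\phi+\cos\theta$ to collapse the kernel first is a harmless simplification, and your value $s_\alpha''(0)=-\alpha/\log 2$ and the resulting constant check out). Your treatment of the gapped case is also sound in outline and matches the mechanism the paper uses (in its Step 1 the edge region is the whole square when $|a|>2b$, and the rate $(|a|-2b)\min\{\alpha,1\}$ arises exactly from the interplay of the spectral gap with the H\"older exponent of $s_\alpha$ at $\pm 1$, as you say).

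The genuine gap is the critical case, Eq.~(\ref{eq:lowtemplimit1}), which is the substance of the theorem. You correctly identify the mechanism (a logarithmic divergence of $E_{s_\alpha}(\lambda)$ as the symbol sharpens at the Fermi points, cut off at scale $1/\beta$), but your localisation-and-rescaling argument is, by your own admission, a heuristic: you do not control the region where $\theta$ and $\phi$ are near the same Fermi point at distances large compared to $1/\beta$ but small compared to $1$ --- there both $\lambda(\theta)-\lambda(\phi)$ and $s_\alpha(\lambda(\theta))-s_\alpha(\lambda(\phi))$ are exponentially small and their ratio must be bounded carefully --- and you do not derive the stated constant, only assert that the two terms of the minimum ``correspond to'' two ways of bounding $s_\alpha$. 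The paper's proof fills precisely these holes by a different route: it substitutes $x=\cos\theta$, $y=\cos\phi$, splits $[-1,1]^2$ into bulk, edge and remainder regions, bounds the entropy difference quotient by an exponential difference quotient via a mean-value comparison (Lemma~\ref{lem:boundsondifquot} applied to $f_\alpha(x)=s_\alpha(\tanh x)$ with $-f_\alpha'(x)\leq C_\alpha\me^{-\kappa_\alpha x}$), and evaluates the resulting integral explicitly through the exponential-integral identity of Lemma~\ref{lem:Eiintegral}; the $\log\beta$ comes from the growth of that integral over a square of side $O(\beta)$, and the two terms in the minimum come from two admissible choices of $(C_\alpha,\kappa_\alpha)$, one optimised over a free parameter $\epsilon$ (yielding the $27$) and one valid only for $\alpha\neq 1$ (yielding $|1-\alpha|^{-1}$). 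Without these quantitative difference-quotient estimates, or a completed version of your scaling argument with explicit error control away from the strict scaling window, the upper bound with the stated constant is not established.
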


%
% \mbox{as } \beta\rightarrow 0 \\ \frac{8}{\pi^2} \log{\beta} +C, & \mbox{as } \beta \rightarrow \infty \end{cases}
%\end{equation}
%where $C$ is a constant independent of $\beta$.
%\end{theorem}
%
%We note that the upper bound on the asymptotic growth of $I_{\alpha,\infty}(A:B)$ is not optimal 
%could probably be improved by a more careful evaluation of the integral in Eq.\ (\ref{eqn:main}). 
%However, these bounds give the correct dependence in $\beta$.

The proof of this statement is rather involved and requires a number of techniques developed in lemmas:
Hence, for better readability of the main text, it will be presented in
the appendix in Subsection \ref{tempproof}.

%%%%%%%%%%%%%%%%%%%%%%%%%%%%%

\section{Free fermonic models on the torus}\label{TorusSec}

The results established above readily apply to the situation of a bi-sected, higher-dimen\-sional fermionic lattice system on the torus.
A quite similar strategy has already been exploited, e.g., in Ref.\ \cite{HalfSpaces}. 
Using appropriate discrete Fourier transforms, one can 
disentangle the constituents from each other with respect to 
all but one dimensions. The exception is the dimension in which the two regions
 labeled $A$ and $B$ are singled out, see Fig.\ \ref{fig:torus}. In this way, one 
 arrives at a collection of 
 suitably modulated and altered one-dimensional problems, to which the above statements apply. In this section, we highlight the 
results obtained in this manner. An important application of this is the computation of the mutual information in 
{\it higher-dimensional tight binding models}.

\subsection  {Geometry of the problem}

Let us for simplicity consider in $D$ dimensions 
the geometry of slabs $L= (\zz_M)^{D-1}\times \zz_N$, for suitable $N$ and $M$. 
So along one dimension, we have as before $N$ sites, whereas the system size with respect to the
other dimensions is $M$, see Fig.\ \ref{fig:torus}. The index set of all sites can be taken to be
\begin{eqnarray}
	I &=& I'\times J,\\
	I'&=& \{0,\dots, M-1\}^{\times D-1},\\
	J&=&\{0,\dots,N-1\}.
\end{eqnarray}
Vectors of indices $i\in I\subset \zz^D$ 
will be regarded as modulo $N$ and $M$, respectively. Let
$w:\zz^{D}\rightarrow L$ be defined as
\begin{equation}
	w(i) = w(i_1,\dots, i_D) = (i_1 {\rm mod} M,\dots, i_{D-1} {\rm mod} M, i_D{\rm mod} N).
\end{equation}
This function simply projects arbitrary indices from $\zz^D$ onto the lattice $L$. We again allow for arbitrary
finite-ranged interactions (the generalization to exponentially decaying interactions is straightforward but omitted). That is to say, 
for $i,j\in I$, the Hamiltonian tensor takes the form
\begin{equation}
   V^{{i}}_{{j}}=d_{w(i-j)}. 
\end{equation}
Then the Hamiltonian can be written in the following way
\begin{equation}
   H := \sum_{i,j \in I} V^{i}_{j} f_{i}^{\dagger} f_{j} = \sum_{a,b \in I'; c,d \in J} V^{(a,c)}_{(b,d)} f_{(a,c)}^{\dagger} f_{(b,d)}.
\end{equation}
We have singled our the special spatial dimension for which we consider the bi-partite cut. We take the parts $A$
and $B$ to be
\begin{eqnarray}
A&=&I'\times  \left\{0,1, \ldots, \lceil qN \rceil-1\right\},\\
B&=& I'\times \left\{ \lceil qN \rceil,\lceil qN \rceil+1, \ldots, N-1 \right\},
\end{eqnarray}
with $q$ as before. We will also see that the previously found results still apply.

\begin{figure}
	\begin{center}
	\includegraphics[width=0.7\linewidth]{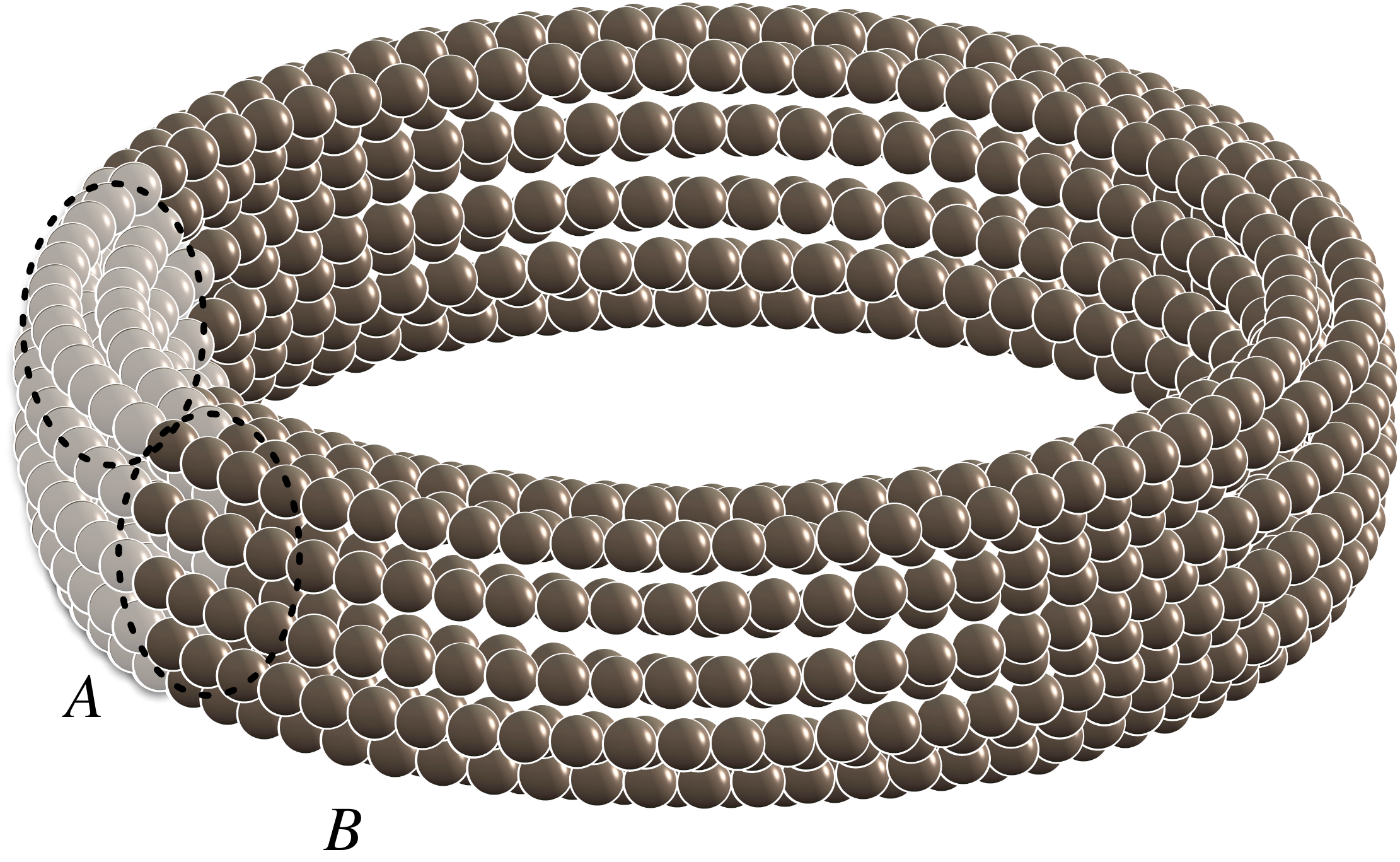}
	\end{center}
	\caption{The geometry of the free fermionic model on the torus. Along a singled out dimension, the system consists of $N$ sites, while
	the other dimensions with the topology of a torus embody $M$ sites. The distinguished region is again referred to as $A$,
	its complement is $B$.} \label{fig:torus} 				
\end{figure}

\subsection  {Scaling of the mutual information on the torus}
We can now consider Fourier transforms with respect to the index set $I'$, while leaving the index set $J$ invariant.
For the discrete Fourier transform, we choose
\begin{equation}
U_{k,l}:=\frac{1}{\sqrt{M^{D-1}}}\exp{\left(\frac{2 \pi \mi}{M^{D-1}} k \cdot  l \right)} \textnormal{ for } k,l \in I'.
\end{equation}
It can then be shown that
\begin{eqnarray}\label{Modu}
	{V'}^{(k,k_0)}_{({l},l_0)}& :=& \left(\left(U \otimes \id \right) \left(V^{{i}}_{{j}}\right)_{{i},{j}} \left(U^{\dagger} \otimes 
	\id \right)\right)_{({k},k_0),({l},l_0)}\nonumber\\
	&=&\underbrace{\left(\sum_{{t} \in I'}\exp{\left(\frac{2 \pi \mi}{N'}{t} \cdot {l}\right)} d_{\left({t},k_0-l_0\right)}\right)}_{:=	
	\tilde{d}^{({l})}_{k_0,l_0}} \delta_{{k},{l}}.
\end{eqnarray}
That is to say, the reduced entropies of $A$ and $B$ as well as the global entropy can be 
computed as if one had uncoupled one-dimensional systems, only that the coupling is modulated in a
way described by the new couplings in Eq.\ (\ref{Modu}). Denote for $k\in I'$ the resulting expression of
the symbol computed from $\tilde d^{(k)}$
\begin{equation}
	\lambda_k = f\circ \varepsilon_k,
\end{equation}	 
where $\varepsilon_k$ is the expression as defined in Eq.\ (\ref{Epsilon}), with $d$ replaced by $\tilde d^{(k)}$ for each $k$. 
In this way,
we also arrive at the asymptotically exact expression for the mutual information in the limit of large $N$, recovering the logarithmic
divergence of the mutual information.

\begin{theorem}[Mutual information on the torus]\label{Torus} 
For any inverse temperature $\beta>0$, any $M\in \mathbb{N}$, and
for any $\alpha\in[0,\infty)$, the mutual information is given by the, in $N$ asymptotically exact, expression
\begin{eqnarray} 
\label{eqn:torus}
I_\alpha(A:B) &=& 
\frac{1}{4 \pi^2}
\sum_{k\in I'}
\int^{\pi}_{-\pi} d\phi \int^{\pi}_{-\pi} d\theta \nonumber\\
&\times &\frac{s_\alpha(\lambda_k(\me^{\mi\theta})) - s_\alpha(\lambda_k(\me^{\mi\phi}))}{\lambda_k(\me^{\mi\theta}) - \lambda_k(\me^{\mi\phi})} 
\frac{\lambda_k'(\phi) - \lambda_k'( \theta)}{\tan \left(({\theta - \phi})/{2}\right)} +o(1).
\end{eqnarray}
\end{theorem}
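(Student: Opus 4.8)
The plan is to exploit the block structure produced by the transverse Fourier transform of Eq.\ (\ref{Modu}) in order to reduce the $D$-dimensional problem to a finite collection of one-dimensional problems, each already covered by Theorem \ref{MainTheorem}, and then to sum the resulting expressions over the transverse momenta $k\in I'$. Concretely, I would set $W:=U\otimes\id$, the unitary acting only on the transverse index set $I'$ while leaving the singled-out index set $J$ untouched. By Eq.\ (\ref{Modu}), conjugation by $W$ brings the single-particle Hamiltonian matrix into block-diagonal form $W V W^{\dagger}=\bigoplus_{k\in I'}V'_k$, where $V'_k$ is the one-dimensional circulant matrix with couplings $\tilde d^{(k)}$ and symbol $\varepsilon_k$. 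Since the Gibbs covariance matrix depends on $V$ only through the function $f$, the matrix $X=f(V)$ inherits the same form, $W X W^{\dagger}=\bigoplus_k f(V'_k)$, each block being exactly the one-dimensional object $X_k$ with symbol $\lambda_k=f\circ\varepsilon_k$.

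The crucial observation is that the bipartition respects this decomposition. Because $A=I'\times A_1$ and $B=I'\times B_1$ with $A_1=\{0,\dots,\lceil qN\rceil-1\}$ and $B_1$ its complement in $J$, the orthogonal projection onto $A$ factorises as $P_A=\id_{I'}\otimes P_{A_1}$. As $W$ acts as the identity on the $J$ factor and as a unitary on the $I'$ factor, one has $W P_A W^{\dagger}=(U\,\id_{I'}\,U^{\dagger})\otimes P_{A_1}=P_A$, so $W$ commutes with the restriction to $A$ (and likewise to $B$). Consequently the restricted covariance matrix is unitarily equivalent to a direct sum, $W(X|_A)W^{\dagger}=\bigoplus_{k\in I'}X_k|_{A_1}$, and taking $\tr(s_\alpha(\cdot))$ of both sides gives $S_\alpha(\rho_A)=\sum_{k\in I'}\tr(s_\alpha(X_k|_{A_1}))$. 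The identical argument applies to $B$ and to the full system, so the mutual information decomposes exactly as $I_\alpha(A:B)=\sum_{k\in I'}I_\alpha^{(k)}(A_1:B_1)$, where $I_\alpha^{(k)}$ is the one-dimensional mutual information for the symbol $\lambda_k$.

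It then remains to apply Theorem \ref{MainTheorem} block by block. For every $k\in I'$ the effective coupling $\tilde d^{(k)}$ is again finite-ranged, so $\varepsilon_k$ is a finite Laurent polynomial, in particular real-analytic, and, by Hermiticity of the original real symmetric $V$, real-valued on $\torus$; hence $\lambda_k=f\circ\varepsilon_k$ is real-analytic with values in a compact subinterval of $(-1,1)$ and satisfies all the regularity hypotheses invoked in Section \ref{mainresult}. Theorem \ref{MainTheorem} therefore yields, for each $k$, the inner double integral of Eq.\ (\ref{eqn:torus}) up to an error $o(1)$ in $N$. Since $M$ is held fixed, the index set $I'$ has the fixed finite cardinality $M^{D-1}$, so summing the finitely many one-dimensional expressions turns a sum of $o(1)$ terms into a single $o(1)$, and Eq.\ (\ref{eqn:torus}) follows.

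The hard part is conceptual rather than computational: the entropy is \emph{not} a function of the momentum-space Hamiltonian alone but of its real-space restriction to $A$, so the reduction to one-dimensional problems is legitimate only because $P_A$ commutes with the transverse transform $W$. This in turn hinges entirely on the cut being taken along the singled-out dimension, with every transverse site retained in both $A$ and $B$; were the partition to separate transverse sites as well, $W$ would mix $A$ with $B$ and the clean block structure, and with it the appeal to Theorem \ref{MainTheorem}, would collapse. By contrast, the uniformity of the error is a non-issue precisely because $M$ is fixed, so no control uniform in the number $M^{D-1}$ of blocks is needed.
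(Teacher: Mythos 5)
Your proposal is correct and follows essentially the same route as the paper, which proves Theorem \ref{Torus} precisely by the transverse Fourier transform of Eq.\ (\ref{Modu}), the resulting block-diagonalisation into one-dimensional problems with symbols $\lambda_k=f\circ\varepsilon_k$, and a blockwise application of Theorem \ref{MainTheorem} with $M$ fixed. Your write-up usefully makes explicit the point the paper leaves implicit, namely that $P_A=\id_{I'}\otimes P_{A_1}$ commutes with $U\otimes\id$ so that the restricted covariance matrix itself decomposes as a direct sum.
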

In particular, in this way, we find for tight binding models on the torus in any dimension $D$
that
\begin{equation}
	I(A:B)= O(M^{D-1}\log(\beta))
\end{equation}
for large inverse temperatures $\beta$: The logarithmic divergence in the inverse temperature remains, while an additional term
relating to an area law in $D$ dimensions emerges.

\section{Outlook}\label{Outlook}

\subsection  {Conformal field theory and entanglement spectra}

In this section, we compare our results with predictions from {\it conformal field theory}. Indeed, 
our findings can be seen as a confirmation of the predictions resulting from the conformal transformation relating finite systems at zero temperature to infinite systems at finite temperature. In our result, we provide system specific qualifiers of the asymptotic limits, and our main theorem constitutes a fully
rigorous result applicable to a large class of models. 
Also, higher-dimensional systems on tori can be captured with the methods
presented here. Still, it is interesting to see that the behaviour of the mutual information, scaling asymptotically 
as the logarithm in the inverse temperature, can also immediately be suggested from the analysis of conformal field
theories in $1+1$ dimensions. This is an observation similar to the one 
made for ground state properties of the XX model in the ground state \cite{Jin,FH}:
Here, the rigorous expression given relates to and confirms the formula \cite{Wilczek,Area2}
\begin{equation}
	S_A = \frac{c}{3}\log \frac{l}{a}+ c_1,
\end{equation}
in a connection already conjectured in Ref.\ \cite{Latorre}.
Here, $c_1>0$ is a constant and $a$ is the lattice spacing, while $c$ is the conformal charge. 
The results on entanglement entropies in the XX models were proven using Fisher-Hartwig type methods
for T\"oplitz determinants \cite{Jin,FH}.

The main idea in the conformal analysis is that there exists a connection between correlation properties of  finite systems at zero temperature and infinite systems at finite temperature. Therefore, assume that at zero temperature, the
entanglement entropy scales as
\begin{equation}
\label{conformalEntropy1}
	S_A = f( l)
\end{equation}
for some function $f:\rr^+\rightarrow\rr^+$. Then, at large inverse temperatures $\beta>0$, we consider  the conformal mapping
$z\mapsto \me^{2\pi z/(v\beta)}$ for $v>0$, to arrive at the expression \cite{UniversalKorepin}
\begin{equation}
	S_A = f\left ( \frac{v\beta}{\pi}\sinh\left(
	\frac{\pi l}{v}
	\right)\right).
\end{equation}
Combined with Eq. (\ref{conformalEntropy1}), one hence gets for $\beta>0$ \cite{UniversalKorepin,Area2}
\begin{equation}
	S_A = \frac{c}{6}\log \left(
	\frac{\beta}{\pi a}\sinh\left(
	\frac{2\pi l}{\beta}
	\right)
	\right)+c_3,
\end{equation}
for $c>3>0$.
The above statement becomes meaningful in the limit when $l\gg \beta$.
Asymptotically,  $S_A$
is well approximated by $\pi c l/(3\beta) + c_3$ \cite{Area2}, but for our purposes, we also need a first  order correction,
namely
\begin{equation}
	S_A \approx \frac{c}{6}\left(\log \left(
	\frac{\beta}{\pi a} 
	\right)
	-\log(2)+ \frac{2\pi l}{\beta}
	\right)+c_3.
\end{equation}
In this way, one recovers the expression for the mutual information
\begin{equation}
	I(A:B)\approx \frac{c}{3}\log \left(
	\frac{\beta}{\pi a}
	\right),
\end{equation}
again exhibiting a logarithmic dependence in $\beta$, as $\beta\rightarrow \infty$. The analysis of the finite temperature case axing arguments from conformal field theory has recently been carried out in Refs.\ \cite{Swingle2,LeHur2}. 
%This logarithmic dependence is a feature that is also valid for higher-dimensional systems, so the results can be taken as a benchmark from which to expect in higher-dimensional situations, as a ``theoretical laboratory'', where results from conformal field theory are no longer readily availdable.
As mentioned earlier, the criticality of the ground state is not felt by the mutual information, unless one is at extremely
low temperatures. In turn, the logarithmic scaling in the size of a subregion is turned into a logarithmic dependence on the inverse
temperature.

It is important to note that because the formalism developed here allows for the computation of all Renyi entropies, the tools developed in this paper also allow for the study spectra of reductions; i.e.  {\it entanglement spectra} \cite{Peschel3,Peschel2,Spectrum,Spectrum2,Spectrum3,Boundary}, 
compare also Refs.\ \cite{Area,Area2}. Such entanglement spectra have turned out to provide a powerful tool 
when characterizing {\it topological order}, relating bulk to boundary theories, and discussing the possibility of
approximating states with suitable {\it tensor network states} such as {\it matrix product states and operators} and higher-dimensional
analogues.

\subsection  {Remarks on open free fermionic many-body systems}

Let us finally briefly mention that the formalism developed here in principle also allows for the study of {\it open quantum systems},
undergoing dissipation and quantum noise. An extensive discussion of this topic is beyond the scope of the present article.
In the mindset developed here, one can consider {\it Liouvillians} capturing fermionic open quantum systems
of the form
\begin{equation}	
	{\cal L}(\rho) = \mi[H,\rho] + \sum_{j\in\zz_N} \left(
	L_j \rho L_j^\dagger - \frac{1}{2}\{
	L_j^\dagger L_j,\rho
	\}
	\right)
\end{equation}
with, as before,
\begin{equation}\label{eqn:Hamiltonian2}
	H= \sum_{i,j\in \zz_N}f_j^\dagger V_{j,k} f_k,
\end{equation}
and the $L_j$ are operators linear in the fermionic operators, supported on finitely many sites only, which are all the same,
except that they are all translates of each other. The Lindblad operators $L_j$ hence act locally in the same way as the Hamiltonian terms
act locally. The equations of motions then become
\begin{equation}
	\frac{d}{dt}\rho(t) = {\cal L}(\rho).
\end{equation}
A state $\rho$ is called a steady state if ${\cal L}(\rho)=0$. For fully translationally invariant systems, one can again define a 
symbol \cite{EisertProsen,Diehl2}, 
and the methods developed here are applicable. Details of such an approach
will be pursued elsewhere. Such an idea seems particularly timely in the light of the observation that 
open system dynamics should not only be viewed as a source of quantum noise added to the system. But that 
open system dynamics and dissipation can also be beneficial: Indeed, dissipative dynamics allows for
dissipative instances of quantum information processing \cite{Cirac,Diehl,Timing}, exhibiting an interesting way of 
protection. Also, steady states can readily exhibit a number of exciting properties: they can be {\it entangled}, show
phenomena of {\it noise-driven criticality}, or even exhibit {\it topological order} \cite{Diehl,Diehl2,EisertProsen}. The latter
is specifically true for free fermionic models, an arena for which the machinery developed here should be fruitful.

\subsection  {Summary} In this work, we have introduced a formalism allowing us to prove
the validity of rigorous expressions for the Renyi mutual information of Gibbs states of translationally invariant quasi-free fermionic models. 
The expressions obtained are asymptotically exact; the bounds given are also exponentially tighter than those
derived from the extremality of the Gibbs state with respect to the free energy. 
Special emphasis has been put onto the technical development of novel methods of dealing with T{\"o}plitz matrices, in particular a 
new and useful instance of a second order expression for smooth symbols. These tools, as well as the approximation results introduced here, 
are expected to be widely applicable and highly useful in various contexts. It is the hope that this work
inspires further entanglement-related studies of mixed fermionic quantum states, arising in the context of 
describing both closed and open quantum many-body systems.

\section*{Acknowledgements}
This work has been supported by the EU (QESSENCE, RAQUEL, COST, SIQS, AQuS), 
the FQXi, the EURYI award scheme, the BMBF (QuOReP), and the ERC (TAQ). 
We acknowledge discussions with M.\ Cramer on entanglement properties of free fermionic systems, with 
B.\ Silbermann on second order trace formulae, \je{and with A.\ Winter on Renyi entropic mutual informations}.

\section{Appendix}

\subsection  {Simple and useful upper and lower bounds to the mutual information}

In this subsection, we will derive upper and lower bounds to the quantum mutual information, evaluated for
Gaussian fermionic states. These bounds are generally useful when bounding mutual information
expressions.
Moreover, they are additive, so share this important feature with the actual mutual information. 
We expect these bounds to be useful also in other contexts, different from the study pursued here. What is more,
it should be clear that one can immediately also formulate a bosonic variant of the bounds presented here, using the 
same strategy of proof.
We consider covariance matrices $\Gamma\in\rr^{2n\times 2n}$
of the form
\begin{equation}
	\Gamma=
	\left(
	\begin{array}{cc}
	0 & X\\
	-X& 0
	\end{array}
	\right),
\end{equation}
with $X=X^T$. We write
\begin{equation}
	X=\left(
\begin{array}{cc}
	X_A & X_{AB}\\
	X_{BA} & X_B
	\end{array}\right)
\end{equation}
and denote with  $P$  the pinching \cite{Bhatiamatrix}
\begin{equation}
	P=\left(
	\begin{array}{cc}
	X_A & 0\\
	0 & X_B
	\end{array}
	\right) = X_A\oplus X_B
\end{equation}
of $X$. This is the covariance matrix of the tensor product of the two reduced states, as a moment of thought reveals.
In these terms, the quantum mutual information can hence be written as
\begin{equation}
	I(A:B)= {\rm tr}(s(P ))-{\rm tr}(s(X )),
\end{equation}	
so as a difference of two trace functions. Let us now consider a quadratic 
function $l:[-1,1]\rightarrow [0,1]$
defined as
\begin{equation}
	l(x)=(1-x^2)/2.
\end{equation}
We will see that the corresponding difference of trace functions is a convenient, computable lower bound
of the mutual information with reasonable properties. We hence relate the mutual information to {\it purities} 
only (or, for that matter, to $2$-Renyi mutual informations)\footnote{It may be important to state 
that since these quantities only referring to local and global purities 
are easier to measure, these bounds are expected to be useful even in an experimental context.}.

\begin{lemma}[Lower bound to the mutual information]
For any real symmetric $X=X^T$,
\begin{equation} 
{\rm tr}(s( P))  - {\rm tr}(s( X))    \geq    {\rm tr}(l( P )) -{\rm tr}(l( X )).
\end{equation}
Moreover, the bound is additive, in that 
\begin{equation}
	{\rm tr}(l( P )) = {\rm tr}(l( X_A ))+ {\rm tr}(l( X_B )).
\end{equation}
\end{lemma}

\begin{proof} To start with, the additivity immediately follows from the definition. We have
\begin{eqnarray}
	{\rm tr}(l( P )) &=& \frac{1}{2}{\rm tr}(\id_{2n}-(X_A\oplus X_B)^2) \nonumber\\
	&=& \frac{1}{2}
	 {\rm tr}(\id_{2n}-(X_A^2\oplus X_B^2))\nonumber\\
	&=&
	{\rm tr}(l( X_A ))+ {\rm tr}(l( X_B )).
\end{eqnarray}
This lower bound as such can be shown using a result of the theory of convex trace functions called Peierls' inequality
\cite{Bhatiamatrix} \hb{(see also Lemma~\ref{lem:Peierl})}. The function
$s-l$ is a concave function (it is not required that it is operator concave). What is more, we can without loss of generality
assume that $X_A$ and $X_B$ are both diagonal. If they are not, there exist real orthogonal matrices $O_A,O_B$
such that $(O_A\oplus O_B)P(O_A\oplus O_B)^T$ is diagonal, without altering the value of the mutual information or
the bound. Using the concavity of $g:= s-l$, we find
\begin{equation}
	\sum_j  g(X_{j,j}) \geq {\rm tr}(g(X)).
\end{equation}		
%\hbcom{Probably:
%\begin{equation}
%	\sum_j ( g(X_{j,j})) \geq {\rm tr}(g(X)).
%\end{equation}		}
But since $X_A$ and $X_B$ are diagonal, we have
\begin{equation}
	{\rm tr}(g (P ) ) \geq {\rm tr}(g(X)) =  {\rm tr}(s( X))  -  {\rm tr}(l( X ))
\end{equation}	
which is what we intended to show.\qed
\end{proof}
Note the similarity of this bound for the mutual information with the 
bound presented in Ref.\ \cite{MMW} for the von-Neumann entropy. It is 
quite remarkable that this bound can even be used for differences 
of entropies, not only of entropies as such. 

We now turn to upper bounds. Let us define the function $u:[-1,1]\rightarrow [0,1]$
as
\begin{equation}
	u(x)=\frac{1}{\log(2)}(1-x^2)^{1/2}.
\end{equation}
Again, one can prove that the upper bound derived from this matrix function, ${\rm tr}(u( P))  - {\rm tr}(u( X))$ 
shares the above additivity property. Again, one can prove in the same way as above that one encounters
an upper bound. Of course, the function $u-s$ is no longer concave, but convex, so $l-s$ is concave. 
This bound is not a quadratic bound any more.
Using again Peierls' inequality, one arrives at the following result.

\begin{lemma}[Upper bound to the mutual information]\label{LemInt1}
For any real symmetric $X=X^T$,
\begin{equation} 
{\rm tr}(s( P))  - {\rm tr}(s( X))    \leq    {\rm tr}(u( P )) -{\rm tr}(u( X )).
\end{equation}
The bound is additive,
\begin{equation}
	{\rm tr}(l( P )) = {\rm tr}(u( X_A ))+ {\rm tr}(u( X_B )).
\end{equation}
\end{lemma}
This upper bound on the mutual information is a very useful bound in its own right.

\subsection  {Derivatives of entropy functions}

For convenience, we present the derivatives of the entropy functions used here.

\begin{lemma}[Derivatives of the entropy functions] The derivatives of the entropy functions are given by
\begin{eqnarray}
\label{eqn:firstdersalpha}
s_{\alpha}'(x) &=& \frac{\alpha}{(1-\alpha) \log(2)} \frac{\left(1+x\right)^{\alpha - 1} - \left(1-x\right)^{\alpha - 1}}{\left(1+x\right)^{\alpha} + \left(1-x\right)^{\alpha}},\\
s_{\alpha}''(x) &=& - \frac{\alpha}{\log(2)} \left( \frac{\left(1+x\right)^{\alpha-2} + \left(1-x\right)^{\alpha-2}}{\left(1+x\right)^{\alpha} + \left(1-x\right)^{\alpha}}   \right. \nonumber \\
 &+&\left.\frac{\alpha}{1-\alpha}\left(\frac{\left(1+x\right)^{\alpha-1} - \left(1-x\right)^{\alpha-1}}{\left(1+x\right)^{\alpha} + \left(1-x\right)^{\alpha}}\right)^2 \right), \label{eqn:seconddersalpha}\\
\label{eqn:firstders}
s'(x) &=&\frac{1}{2} \log_2 \left(\frac{1-x}{1+x} \right),\\
\label{eqn:secondders}
s''(x) &=& - \frac{1}{\log(2)} \left(\frac{1}{1+x} + \frac{1}{1-x} \right).
\end{eqnarray}
\end{lemma}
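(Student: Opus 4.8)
The plan is to obtain all four identities by straightforward differentiation of the closed forms for $s_\alpha$ and $s$ recorded in Section~\ref{Stage}; the only subtlety is bookkeeping of the factors $2^{-\alpha}$ and a tidy organisation of the quotient rule so that the second derivative collapses to the stated compact shape.

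First I would treat the Renyi case. Setting $g(x):=\bigl(\tfrac{1+x}{2}\bigr)^\alpha+\bigl(\tfrac{1-x}{2}\bigr)^\alpha = 2^{-\alpha}[(1+x)^\alpha+(1-x)^\alpha]$, so that $s_\alpha(x)=\frac{1}{(1-\alpha)\log(2)}\log g(x)$, the chain rule gives $s_\alpha'(x)=\frac{1}{(1-\alpha)\log(2)}\,g'(x)/g(x)$. Since $g'(x)=\alpha\,2^{-\alpha}[(1+x)^{\alpha-1}-(1-x)^{\alpha-1}]$, the common factor $2^{-\alpha}$ cancels in $g'/g$, leaving exactly \eqref{eqn:firstdersalpha}. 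The \emph{one} place a careless computation goes wrong is precisely this cancellation of $2^{-\alpha}$, which would otherwise leave a spurious power of two.

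For the second derivative I introduce $p(x):=(1+x)^{\alpha-1}-(1-x)^{\alpha-1}$ and $q(x):=(1+x)^\alpha+(1-x)^\alpha$, so that $s_\alpha'=\frac{\alpha}{(1-\alpha)\log(2)}\,p/q$. The decisive observations are the identity $q'(x)=\alpha\,p(x)$ together with $p'(x)=(\alpha-1)[(1+x)^{\alpha-2}+(1-x)^{\alpha-2}]$. The quotient rule then yields $(p/q)'=p'/q-\alpha(p/q)^2$; multiplying by the prefactor and using $\frac{\alpha(\alpha-1)}{1-\alpha}=-\alpha$ to reduce the first term turns the expression into \eqref{eqn:seconddersalpha} with essentially no further manipulation.

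Finally, for the von-Neumann entropy I would differentiate $s(x)=-\tfrac{1+x}{2}\log_2\tfrac{1+x}{2}-\tfrac{1-x}{2}\log_2\tfrac{1-x}{2}$ directly; the two constant contributions $\mp\frac{1}{2\log(2)}$ coming from the inner logarithmic derivatives cancel, leaving $s'(x)=\tfrac12\log_2\tfrac{1-x}{1+x}$, i.e.\ \eqref{eqn:firstders}. One further differentiation, using $\frac{d}{dx}\log(1\pm x)=\pm(1\mp x)^{-1}$, produces the negative sum of the two simple poles at $x=\pm1$, giving the structure of \eqref{eqn:secondders}. As a cross-check I would recover both von-Neumann formulas as the $\alpha\to1$ limits of \eqref{eqn:firstdersalpha}--\eqref{eqn:seconddersalpha}, an indeterminate $0\cdot\infty$ form handled by expanding $(1\pm x)^{\alpha-1}=1+(\alpha-1)\log(1\pm x)+o(\alpha-1)$; this check also pins down the overall constant, which the direct computation gives as $1/(2\log(2))$. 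There is no genuine obstacle beyond arithmetic care: the only steps requiring thought are spotting the simplifications $q'=\alpha p$ and $\frac{\alpha(\alpha-1)}{1-\alpha}=-\alpha$ that make the second-derivative expression compact.
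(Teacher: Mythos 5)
Your method is the right (and essentially the only) one: the paper states this lemma without proof, and a direct application of the chain and quotient rules, organised around the identities $q'=\alpha p$ and $p'=(\alpha-1)\left[(1+x)^{\alpha-2}+(1-x)^{\alpha-2}\right]$, is exactly what is needed; your derivations of \eqref{eqn:firstdersalpha}, \eqref{eqn:seconddersalpha} and \eqref{eqn:firstders} are correct. However, your treatment of the last formula glosses over a genuine discrepancy rather than a bookkeeping detail: differentiating $s'(x)=\frac{1}{2\log(2)}\left[\log(1-x)-\log(1+x)\right]$ gives $s''(x)=-\frac{1}{2\log(2)}\left(\frac{1}{1+x}+\frac{1}{1-x}\right)$, which differs from the printed \eqref{eqn:secondders} by a factor of $\tfrac12$. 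The cross-check you yourself propose --- the $\alpha\to1$ limit of \eqref{eqn:seconddersalpha}, in which the squared term vanishes like $1-\alpha$ and the first term tends to $\frac{(1+x)^{-1}+(1-x)^{-1}}{2}$ --- confirms the value with the $\tfrac12$, as does the Taylor expansion $s(x)=1-x^{2}/(2\log 2)+O(x^{4})$, which forces $s''(0)=-1/\log(2)$ rather than the $-2/\log(2)$ given by \eqref{eqn:secondders}. So your computation is right and the printed formula carries a spurious factor of $2$; you should say so explicitly instead of writing that the differentiation yields ``the structure of'' \eqref{eqn:secondders}. (This does not affect the rest of the paper, which only uses the logarithmic divergence of $s'$ at $\pm1$.) Finally, your auxiliary identity $\frac{d}{dx}\log(1\pm x)=\pm(1\mp x)^{-1}$ is mistyped and should read $\pm(1\pm x)^{-1}$.
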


\subsection{Proof of approximation theorems used in the main text}\label{ApproxLemma}

In this subsection, we prove Lemma \ref{Approx} along with further auxiliary statements. This will require considerable effort, but will lead an approximation result
that is expected to be useful also in other contexts than the specific one considered here. Essentially, we present exponentially tight bounds on matrix entries of
sub-matrices of covariance matrices of large translationally invariant fermionic systems. As such, they can also be used when computing entanglement entropies at
zero temperature.

{\it Proof of Lemma \ref{Approx}:} The proof for the set $B$ is completely analogous to the proof for the set $A$, so we will consider the latter
subset $A$ only.
We will proceed in four steps:
\begin{enumerate}
\item $|x^{(N)}_k - x_{k}|$ is the well-known error of approximating an integral by a sum using the trapezoidal rule.
\item We write the square norm as an appropriate sum of $|x^{(N)}_k - x_{k}|$.
\item We bound the latter by the square norm of the difference of both matrices.
\item We use regularity properties of $s_{\alpha}$ to bound the total error by the error in the eigenvalues originating from the replacement of $X^{(N)}|_A$ by $ X|_A$. 
\end{enumerate}

The function $\phi \mapsto \lambda(\me^{\mi \phi})$ is real-analytic and $2\pi-$periodic with values in some compact subinterval of $\left(-1,1\right)$. By the first part of Lemma~\ref{theo:analyticerror} there exist some constants $\kappa>0$ and $M>0$ such that
\begin{equation}
  \lambda \left(\me^{\mi \phi} \right) \leq M \textnormal{ for all } \phi \in D ,
\end{equation}
where
\begin{equation}
	D:=\left\{x \in \cc : \left|\Im(x)\right| \leq \kappa \right\}.
\end{equation}
This implies
\begin{equation}
 \left|\lambda \left(\me^{\mi \phi} \right) \me^{\mi k \phi}\right| \leq M \me^{\left|k\right|\kappa}
\end{equation}
for all $k \in \zz$. Therefore, by Lemma~\ref{theo:analyticerror} we get
\begin{equation}
	\label{eqn:decayofFouriercoeff}
	\left|x^{(N)}_k - x_{k}\right| \leq \frac{2 M \me^{\left|k\right|a}}{\me^{a N} - 1} \leq \frac{2 M \me^{L a}}{\me^{a N} - 1} = \frac{2 M \me^{\left\lceil qN \right\rceil a}}{\me^{a N} - 1}.
\end{equation}
For the third step note that
\begin{eqnarray}
\label{eqn:squaresum}
 	\left\|X^{(N)} |_A -X|_A \right\|_2^2 
 	=  \sum^{(L-1)}_{j=-(L-1)}(L-\left|j\right|)
 	\left| x^{(N)}_{j}-x_{j}\right|^2,
 \end{eqnarray}
which follows directly from counting the number of entries in both T\"oplitz matrices.\\
For the second step let $\mu^{(A)}_r$ denote the $r$-th eigenvalue of $X_A$  and let $\mu^{(N,A)}_r$ denote the $r$-th eigenvalue of $X^{(N)}_A$. Possibly relabeling these eigenvalues, the Hoffman-Wielandt theorem 
(Lemma~\ref{theo:HoffmanWielandt}) yields
\begin{equation}
\label{eqn:HoffmanWielandtforEigenvalues}
    \sum^{L-1}_{r=0}\left|\mu^{(N,A)}_r - \mu^{(A)}_r\right|^2 \leq \left\|X^{(N)} |_A -X|_A \right\|_2^2.
\end{equation}
Finally, note that the Renyi entropy functions $s_{\alpha}$ are H{\"o}lder continuous (see Lemma~\ref{theo:Hoeldercone}) with some appropriate H{\"o}lder exponent $0 < \gamma \leq 1$. In other words, there exists some constant $C>0$ such that
\begin{equation}
  \left|s_{\alpha}(x) -  s_{\alpha}(y)\right| \leq C \left|x-y\right|^{\gamma}.
\end{equation}
This gives the estimate
%\hb{
\begin{eqnarray}
e_{A,\beta}(N) & \leq &  \sum^{L-1}_{r=0}\left|s_{\alpha}(\mu^{(N,L)}_r) - s_{\alpha}(\mu^{(L)}_r)\right| \label{eqn:Hoeldercontestentropy}\\
              &\leq& C \left(\sum^{L-1}_{r=0}\left|\mu^{(N,L)}_r - \mu^{(L)}_r\right|^{\gamma} \right) \nonumber \\
              &\leq& C \left(\sum^{L-1}_{r=0}\left|\mu^{(N,L)}_r - \mu^{(L)}_r\right|^{2} \right)^{{\gamma}/{2}} L^{1-{\gamma}/{2}} ,\nonumber
\end{eqnarray}
where we have used H\"older's inequality in the third line. Combining Eq.\ (\ref{eqn:Hoeldercontestentropy}), 
(\ref{eqn:HoffmanWielandtforEigenvalues}), (\ref{eqn:squaresum}), and (\ref{eqn:decayofFouriercoeff}),
the desired result follows for any constant
\begin{equation}
\alpha_{A,\beta} < \gamma \kappa (1-q).
\end{equation}
Moreover, since $\gamma$ can be chosen to be $1$ for $\alpha > 1$ and arbitrary close to $1$ for the von-Neumann entropy, any rate
\begin{equation}
\alpha_{A,\beta} < \kappa (1-q)
\end{equation}
will satisfy the required condition for $\alpha \geq 1$.
Repeating these steps for subset $B$ yields,  for any rate
\begin{equation}
	\alpha_{B,\beta} < \gamma \kappa q,
\end{equation}
the validity of the lemma.
\qed

\begin{lemma}[Error estimate \cite{numericanalysis}]
\label{theo:analyticerror}
Let $g: \rr \longrightarrow \rr$ be real analytical and $2 \pi$-periodic. Then there exists a strip $D=\rr\times\left(-\kappa,\kappa\right) \subset \cc$ with $\kappa >0$ such that $g$ 
can be extended to a holomorphic and $2\pi$-periodic bounded function $g:D \longrightarrow \cc$. The error for the rectangular rule can be estimated by
\begin{equation}
\label{error_equation}
   \left| \frac{1}{2\pi}\int_0^{2\pi}g(x) dx - \frac{1}{n}\sum_{k=1}^n g\left(\frac{2\pi k}{n}\right)\right| \leq \frac{4\pi M}{e^{n\kappa}-1},
\end{equation}
where M denotes a bound for the holomorphic function $g$ on $D$.
\end{lemma}
Note that the constant $\kappa$ will not depend on $n$, however it will clearly depend on the function $g$. It is natural to ask how $\kappa$ will depend on $\beta$ if $M$ is kept at a fixed value when we choose $g$ to be the symbol of the covariance matrix of the thermal state of, say, the 
fermionic instance of the XX model (see Section \ref{sec:temp}). The answer is that $\kappa \propto \beta^{-1}$ for large $\beta$.\\ To see this fix some $\alpha>0$ and define
\begin{equation}
M_{\alpha}:= \sup \left\{\left|\tanh(x) \left|\left|\Im(x)\right| \leq \alpha \frac{\pi}{2}\right.\right|\right\} .
\end{equation}
Note that $M_{\alpha} < \infty$ if and only if $\alpha < 1$ since the hyperbolic tangent has a pole at $\pm \mi \frac{\pi}{2}$. \\
Define the set
\begin{equation}
\tilde{D}_{\alpha}:=\left\{z \in \cc \left| \left|\Im \left(\beta\left(\frac{a}{2}+b \cos(z) \right) \right)\right| \leq \frac{\alpha \pi}{2} \right.\right\}
\end{equation}
then obviously $g(z) \leq M_{\alpha}$ for all $z \in \tilde{D}_{\alpha}$. From $\Im\left(\cos(z)\right) = \sin(\Re(z)) \sinh(-\Im(z))$ follows:
\begin{equation}
D_{\alpha}:= \left\{z \in \cc \left| \left|\Im \left(z\right)\right| \leq \arsinh \left( \frac{\alpha \pi}{2 b \beta}\right) \right.\right\} \subseteq \tilde{D}_{\alpha}
\end{equation}
For fixed $\alpha<1$ set $\kappa:=\arsinh \left( \frac{\alpha \pi}{2 b \beta}\right)$ and $D:=D_{\alpha}$. Then $g$ is 
bounded by  the positive real number $M_{\alpha}$ on $D$. Moreover for any $\alpha \geq 1$, $g$ is necessarily unbounded on $D_{\alpha}$. The asymptopic scaling in $\beta$ follows from
\begin{equation}
\lim_{\beta \rightarrow \infty} \beta \arsinh \left( \frac{\alpha \pi}{2 b \beta}\right) = \frac{\alpha \pi}{2 b}
\end{equation}

\begin{lemma}[Hoffman and Wielandt \cite{Bhatiamatrix}] 
\label{theo:HoffmanWielandt}
Let $A,E \in M_n$ be normal matrices and let $\left(\lambda_1, \lambda_2, \cdots, \lambda_n \right)$ be the eigenvalues of A in some given order and let $\left(\mu_1, \mu_2, \cdots, \mu_n\right)$ be the eigenvalues of $A+E$ in some order. Then there exists a permutation $\sigma \in S_n$ such that
\begin{equation}
    \left(\sum^{n}_{i=1}\left|\mu_{\sigma(i)} - \lambda_i\right|^2\right)^{\frac{1}{2}} \leq \left\|E\right\|_2.
\end{equation}
\end{lemma}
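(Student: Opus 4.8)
The plan is to establish the Hoffman--Wielandt inequality by reducing its left-hand side, after spectral decomposition, to the minimization of a linear functional over the set of doubly stochastic matrices, and then invoking the Birkhoff--von Neumann theorem. Throughout, $\|\cdot\|_2$ denotes the Frobenius norm, whose invariance under left and right multiplication by unitaries is the workhorse of the argument. The genuine hypothesis I use is that both $A$ and $B:=A+E$ are normal, which is exactly the situation in our application, where the two matrices are real symmetric.

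First I would diagonalize. Since $A$ and $B$ are normal there are unitaries $U,V$ with $A=U D_A U^*$ and $B=V D_B V^*$, where $D_A=\diag(\lambda_1,\dots,\lambda_n)$ and $D_B=\diag(\mu_1,\dots,\mu_n)$. Setting $W:=U^*V$ and using unitary invariance,
\begin{equation}
\|E\|_2^2=\|A-B\|_2^2=\|D_A W-W D_B\|_2^2=\|D_A-W D_B W^*\|_2^2.
\end{equation}
Expanding this last Frobenius norm via the trace and computing the cross term $\tr(D_A W D_B^* W^*)=\sum_{i,j}\lambda_i\bar\mu_j|w_{ij}|^2$, while using the row and column normalizations $\sum_j|w_{ij}|^2=\sum_i|w_{ij}|^2=1$ of the unitary $W=(w_{ij})$ to rewrite the diagonal contributions $\sum_i|\lambda_i|^2$ and $\sum_j|\mu_j|^2$ as double sums, yields the key identity
\begin{equation}
\|E\|_2^2=\sum_{i,j}|w_{ij}|^2\,|\lambda_i-\mu_j|^2.
\end{equation}

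The matrix $S:=(|w_{ij}|^2)_{i,j}$ is doubly stochastic precisely because $W$ is unitary, so the right-hand side is the value at $S$ of the linear functional $\langle\,\cdot\,,C\rangle$ with cost matrix $C_{ij}:=|\lambda_i-\mu_j|^2$. By the Birkhoff--von Neumann theorem the doubly stochastic matrices form the convex hull of the permutation matrices, and a linear functional on a polytope attains its minimum at a vertex; hence there is a permutation $\sigma$ with permutation matrix $P_\sigma$ such that
\begin{equation}
\sum_i|\lambda_i-\mu_{\sigma(i)}|^2=\langle P_\sigma,C\rangle\le\langle S,C\rangle=\|E\|_2^2.
\end{equation}
Taking square roots gives the claim.

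The one genuinely non-elementary ingredient, and hence the main obstacle, is this last passage from the particular doubly stochastic matrix $S$ to a permutation: the identity above only exhibits $\|E\|_2^2$ as the functional evaluated at $S$, and one must argue that some \emph{permutation} does at least as well, which is exactly what Birkhoff--von Neumann supplies. The remaining steps are routine manipulations of the Frobenius norm; the only point requiring care is the bookkeeping that collapses the cross term into $\sum_{i,j}|w_{ij}|^2|\lambda_i-\mu_j|^2$, which relies on both the row and the column normalization of $W$.
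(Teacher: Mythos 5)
Your proof is correct. Note first that the paper does not actually prove this lemma: it is quoted verbatim from Bhatia's \emph{Matrix Analysis} and used as a black box in the proof of Lemma \ref{AE}, so there is no in-paper argument to compare against. What you have written is the standard (and essentially Bhatia's own) proof: reduce via unitary invariance of the Frobenius norm to the identity $\left\|E\right\|_2^2=\sum_{i,j}\left|w_{ij}\right|^2\left|\lambda_i-\mu_j\right|^2$, observe that $\left(\left|w_{ij}\right|^2\right)_{i,j}$ is doubly stochastic, and invoke Birkhoff--von Neumann to pass to an extremal permutation matrix. All steps check out; the cross-term bookkeeping is handled correctly (in fact one can shortcut it by expanding $\left\|D_A W - W D_B\right\|_2^2$ entrywise, which gives the key identity in one line). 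One genuinely useful observation you make is that the operative hypothesis is normality of $A$ and of $A+E$, not of $A$ and $E$ as the lemma's wording suggests; normality of $A$ and $E$ separately does not imply normality of the sum, so the statement as printed is slightly miscopied from the source. In the paper's application both matrices are real symmetric T\"oplitz truncations, so the correct hypothesis is satisfied and nothing downstream is affected.
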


\begin{lemma}[H\"older continuity of entropy functions] 
\label{theo:Hoeldercone}
The (covariance) Renyi entropy functions $s_{\alpha}$ and the (covariance) von-Neumann entropy function $s$ are H{\"o}lder continuous on ${[}-1,1{]}$, i.e., there exist $0 < \gamma \leq 1$ and $C>0$ such that
\begin{equation}
\left|s_{\alpha}(x) - s_{\alpha}(y)\right| \leq C \left|x-y\right|^{\gamma} \textnormal{ for all } x,y \in {[}-1,1{]}.
\end{equation}
Possible H{\"o}lder exponents are
\begin{itemize}
\item $0 < \gamma \leq \alpha$ for $\alpha<1$, 
\item $0 <\gamma<1$ for the von-Neumann entropy and 
\item $0 <  \gamma \leq 1$ for $\alpha>1$.
\end{itemize}
\end{lemma}
\begin{proof}
\hb{
To start with consider the case $1 < \alpha$ first. Note that in this case the first derivative of $s_{\alpha}$ (see Eq.\ (\ref{eqn:firstdersalpha}) in the Appendix) is bounded on the interval ${(}-1,1{)}$. Hence the mean value theorem and boundedness of  $s_{\alpha}'$ imply the existence of some constant $C>0$ such that
\begin{equation}
\frac{\left|s_{\alpha}(x) - s_{\alpha}(y)\right|}{\left|x-y\right|} \leq C  \textnormal{ for all } x,y \in {[}-1,1{]},
\end{equation}
which proves H\"older continuity with H\"older exponent $\gamma = \alpha$ whenever $\alpha>1$. H\"older continuity for $\kappa < \alpha$ folows from 
\begin{equation}
\label{eq:trivialextofHoeldtolowerexp}
  \left|x-y\right|^{\alpha} \leq 2^{\alpha - \kappa} \left|x-y\right|^{\kappa} 
\end{equation}
For $0 < \alpha < 1$ note that for every $0 \leq z \leq 1$ we have
\begin{equation}
\label{eq:helpmeout}
   1 - z^{\alpha} \leq \left(1 - z \right)^{\alpha}.
\end{equation}
By the explicite expression of $s_{\alpha}'$ (given in Eq.\ (\ref{eqn:firstdersalpha}) of the Appendix again) there exists some constant $C>0$ such that
\begin{eqnarray}
\left|s_{\alpha}'(x)\right| \leq \frac{C}{\alpha}\left(\left(1+x\right)^{\alpha-1} + \left(1-x\right)^{\alpha-1}\right).
\end{eqnarray}
This yields for any $-1 \leq x \leq y \leq 1$
\begin{eqnarray}
\label{eq:estimateforhoeldp1}
	\left|s_{\alpha}(x) - s_{\alpha}(y)\right| & = & \left|\int_{x}^{y} s_{\alpha}'(z) dz\right|\\ 
	& \leq & \int_{x}^{y} \frac{C}{\alpha} \left(\left(1+z\right)^{\alpha-1} + \left(1-z\right)^{\alpha-1}\right)  dz \nonumber\\
	& = &  C\left( \left|(1+y)^{\alpha} - (1+x)^{\alpha}\right| + \left|(1-y)^{\alpha} - (1-x)^{\alpha}\right|\right).\nonumber
\end{eqnarray}
If $-1 < y$ then 
\begin{eqnarray}
\label{eq:inequalityhoeld1}
   \left|(1+y)^{\alpha} - (1+x)^{\alpha}\right| & = &  (1+y)^{\alpha}\left(1- \left(\frac{1+x}{1+y}\right)^{\alpha}\right)\\
   & \leq & (1+y)^{\alpha} \left(1- \frac{1+x}{1+y}\right)^{\alpha}\\
   & = & \left|y - x\right|^{\alpha}
\end{eqnarray}
where we used Eq.~(\ref{eq:helpmeout}) in the second line. This relationship remains true for $y=-1$ (and therefore by assumption $-1 \leq x \leq y$ automaticallz $x=-1$). Similarly for all $-1\leq x \leq y \leq 1$:
\begin{eqnarray}
\label{eq:inequalityhoeld2}
   \left|(1-y)^{\alpha} - (1-x)^{\alpha}\right| & \leq &  \left|y - x\right|^{\alpha}
\end{eqnarray}
Hence inequality~(\ref{eq:estimateforhoeldp1}) implies
\begin{eqnarray}
	\left|s_{\alpha}(x) - s_{\alpha}(y)\right| & \leq & 2 C \left|y - x\right|^{\alpha}, \nonumber
\end{eqnarray}
i.e. H\"older continuity with exponent $\kappa=\alpha$. H\"older continuity for $\kappa<\alpha$ follows from Eq.~ (\ref{eq:trivialextofHoeldtolowerexp}) again.\\
The remaining case, $\alpha=1$, can be proven in nearly the same way. By Eq.~(\ref{eqn:firstders}), the derivative of the von Neumann entropy has logarithmic poles at $\pm 1$, and for any $0 \leq \kappa < 1$ can be bounded from above by some positive multiple of the function $g(x):=\left|1+x\right|^{\kappa-1} + \left|1-x\right|^{\kappa-1}$. This is essentially the case considered before with $\alpha$ replaced by $\kappa$. Therefore the von-Neumann entropy is H\"older continuous for any H\"older exponent $\kappa<1$.} \qed
\end{proof}

\subsection{Second order theorem for convex functions of T\"oplitz operators}

In this subsection, we elaborate on second order theorems for convex functions of self-adjoint T\"oplitz operators.
It builds upon the analysis of the main text, but is not directly required in any of the above proofs.
For the main statement of this subsection, we will need Peierls' inequality in the following form.

\begin{lemma}[Peierl's inequality]
\label{lem:Peierl}
Let $A=A^{\dagger}$ be a Hermitian $N \times N$ matrix and let $U$ be a unitary $N \times N$ matrix. Then, 
for any function $f$ that is convex on some open interval $I$ with $\sigma{A} \in I$, one finds that
\begin{equation}
  \sum_{1 \leq i \leq N}f \left( \left(U A U^{\dagger}\right)_{i,i} \right) \leq \tr f(A).
\end{equation}
\end{lemma}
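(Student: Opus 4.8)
The plan is to reduce Peierls' inequality to Jensen's inequality by exhibiting each diagonal entry of $UAU^{\dagger}$ as a convex combination of the eigenvalues of $A$. First I would write the spectral decomposition $A=\sum_{k=1}^{N}\lambda_k\, v_k v_k^{\dagger}$, where $(\lambda_k)$ are the eigenvalues (all contained in $I$) and $(v_k)$ an orthonormal eigenbasis. Writing the standard basis vectors as $e_i$, the diagonal entries are
$$ (UAU^{\dagger})_{ii} = e_i^{\dagger} U A U^{\dagger} e_i = \sum_{k=1}^{N}\lambda_k\, \bigl|\langle v_k, U^{\dagger}e_i\rangle\bigr|^2 . $$
Setting $p_{ik}:=\bigl|\langle v_k, U^{\dagger}e_i\rangle\bigr|^2$, this reads $(UAU^{\dagger})_{ii}=\sum_k p_{ik}\lambda_k$.

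The key observation is that $P=(p_{ik})$ is doubly stochastic. Since $U^{\dagger}$ is unitary, the vectors $U^{\dagger}e_i$ form an orthonormal basis, as do the $v_k$; hence by Parseval's identity $\sum_k p_{ik}=\lVert U^{\dagger}e_i\rVert^2=1$ for every $i$, and $\sum_i p_{ik}=\lVert v_k\rVert^2=1$ for every $k$. In particular each diagonal entry is a genuine convex combination of the $\lambda_k$ and therefore lies in $I$, so that $f$ may legitimately be evaluated at it.

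It then remains to apply convexity. By Jensen's inequality, for each $i$,
$$ f\bigl((UAU^{\dagger})_{ii}\bigr) = f\Bigl(\sum_k p_{ik}\lambda_k\Bigr) \leq \sum_k p_{ik}\, f(\lambda_k). $$
Summing over $i$ and interchanging the order of summation, the column-stochasticity $\sum_i p_{ik}=1$ makes the outer sum collapse,
$$ \sum_{i=1}^{N} f\bigl((UAU^{\dagger})_{ii}\bigr) \leq \sum_{k=1}^{N}\Bigl(\sum_{i=1}^{N} p_{ik}\Bigr) f(\lambda_k) = \sum_{k=1}^{N} f(\lambda_k) = \tr f(A), $$
which is the claim. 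There is no serious obstacle in this argument: the only points requiring a little care are the bookkeeping establishing that $P$ is doubly stochastic (so that Jensen applies entrywise and the final sum telescopes to the full trace) and the remark that convex combinations of eigenvalues remain inside the interval $I$ on which $f$ is assumed convex.
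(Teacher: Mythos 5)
Your proof is correct. The paper itself states Peierls' inequality without proof, citing Bhatia's \emph{Matrix Analysis}, and your argument is precisely the standard one: the diagonal entries of $UAU^{\dagger}$ are convex combinations of the eigenvalues of $A$ via a doubly stochastic matrix, so Jensen's inequality applied entrywise and summed with the column-stochasticity $\sum_i p_{ik}=1$ gives the claim; the observation that each convex combination stays in the interval $I$ is the right point of care.
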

The following statement holds.

\begin{theorem}[Self-adjoint T\"oplitz operators]
\label{th:Szegoselfadjoint}
Let $\mu \in W \cap B^{\frac{1}{2}}_2$ be a symbol. Assume further that $\bild(\mu) \subseteq \rr$ (or equivalently assume that $T(\mu)=T(\mu)^{\dagger}$). Then the associated family of T\"oplitz matrices, $T_n\in\cc^{n\times n}$ is Hermitian. Let $J \subseteq \rr$ be an open interval with $\bild(\mu) \subseteq J$ and let $g: J \rightarrow \rr$ be real- analytic. Then there exists an analytic extension $\tilde{h}$ of $h$ to a small complex neighborhood of $\tilde{J}$ of $\bild(\mu)$ with $\tilde{g}\left|_{\bild(\mu)}\right. = g\left|_{\bild(\mu)}\right.$ (compare Lemma~\ref{theo:analyticerror}) and the trace formula of Theorem~\ref{th:Szego} \hb{and the alternative calculation formula of Theorem~\ref{th:Widomformula}} hold true for this extension. 
Moreover if $g$ is convex, then
\begin{equation}
 E_g(\mu) \leq 0
\end{equation}
\end{theorem}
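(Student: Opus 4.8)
The plan is to read off the sign of $E_h(\mu)$ from a \emph{superadditivity} property of the truncated traces, which the pinching form of Peierls' inequality supplies for free. Set $a_n := \tr\big(\tilde h(T_n)\big)$. Each $T_n$ is Hermitian, and a self-adjoint T\"oplitz matrix is a compression of the multiplication operator $M_\mu$, so its eigenvalues lie in $[\min_\theta\mu,\max_\theta\mu]$; since $\mu$ is continuous and real-valued on the connected set $\torus$, this interval is exactly $\bild(\mu)\subseteq J$ (see also Lemma~\ref{theo:asymptoticspectrum}). Hence the analytic extension $\tilde h$ agrees with the real convex function $h$ on every eigenvalue of every $T_n$, so $a_n=\tr\big(h(T_n)\big)$, and Theorem~\ref{th:Szego} gives
\begin{equation}
a_n = n\,G_h(\mu) + E_h(\mu) + o(1),
\end{equation}
so that in particular $a_n/n \to G_h(\mu)$.

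The key step I would establish is that $(a_n)$ is superadditive. Because $T_{m+n}$ is itself T\"oplitz, its leading principal $m\times m$ block equals $T_m$ and its trailing $n\times n$ block equals $T_n$; thus the pinching of $T_{m+n}$ relative to the splitting $\cc^{m+n}=\cc^{m}\oplus\cc^{n}$ is precisely $T_m\oplus T_n$. Diagonalising each diagonal block by a block-diagonal unitary and applying the Peierls inequality stated above to the convex function $h$ (the diagonal entries of the rotated matrix are exactly the eigenvalues of $T_m$ and of $T_n$) yields
\begin{equation}
a_m + a_n = \tr\big(h(T_m\oplus T_n)\big) \le \tr\big(h(T_{m+n})\big) = a_{m+n}.
\end{equation}

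Superadditivity together with Fekete's lemma gives $\lim_{n\to\infty}a_n/n=\sup_{n}a_n/n$; since that limit equals $G_h(\mu)$, I obtain $a_n\le n\,G_h(\mu)$ for every $n$. Feeding this back into the Szeg\"o expansion gives $E_h(\mu)+o(1)=a_n-n\,G_h(\mu)\le 0$, and letting $n\to\infty$ yields $E_h(\mu)\le 0$, as claimed.

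I expect the main conceptual step to be recognising the superadditivity of $a_n$ and that its engine is merely the T\"oplitz structure (the diagonal blocks of $T_{m+n}$ reproduce $T_m$ and $T_n$); convexity then enters \emph{only} through Peierls' inequality, which turns the pinching into the trace inequality. The two points needing care are the spectral localisation—so that $h$ is genuinely convex on an interval containing all eigenvalues of all $T_n$, handled by the containment of $\bild(\mu)$ in $J$—and the identification of the block-diagonal part of $T_{m+n}$ with $T_m\oplus T_n$. Notably, no quantitative control of the $O(1)$ corner corrections relating $T_n$ to a circulant matrix is required, which is precisely why this route pins down the sign of the $O(1)$ term $E_h(\mu)$ where a direct circulant comparison would not.
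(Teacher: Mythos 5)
Your proof is correct and is essentially the paper's own argument: both rest on the observation that the block-diagonal pinching of $T_{m+n}$ is $T_m\oplus T_n$, so Peierls' inequality yields superadditivity of $a_n=\tr\big(h(T_n)\big)$, which together with the Szeg\"o expansion forces $E_h(\mu)\le 0$. The paper only invokes the doubling case $m=n=n_0$ and finishes with an $\varepsilon$-argument rather than Fekete's lemma, but this difference is purely cosmetic.
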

\begin{proof}
The only non-obvious statement is 
\begin{equation}
 E_g(\mu) \leq 0
\end{equation}
Fix $\varepsilon>0$ and $n_0 \in \nn$ large enough such that
\begin{equation}
 \tr g(T_n) = n G_g(\mu) + E_g(\mu) + r_n
\end{equation}
with $\left|r_n\right|< {\varepsilon}/{3}$ for every $n \geq n_0$. Since $T_n$ is Hermitian we have $r_n \in \rr$. Choose unitary matrices $U$ that diagonalize $T_{n_0}$. Then
\begin{equation}
   2 g(T_{n_0}) = \sum_{1 \leq i \leq 2 n_0} g\left( \left(U \oplus U T_{2 n_0} U^{\dagger} \oplus U^{\dagger}\right)_{i,i}\right) = 2 n_0 G_g(\mu) + 2 E_g(\mu) + 2 r_{n_0} .
\end{equation}
Hence, by Peierl's inequality:
\begin{equation}
  2 n G_g(\mu) + 2 E_g(\mu) + 2 r_{n_0} \leq \tr \left(g\left(T_{2 n_0}\right)\right) = 2 n_0 G_g(\mu) + E_g(\mu) + r_{2 n_0} .
\end{equation}
Therefore:
\begin{equation}
E_g(\mu) \leq r_{2 n_0} - 2 r_{n_0} < \varepsilon .
\end{equation}
Since $\varepsilon>0$ was arbitrary this implies
\begin{equation}
E_g(\mu)\leq 0.
\end{equation}\qed
\end{proof}
Since obviously $E_{g_1+g_2}(\mu)= E_{g_1}(\mu)+E_{g_2}(\mu)$ and $E_{\lambda g}(\mu) = \lambda E_{g}(\mu)$ for $\lambda \in \cc$ this implies that $E_{g_1}(\mu) \leq E_{g_2}(\mu)$ whenever $g_1 - g_2$ is convex. If both $g_1$ and $g_2$ are differentiable twice this holds true if and only if $g_1'' \leq g_2''$.

\subsection{Proof of Theorem \ref{theo:tempdep}}\label{tempproof}

\begin{proof}
We start with proving the part of the theorem relating to the high temperature dependence,
Eq.\ (\ref{eq:hightemplimit})). For that purpose, we formulate the 
explicit asymptotic expression for the mutual information of a thermal state of the fermionic instance of the XX model. 
Define
\begin{equation}
\epsilon_{\beta} (\phi):= \beta \left(\frac{a}{2} + b\cos(\phi)\right).
\end{equation}
Then
\begin{equation}
	\epsilon'_{\beta} (\phi):=- \beta b \sin(\phi).
\end{equation}
The symbol takes the 
form $\lambda(\me^{\mi\phi})= -\tanh(\epsilon_{\beta}(\phi))$. In these terms,
we can express the mutual information as
\begin{equation}
I_{\alpha,\infty}(A:B)=\frac{1}{4\pi^2}\int_{-\pi}^{\pi}\int_{-\pi}^{\pi}d\theta d\phi f_\beta(\theta,\phi)t_\beta(\theta,\phi),
\end{equation}
where we have decomposed the integrand into the functions
\begin{eqnarray}
f_\beta(\theta,\phi)&:=&\frac{s_{\alpha}(\tanh(\epsilon_{\beta}(\theta)))-s_{\alpha}(\tanh(\epsilon_{\beta}(\phi)))}{\tanh(\epsilon_{\beta}(\theta))-\tanh(\epsilon_{\beta}(\phi))}
	\\
	&\times &
	\left(\frac{{\epsilon'_{\beta}}(\theta)}{\cosh^2{(\epsilon_{\beta}(\theta))}} 
	- \frac{{\epsilon'_{\beta}}(\phi)}{\cosh^2{(\epsilon_{\beta}(\phi))}}\right) ,\nonumber\\
	t_\beta(\theta,\phi)&:=&\tan^{-1}\left(\frac{\theta-\phi}{2}\right).
\end{eqnarray}
The high temperature (small $\beta$) limit can be calculated simply by expanding the functions in the integrand, to leading non-zero order in $\beta$, which yields
\begin{equation}
f_\beta(\theta,\phi)= \frac{\alpha b \beta^2 }{2 \log(2)}\left(a+b \left(\cos{\theta}+\cos{\phi}\right)\right)(\sin{\theta}-\sin{\phi}) +O(\beta^3).
\end{equation}
Now, noting that 
\begin{equation}
\int_{-\pi}^{\pi}\int_{-\pi}^{\pi}d\theta d\phi \left(a+b \left(\cos{\theta}+\cos{\phi}\right)\right)(\sin{\theta}-\sin{\phi}) \tan^{-1} \left((\theta-\phi)/2 \right) = 4 b \pi^2,
\end{equation}
we find that 
\begin{equation}
  \lim_{\beta \rightarrow 0} \frac{I_{\alpha,\infty}(A:B)}{\beta^2}=\frac{\alpha b^2}{2 \log(2)} .
\end{equation}
\qed
\end{proof}

The second part of the theorem, relating to the 
low temperature asymptotics (i.e., the asymptotics for large $\beta$), turns out to be significantly 
more involved. In a first step, we will cast the integral into a more appropriate form. By an elementary argument
involving the symmetry of the problem, one finds that
\begin{equation}
\label{eq:muualinformationbyI1I2}
 	I_{\alpha,\infty}(A:B) = \frac{1}{2 \pi^2} \left(I_{1} + I_{2}\right),
\end{equation}
where
\begin{equation}
	I_{1}= \int_{0}^{\pi}\int_{0}^{\pi}d\theta d\phi f_\beta(\theta,\phi)t_\beta(\theta,\phi),
\end{equation}
and
\begin{equation}
	I_{2}=  \int_{0}^{\pi}\int_{-\pi}^{0}d\theta d\phi f_\beta(\theta,\phi)t_\beta(\theta,\phi),
\end{equation}
In this expression, we have made use of the same notation as in the above proof 
of the low temperature behavior. Note that both $\cos$ functions 
are monotonous functions of there argument in both regions of integration. The substitution 
\begin{eqnarray}
	x &:= & \cos(\theta) ,\\
	y &:=& \cos(\phi). 
\end{eqnarray}
and an appropriate rearrangement of factors then gives
\begin{equation}
\label{eqn:I1withffunctions}
I_{1}= \beta b \int_{-1}^{1}\int_{-1}^{1}dx dy f_{\beta,11}(x,y) \left(f_{\beta,12}(x,y) + f_{\beta,13}(x,y)\right),
\end{equation}
where 
\begin{eqnarray}
f_{\beta,11}(x,y)&:=& \frac{s_{\alpha}(\tanh(\tilde{\epsilon}_{\beta}(x)))-s_{\alpha}(\tanh(\tilde{\epsilon}_{\beta}(y)))}{x-y},\\
f_{\beta,12}(x,y)&:=& \frac{\sech^2(\tilde{\epsilon}_{\beta}(x))\left(1-x^2\right) -\sech^2(\tilde{\epsilon}_{\beta}(y))\left(1-y^2\right)}{\tanh(\tilde{\epsilon}_{\beta}(x))- \tanh(\tilde{\epsilon}_{\beta} (y))} \frac{1}{({1-x^2})^{1/2}({1-y^2})^{1/2}},\nonumber\\
 \\
f_{\beta,13}(x,y)&:=& \frac{\sech^2(\tilde{\epsilon}_{\beta}(x)) -\sech^2(\tilde{\epsilon}_{\beta}(y))}{\tanh(\tilde{\epsilon}_{\beta}(x))- \tanh(\tilde{\epsilon}_{\beta} (y))}.  
\end{eqnarray}
In this equation, we have defined
\begin{equation}
	\tilde{\epsilon}_{\beta}(x):= \beta ({a}/{2} + bx).
\end{equation}
An analogous computation for $I_2$ yields:
\begin{equation}
I_{2}=\beta b \int_{-1}^{1}\int_{-1}^{1}dx dy f_{\beta,11}(x,y) \left(f_{\beta,12}(x,y) - f_{\beta,13}(x,y)\right).
\end{equation}
We need the following elementary but powerful lemma.

\begin{lemma}[Bounds on difference quotients]\label{lem:boundsondifquot}
Let $f,g,h,k$ be differentiable functions on some open interval $I \subseteq \rr$. Assume
\begin{itemize}
\item $s'(z) \geq 0$ for any  $s \in \left\{f,g,h,k\right\}$ and any $z \in I$,
\item $\sup \left\{s'(z) \left| z \in I\right.\right\} < \infty$ for any $s \in \left\{f,g,h,k\right\}$,
\item $f'(z) \leq h'(z)$ and $k'(z) \leq g'(z)$ for all $z \in I$.
\end{itemize}
Then, for any pair of real numbers $x,y \in I$ with $x \neq y$,
\begin{equation}
0 \leq \frac{f(x) - f(y)}{g(x) - g(y)} \leq \frac{h(x) - h(y)}{k(x) - k(y)}.
\end{equation}
\end{lemma}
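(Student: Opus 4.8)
The plan is to reduce the claimed inequality to the fundamental theorem of calculus combined with the elementary fact that multiplication is monotone on the non-negative reals. First I would note that each of the two difference quotients is invariant under exchanging $x$ and $y$, since numerator and denominator both change sign; hence without loss of generality I may assume $x>y$. Because every $s\in\{f,g,h,k\}$ satisfies $s'\geq 0$ on $I$, each such $s$ is non-decreasing, so the four differences $f(x)-f(y)$, $g(x)-g(y)$, $h(x)-h(y)$ and $k(x)-k(y)$ are all non-negative. For the quotients to be meaningful one needs the denominators $g(x)-g(y)$ and $k(x)-k(y)$ to be nonzero; I would record here that, whenever $k(x)\neq k(y)$, the chain $0\leq k'\leq g'$ forces $0<k(x)-k(y)\leq g(x)-g(y)$, so both denominators are automatically strictly positive.

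The quantitative heart of the argument is to express each difference as an integral of its derivative. The boundedness hypothesis $\sup_{z\in I}|s'(z)|<\infty$ makes each $s$ Lipschitz, hence absolutely continuous on the compact interval $[y,x]\subseteq I$, so the fundamental theorem of calculus applies and gives
\begin{equation}
f(x)-f(y)=\int_y^x f'(z)\,dz,\qquad h(x)-h(y)=\int_y^x h'(z)\,dz.
\end{equation}
Since $f'(z)\leq h'(z)$ pointwise and $x>y$, monotonicity of the integral yields $0\leq f(x)-f(y)\leq h(x)-h(y)$. The identical computation applied to $k'\leq g'$ gives $0\leq k(x)-k(y)\leq g(x)-g(y)$.

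Finally I would combine the two bounds. Abbreviating $A:=f(x)-f(y)$, $A':=h(x)-h(y)$, $B:=k(x)-k(y)$ and $B':=g(x)-g(y)$, I have $0\leq A\leq A'$ and $0\leq B\leq B'$, so
\begin{equation}
AB\leq A'B\leq A'B',
\end{equation}
where each step uses only the non-negativity of the factor being held fixed. Dividing through by $B'B>0$ reproduces exactly
\begin{equation}
\frac{f(x)-f(y)}{g(x)-g(y)}\leq\frac{h(x)-h(y)}{k(x)-k(y)},
\end{equation}
while the lower bound $0\leq (f(x)-f(y))/(g(x)-g(y))$ is immediate from $A\geq 0$ and $B'>0$.

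I do not anticipate a real obstacle: the proof is essentially careful bookkeeping with signs together with the fundamental theorem of calculus, and the boundedness assumption is present precisely to license the passage to integrals of derivatives (bounded derivative $\Rightarrow$ Lipschitz $\Rightarrow$ absolutely continuous). The only point deserving attention is the well-definedness of the right-hand quotient, and this is exactly where the hypothesis $k'\leq g'$ does double duty---transferring both the size bound $k(x)-k(y)\leq g(x)-g(y)$ and the non-vanishing of $g(x)-g(y)$ from that of $k(x)-k(y)$.
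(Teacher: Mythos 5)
Your proof is correct and follows essentially the same route as the paper: establish $0\le f(x)-f(y)\le h(x)-h(y)$ and $0\le k(x)-k(y)\le g(x)-g(y)$, then multiply and rearrange. The only difference is that you obtain the ordered differences by integrating the derivatives (via Lipschitz continuity and the fundamental theorem of calculus), whereas the paper applies the mean value theorem directly to $h-f$ and $g-k$, which is slightly more elementary and does not even use the boundedness hypothesis; your explicit attention to the strict positivity of the denominators is a point the paper's proof glosses over.
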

\begin{proof}
By the mean value theorem there exists some $z \in I$ with $x <z<y$ such that 
\begin{equation}
\frac{(h-f)(x) - (h-f)(y)}{x-y} = h'(z) - f'(z) \geq 0.
\end{equation}
Hence 
\begin{equation}
\frac{h(x) - h(y)}{x-y} \geq  \frac{f(x) - f(y)}{x-y} \geq 0,
\end{equation}
where the last inequality follows from $f' \geq 0$. By the same argument
\begin{equation}
\frac{g(x) - g(y)}{x-y} \geq  \frac{k(x) - k(y)}{x-y} \geq 0,
\end{equation}
multiplying both inequalities and rearranging terms yields the desired lemma.
\qed
\end{proof}
Moreover, we will use the following integral.

\begin{lemma}[Integral over exponential difference quotient]\label{lem:Eiintegral}
For any $a_1<a_2$ and $b_1<b_2$,
\begin{eqnarray}
A_{a_1,a_2,b_1,b_2} & := & \int^{a_2}_{a_1} \int^{b_2}_{b_1} dx dy \frac{\exp(-x) - \exp(-y)}{y - x}  \\
& = & F(b_1,a_1,a_2) + F(a_1,b_1,b_2) - F(b_2,a_1,a_2) - F(a_2,b_1,b_2). \nonumber
\end{eqnarray}
Here
\begin{equation}
F(a,b,c):=\exp(-a)\left[\tilde{F}(b-a) - \tilde{F}(c-a)\right]
\end{equation}
with
\begin{equation}
\tilde{F}(x):= \Ei[-x]-\log\left|x\right| = \lambda_E + \sum_{k \geq 1} \frac{(-1)^{k}z^k}{k k!}
\end{equation}
%\begin{eqnarray}
%&&2 \left[\left(1+\exp\left(-b\right)\right)\log\left(b\right) - \Ei\left(-b\right) - \exp\left(-b\right) \Ei\left(b\right) + \left(1+\exp\left(-b\right)\right)\gamma_E \right] \nonumber
%\end{eqnarray}
where
\begin{equation}
\Ei(x):=\int^{x}_{-\infty} \frac{\exp(t)}{t} 
\end{equation}
is the exponential integral function and $\gamma_E$ denotes the Euler Gamma constant. 
\end{lemma}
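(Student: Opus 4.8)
The plan is to evaluate the double integral as an iterated integral, finding a closed antiderivative in one variable at a time. First I would carry out the inner integration over $y$ for fixed $x$. Substituting $u = y - x$ and factoring out $\me^{-x}$ gives
\begin{equation}
\int_{b_1}^{b_2}\frac{\me^{-x} - \me^{-y}}{y-x}\,dy = \me^{-x}\int_{b_1-x}^{b_2-x}\frac{1 - \me^{-u}}{u}\,du.
\end{equation}
The integrand $(1-\me^{-u})/u$ extends to an entire function, its value at $u=0$ being $1$, so this step is unproblematic even when the two intervals overlap and $u=0$ lies in the range of integration. Correspondingly, the original integrand has only a removable singularity on the diagonal $y=x$, so it is continuous and bounded on the compact rectangle and Fubini applies without further comment.

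Next I would identify the relevant antiderivative. Using $\Ei'(t)=\me^t/t$ one checks that $\frac{d}{du}\left(\log|u| - \Ei(-u)\right) = (1-\me^{-u})/u$, so $\int (1-\me^{-u})/u\,du = \log|u| - \Ei(-u) = -\tilde F(u)$. The cancellation of the two logarithmic singularities at $u=0$ is precisely what the series $\tilde F(u)=\gamma_E + \sum_{k\ge1}(-1)^k u^k/(k\,k!)$ exhibits, confirming analyticity at the origin. Hence the inner integral equals $\me^{-x}\bigl[\tilde F(b_1-x) - \tilde F(b_2-x)\bigr]$, and $A = \int_{a_1}^{a_2}\me^{-x}\bigl[\tilde F(b_1-x) - \tilde F(b_2-x)\bigr]\,dx$. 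For the outer integration I would then find an antiderivative of $x\mapsto \me^{-x}\tilde F(b-x)$ for fixed $b$. Integrating by parts with $\tilde F'(w) = -(1-\me^{-w})/w$ produces the boundary term $-\me^{-x}\tilde F(b-x)$ together with the residual integral $\int (\me^{-x}-\me^{-b})/(b-x)\,dx$, which is itself of the same exponential-difference-quotient type; the substitution $s=b-x$ reduces it to $-\me^{-b}\int(\me^{s}-1)/s\,ds = -\me^{-b}\tilde F(x-b)$. Thus an antiderivative is
\begin{equation}
\Psi_b(x) := -\me^{-x}\tilde F(b-x) - \me^{-b}\tilde F(x-b),
\end{equation}
and consequently $A = \Psi_{b_1}(a_2) - \Psi_{b_2}(a_2) - \Psi_{b_1}(a_1) + \Psi_{b_2}(a_1)$. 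Expanding each $\Psi$ and regrouping the eight resulting terms into the pairs $\me^{-a_i}\bigl[\tilde F(b_1-a_i)-\tilde F(b_2-a_i)\bigr]$ and $\me^{-b_j}\bigl[\tilde F(a_1-b_j)-\tilde F(a_2-b_j)\bigr]$ reproduces exactly $F(a_1,b_1,b_2)-F(a_2,b_1,b_2)+F(b_1,a_1,a_2)-F(b_2,a_1,a_2)$, which is the asserted identity.

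The only genuinely delicate point is the behaviour along the diagonal: one must ensure that the antiderivatives and the integration by parts remain valid when $[a_1,a_2]$ and $[b_1,b_2]$ overlap, so that the arguments $b-x$ and $x-b$ pass through $0$. This is resolved by the analyticity of $\tilde F$ at the origin noted above, which shows that no spurious boundary contributions arise. Alternatively, and perhaps more safely, I would verify the stated identity by differentiation: both sides vanish when the rectangle degenerates ($a_1=a_2$ or $b_1=b_2$, since then each $F$ term either vanishes or cancels in pairs), and their partial derivatives with respect to each of the four endpoints agree. Each such check reduces, via $\tilde F'(w)=(\me^{-w}-1)/w$, to the elementary cancellation of two equal difference quotients of $\me^{-x}$, so no new computation is needed beyond the inner-integral evaluation already established.
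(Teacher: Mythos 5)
Your proof is correct and complete; the paper states this lemma without providing any proof, so there is nothing to compare it against --- your derivation simply supplies the missing argument. The computational steps all check out: from $\Ei'(t)=\me^{t}/t$ one gets $\frac{d}{du}\bigl(\log|u|-\Ei(-u)\bigr)=(1-\me^{-u})/u=-\tilde F'(u)$, so the inner integral equals $\me^{-x}\bigl[\tilde F(b_1-x)-\tilde F(b_2-x)\bigr]=F(x,b_1,b_2)$; and your antiderivative $\Psi_b(x)=-\me^{-x}\tilde F(b-x)-\me^{-b}\tilde F(x-b)$ does satisfy $\Psi_b'(x)=\me^{-x}\tilde F(b-x)$, since the two derivative contributions $\me^{-x}\tilde F'(b-x)$ and $\me^{-b}\tilde F'(x-b)$ are both equal to $(\me^{-b}-\me^{-x})/(b-x)$ and cancel. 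Expanding $\Psi_{b_1}(a_2)-\Psi_{b_2}(a_2)-\Psi_{b_1}(a_1)+\Psi_{b_2}(a_1)$ and regrouping indeed yields $F(b_1,a_1,a_2)+F(a_1,b_1,b_2)-F(b_2,a_1,a_2)-F(a_2,b_1,b_2)$. You are also right that the only delicate point is the diagonal, and that it is settled by the entirety of $\tilde F$, whose Taylor series $\gamma_E+\sum_{k\ge 1}(-1)^k x^k/(k\,k!)$ the paper records (with the typographical slips $\lambda_E$ for $\gamma_E$ and $z$ for $x$); this matters because the lemma is later invoked with overlapping integration intervals such as $A_{0,c,0,c}$. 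Your closing suggestion --- verifying the identity by differentiating in the four endpoints, each check reducing to a cancellation of equal difference quotients of $\me^{-x}$ --- is a sound independent confirmation.
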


\begin{figure}
\begin{center}
\begin{tikzpicture}[scale=2]
\draw [black] (0,0) rectangle (2,2);
\draw [fill=black] (0.25,0.25) rectangle (0.75,0.75);
\draw [fill=gray] (0,0.25) rectangle (0.25,0.75);
\draw [fill=gray] (0.75,0.25) rectangle (2,0.75);
\draw [fill=gray] (0.25,2) rectangle (0.75,0.75);
\draw [fill=gray] (0.25,0) rectangle (0.75,0.25);
\draw [fill=black] (2.5,1.9) rectangle (2.9,1.5);
\node at (3.1,1.7) {\normalsize $Q_{\delta}$};
\draw [black,fill=white] (2.5,1.3) rectangle (2.9,0.9);
\node at (3.1,1.1) {\normalsize $Q_{e,\delta}$};
\draw [fill=gray] (2.5,0.7) rectangle (2.9,0.3);
\node at (3.1,0.5) {\normalsize $Q_{r,\delta}$};
\draw [dotted,thick] (2,-0.1)--(2,0);
\draw [dotted,thick] (0,-0.1)--(0,0);
\draw [dotted,thick] (-0.1,2)--(0,2);
\draw [dotted,thick] (-0.1,0)--(0,0);
\node at (-0.3,2) {\footnotesize $1$};
\node at (-0.3,0) {\footnotesize $-1$};
\node at (2,-0.2) {\footnotesize $1$};
\node at (0,-0.2) {\footnotesize $-1$};
\draw [dotted,thick] (-0.1,0.5)--(2,0.5);
\draw [dotted,thick] (-0.1,1)--(2,1);
\draw [dotted,thick] (-0.1,1.5)--(2,1.5);
\node at (-0.3,0.5) {\footnotesize  $-0.5$};
\node at (-0.3,1) {\footnotesize $0$};
\node at (-0.3,1.5) {\footnotesize $0.5$};
\draw [dotted,thick] (0.5,-0.1)--(0.5,2);
\draw [dotted,thick] (1,-0.1)--(1,2);
\draw [dotted,thick] (1.5,-0.1)--(1.5,2);
\node at (0.5,-0.2) {\footnotesize $-0.5$};
\node at (1,-0.2) {\footnotesize $0$};
\node at (1.5,-0.2) {\footnotesize  $0.5$};
\end{tikzpicture}
\end{center}
\caption{The different regions of integration for $a=b$ and $\delta=0.5$.}\label{temp}
\end{figure}
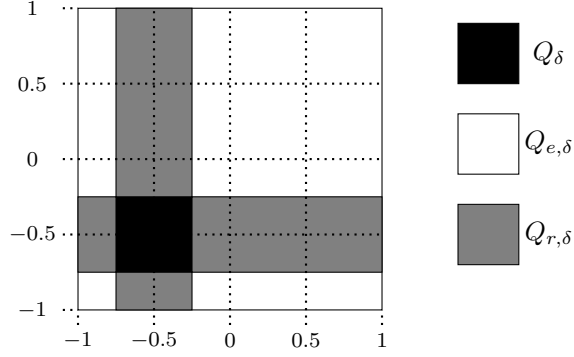

Having established these two lemmas we proceed with the proof of Theorem \ref{theo:tempdep}:
\begin{proof}(Of Theorem~\ref{theo:tempdep}, Eq.\ (\ref{eq:lowtemplimit1}) and Eq. (\ref{eq:lowtemplimitferro})).
Consider the open interval 
\begin{equation}
	I:=\left(\frac{a}{2}-b,\frac{a}{2}+b\right).
\end{equation}
If $0 \in I$ (i.e., $\left|a\right| < 2 b$), set $m:=1-|a|/(2b)$. If $0 \notin \overline{I}$,
set $m:= |a|/(2b)-1$.
For $1>\delta>0$ we will split the region of integration into several sets, namely the following.
\begin{itemize}
\item  The bulk contribution: 
\begin{equation}
Q_{\delta}:=\left\{(x,y) \in \left[-1,1\right] \times \left[-1,1\right] \left|\left|x+\frac{a}{2b}\right|,\left|y+\frac{a}{2b}\right| \leq \delta m\right.\right\}.
\end{equation}
If $\left|a\right|>2b$, then $Q_{\delta}:=\emptyset$ for every $0 < \delta < 1$.
\item The edge contribution: 
\begin{equation}
Q_{e,\delta}:=\left\{(x,y) \in \left[-1,1\right] \times \left[-1,1\right] \left|\left|x+\frac{a}{2b}\right|,\left|y+\frac{a}{2b}\right|> \delta m\right.\right\}.
\end{equation}
If $\left|a\right|>2b$, then $Q_{\delta}= \left[-1,1\right] \times \left[-1,1\right]$
\item The remaining contribution: 
\begin{equation}
Q_{r,\delta}:=\left[-1,1\right]\times \left[-1,1\right] \setminus \left(Q_{\delta} \cup Q_{e,\delta} \right).
\end{equation}
\end{itemize}
The various regions of integration are depicted in Fig.\ \ref{temp}.
We will prove the theorem in several steps:
\begin{itemize}
\item[{\bf Step 1:}] {\it The $Q_{e,\delta}$ contribution of $I_1$ and $I_2$ decays exponentially fast}.
Formally, we will show that for any $0 < \delta <1$ and every $\kappa<2 \delta b m \min\left\{\alpha,1\right\}$
\begin{equation}
\label{eq:fundamentaltailestimate}
	\lim_{\beta \rightarrow \infty} \exp(\kappa \beta) 
	\int \int_{Q_{e,\delta}} dxdy f_{\beta,11}(x,y) \left(f_{\beta,12}(x,y) \pm  f_{\beta,13}(x,y) \right) = 0 ,
\end{equation}
where we used the same notation as in Eq.\ (\ref{eqn:I1withffunctions}) (note that this proves Eq.\ (\ref{eq:lowtemplimitferro})). 
First of all observe that
\begin{equation}
 f_{\beta,11} \left(f_{\beta,12} \pm f_{\beta,13}\right) = \left|f_{\beta,11}\right|\cdot\left|f_{\beta,12} \pm f_{\beta,13}\right|. 
\end{equation}
Consider the factor $\left|f_{\beta,11}\right|$ first. 
If $x,y > {-a}/{2b}+\delta$ or $x,y < {-a}/{2b}-\delta$ (this is actually true for all $x,y$ whenever $|a| > 2b$) the mean value theorem yields
\begin{eqnarray}
	\left|f_{\beta,11}\right| & = & \beta b \cdot \sech^2\left(\tilde{\epsilon}_{\beta}(z) \right) 
	\cdot  s_{\alpha}'\left(\tanh\left(\tilde{\epsilon}_{\beta}(z)\right)\right) \nonumber \\
	& \lesssim & (\beta b)^2 C_{1} \exp(-2\min\left\{\alpha,1\right\}\beta b \delta m ).
\end{eqnarray}
Here $z$ lies between $x$ and $y$ and $C_1$ is some constant (any $C_1 > {4 \alpha}/{\log(2)}$ works). 
Here and in the future, we will use the symbol $\lesssim$ to indicate that $\leq$ holds for sufficiently large $\beta$. 

If $\left|a\right|<2b$ then $x + a/(2b)< - \delta, y + a/(2b)> \delta$ has to be estimated separately. In this case consider the estimate
\begin{equation}
 \left|f_{\beta,11}\right| \leq \frac{2 \sup\left\{s_{\alpha}(z)\left|\left|z\right|\geq \tanh(\delta \beta b m)\right.\right\}}{2 \delta} \lesssim C_{2,\delta,\alpha} \beta b \exp(-2 \beta m \delta b \alpha),
\end{equation}
for some (possibly $\delta-$ and $\alpha-$ dependent) constant $C_{2,\delta,\alpha}$ (the factor $\beta b$ is actually only needed if $\alpha=1$). So asymptotically, we have that
\begin{equation}
   \left|f_{\beta,11}\right| \lesssim  (C_1+C_{2,\delta,\alpha})(b \beta)^2\exp(-2\min\left\{\alpha,1\right\}\beta b \delta m ) .
\end{equation}
Consequently, setting $C_2:=C_1+C_{2,\delta,\alpha}$, we obtain
\begin{eqnarray}
\label{Eq.firstfactorexpsmall}
	&& \int_{Q_{e,\delta}} dxdy f_{\beta,11}(x,y) (f_{\beta,12}(x,y) + f_{\beta,13}(x,y)) \\
	& \leq & C_{2} (b \beta)^2 \exp(-2 \beta b \delta m \min\left\{\alpha,1\right\}) J \nonumber,
\end{eqnarray}
where
\begin{equation}
J = \int_{\left[-1,1\right] \times \left[-1,1\right]} dxdy \left|f_{\beta,12}(x,y)\right| + \left|f_{\beta,13}(x,y)\right|  .
\end{equation}
%By the triangle inequality
%\begin{eqnarray}
%&& \left|f_{\beta,12}(x,y)\right| f_{\beta,13}(x,y) \leq \beta b \left(\left|\frac{\sech^2(\tilde{\epsilon}_{\beta}(x)) - \sech^2(\tilde{\epsilon}_{\beta}(y))}{\tanh(\tilde{\epsilon}_{\beta}(x)) - \tanh(\tilde{\epsilon}_{\beta}(x))} \right| +  \right.\\ 
% && +\left. \left|\frac{\sech^2(\tilde{\epsilon}_{\beta}(x))(1-x^2) - \sech^2(\tilde{\epsilon}_{\beta}(y))(1-y^2)}{\tanh(\tilde{\epsilon}_{\beta}(x)) - \tanh(\tilde{\epsilon}_{\beta}(x))} \right|\frac{1}{\sqrt{(1-x^2)(1-y^2)}} \right) \nonumber
%\end{eqnarray}
The generalized mean value theorem and some calculus yields
\begin{equation}
\left|\frac{\sech^2(\tilde{\epsilon}_{\beta}(x)) - \sech^2(\tilde{\epsilon}_{\beta}(y))}{\tanh(\tilde{\epsilon}_{\beta}(x)) - \tanh(\tilde{\epsilon}_{\beta}(x))} \right| \leq 2
\end{equation}
and
\begin{equation}
 \left|\frac{\sech^2(\tilde{\epsilon}_{\beta}(x))(1-x^2) - \sech^2(\tilde{\epsilon}_{\beta}(y))(1-y^2)}{\tanh(\tilde{\epsilon}_{\beta}(x)) - \tanh(\tilde{\epsilon}_{\beta}(x))} \right|\leq 2 + \frac{2}{b \beta} .
\end{equation}
Hence 
\begin{equation}
  J \lesssim (2+\epsilon) \pi^2
\end{equation}
for any $\epsilon>0$. This together with Eq.\ (\ref{Eq.firstfactorexpsmall}) 
yields the identity Eq.\ (\ref{eq:fundamentaltailestimate}) and finishes the first step. 

\item[{\bf Step 2}:] {\it The $Q_{r,\delta}$ contribution of $I_1$ and $I_2$ is bounded}.
Formally we will show that
\begin{equation}
\label{eq:fundamentaltailestimaterem}
\limsup_{\beta \rightarrow \infty} \beta b \int \int_{Q_{r,\delta}} dxdy f_{\beta,11}(x,y) \left(f_{\beta,12}(x,y)\pm f_{\beta,13}(x,y)\right) < \infty.
\end{equation}
Without any loss of generality we can assume that $\delta <{1}/{4}$. We will will split ${Q_{r,\delta}}$ into the sets
\begin{equation}
  {Q_{r,\delta,1}}:={Q_{r,\delta}} \cap \left\{(x,y) \in \left[-1,1\right] \times \left[-1,1\right]\left|\left|x-y\right|<\delta\right.\right\}
\end{equation}
and 
\begin{equation}
  {Q_{r,\delta,2}}:= {Q_{r,\delta,1}} \setminus {Q_{r,\delta,1}} .
\end{equation}
We will show separately that the two integrals 
\begin{equation}
J_1:=\beta b \int \int_{Q_{r,\delta,1}} dxdy \left|f_{\beta,11}(x,y)\right| \left(\left|f_{\beta,12}(x,y)\right|+ \left|f_{\beta,13}(x,y)\right|\right),
\end{equation}
and
\begin{equation}
J_2:=\beta b \int \int_{Q_{r,\delta,2}}dxdy \left|f_{\beta,11}(x,y)\right| \left(\left|f_{\beta,12}(x,y)\right|+ \left|f_{\beta,13}(x,y)\right|\right)
\end{equation}
are bounded. In order to bound $J_1$, note that by the mean value theorem 
\begin{equation}
	\left(\left|f_{\beta,12}(x,y)\right|+ \left|f_{\beta,13}(x,y)\right|\right) 
	\leq \left(\left(2+\frac{2}{\beta b}\right)\frac{1}{{(1-4\delta^2)^{1/2}(1-\delta^2)^{1/2}}}+2\right).
\end{equation}
Hence,
\begin{equation}
 J_2 \leq C_{\delta} \beta b \int \int_{{Q_{r,\delta,1}}}dxdy \left|f_{11}(x,y)\right| .
\end{equation}
An appropriate linear substitution gives
\begin{eqnarray}
 && \beta b \int \int_{{Q_{r,\delta,1}}}dxdy \left|f_{11}(x,y)\right| \\
 & = & 2 \int^{2\delta b \beta}_{\delta b \beta} \int^{\delta 
\beta b}_{0} dx dy \left|\frac{s_{\alpha}(\tanh{(x)})-s_{\alpha}(\tanh{(y)})}{x-y}\right| .
\end{eqnarray}
Note that the function $f_{\alpha}:\left[0,\infty\right)\rightarrow\rr$ 
defined as $f_{\alpha}(x):=s_{\alpha}(\tanh(x))$ is monotoneous decreasing. 
Moreover,
\begin{eqnarray}
\label{eq:falphadashexp}
&& -f_{\alpha}'(x) = \\
&& \frac{2\alpha}{\left|1-\alpha\right|\log(2)}\exp\left(-2 \min\left\{1,\alpha\right\} x \right) \frac{1-\exp(-2\left|\alpha-1\right|x)}{1+\exp(-2\alpha x)}\frac{1}{1+\exp(-2x)} \nonumber
\end{eqnarray}
for $\alpha \neq 1$ and 
\begin{equation}
\label{eq:fonedashexp}
 -f_{1}'(x) =  \frac{4}{\log(2)}  \frac{x \exp\left(-2 x \right)}{\left(1+\exp(-2x)\right)^2} .
\end{equation}
Therefore, 
there exist  constants $C_{\alpha},\kappa_{\alpha}>0$ such that 
\begin{equation}
\label{eq:estimderfalpha}
 -f_{\alpha}'(x) \leq C_{\alpha} \exp(-\kappa_{\alpha} x).
\end{equation}
That is to say, Lemma~\ref{lem:boundsondifquot} gives
\begin{equation}
\label{eqn:roughestimateonsdifquot}
	 0\leq \frac{s_{\alpha}(\tanh{(x)})-s_{\alpha}(\tanh{(y)})}{y-x} \leq \frac{C_{\alpha}}{\kappa_{\alpha}} \frac{\exp(-\kappa_{\alpha} x) - \exp(-
	 \kappa_{\alpha} y)}{y-x},
\end{equation}
giving rise to the estimate
\begin{eqnarray}
	 && \int^{2\delta b \beta}_{\delta b \beta} \int^{\delta 
	\beta b}_{0} dx dy \left|\frac{s_{\alpha}(\tanh{(x)})-s_{\alpha}(\tanh{(y)})}{x-y}\right| \\
	&\leq& \frac{C_{\alpha}}{(\kappa_{\alpha})^2} 
	A_{\delta b \beta \kappa_{\alpha},2 \delta b \beta \kappa_{\alpha},0,\delta b \beta \kappa_{\alpha}} \nonumber ,
\end{eqnarray}
where $A_{a_1,a_2,b_1,b_2}$ has been defined in Lemma \ref{lem:Eiintegral}. By some calculus again
\begin{equation}
\lim_{\beta \rightarrow \infty}  A_{\delta b \beta \kappa_{\alpha},2 \delta b \beta \kappa_{\alpha},0,\delta b \beta \kappa_{\alpha}} = \log \left(2 \right) .
\end{equation}
Therefore,
$\limsup_{\beta \rightarrow \infty} J_1 < \infty$.
In order to bound $J_2$,
 the singularities of $ f_{13}$ on the lines $x=1$ and $y=1$ have to be controlled. We use the bound 
\begin{equation}
 \left|f_{1,2}(x,y) + f_{1,3}\right| \leq \left(4+\frac{2}{b\beta} \right)\frac{1}{({1-x^2})^{1/2}({1-y^2})^{1/2}}
\end{equation}
again and bound $f_{1,1}$ in the following way:
\begin{equation}
\left|f_{1,1}(x,y)\right| \leq \frac{s_{\alpha}(\tanh(\tilde{\epsilon}_{\beta}(x))) + s_{\alpha}(\tanh(\tilde{\epsilon}_{\beta}(y))}{\delta} .
\end{equation}
A short calculation shows that
\begin{equation}
	J_2 \lesssim 8 \frac{4+\frac{2}{\beta b}}{\delta} (\beta b)^2 C_{\alpha} \int^1_0 \int^1_0 x \exp(-2 \beta \alpha b x) 
	\left(\frac{1}{({1-x^2})^{1/2}({1-y^2})^{1/2}} \right),
\end{equation}
where $ C_{\alpha}$ is some constant with $ C_{\alpha} > {4\alpha}/{\log(2)}$. Note that
\begin{equation}
	\label{eg:boundonsqrtintegrals}
	 \limsup_{\beta \rightarrow \infty } \beta^2 \int^1_0 \int^1_0 dx dy x 
	 \exp(-2 \beta b \alpha x) \frac{1}{({1-x^2})^{1/2}} \frac{1}{({1-y^2})^{1/2}} \leq \frac{\pi}{2(2\alpha b)^2} .
\end{equation}
This follows easily from explicit integration of the $y-$ integral, giving 
\begin{equation}
\int^1_0 dy \frac{1}{({1-y^2})^{1/2}}=\frac{\pi}{2}
\end{equation}
and an appropriate estimate for the $x-$ integral, namely for fixed $\epsilon > 0$ note that
\begin{eqnarray} 
 && \int^1_{0} dx x \exp(-2 \beta b \alpha x) \frac{1}{({1-x^2})^{1/2}} \\
 & \leq & \frac{1}{({1-\epsilon^2})^{1/2}}\int^{\epsilon}_{0} dx x \exp(-2 \beta b \alpha x) + \exp(-2 \beta b \alpha \epsilon) \int^{1}_{\epsilon} 
 \frac{1}{({1-x^2})^{1/2}} \nonumber\\
&\leq& \frac{1}{(2\alpha b \beta)^2 ({1-\epsilon^2})^{1/2}} + \exp(-\epsilon \beta b) \frac{\pi}{2} .\nonumber
\end{eqnarray}
Finally, multiply this expression by $\beta^2$, take the $\limsup$ for $\beta$ to infinity and let $\epsilon$ approach zero afterwards (bounding 
$1/(1-x^2)^{1/2}$ by $1$ from below actually shows that Eq.\ (\ref{eg:boundonsqrtintegrals}) 
holds true with equality and that the limit superior is actually a limit). 
In any case $J_2$ is bounded, proving the second step.

\item[{\bf Step 3:}] {\it The integrand $f_{\beta,11}$ bounds the asymptotic growth of $I_1$ and $I_2$}.
Formally we will show that 
\begin{eqnarray}
\label{eqn:asymptoticallyimportantpart1}
\limsup_{\beta \rightarrow \infty} \frac{I_1}{\log(\beta)} & \leq  &  \liminf_{\beta \rightarrow \infty}  \frac{4 \beta b  \int_{Q_{\delta}} dx dy \left|f_{\beta,11}(x,y)\right|}{\log(\beta)}
\end{eqnarray}
and
\begin{eqnarray}
	\label{eqn:asymptoticallyimportantpart2}
	\limsup_{\beta \rightarrow \infty} \frac{I_2}{\log(\beta)} & \leq  &  \liminf_{\beta \rightarrow \infty}  \frac{4 \beta b  \int_{Q_{\delta}} dx dy 
	\left|	f_{\beta,11}(x,y)\right|}{\log(\beta)}
\end{eqnarray}for any $0<\delta<1$. 
The result for $I_2$ is trivial when the result for $I_1$ is settled, since $I_2 \leq I_1$.
Fix $\epsilon>0$ and note that by Step 1 and Step 2, we find that
\begin{equation}
	\limsup_{\beta \rightarrow \infty} \frac{I_1}{\log(\beta)}  = \limsup_{\beta \rightarrow \infty} \frac{I_{1,\epsilon}}{\log(\beta)} 
\end{equation}
%and
%\begin{equation}
%\liminf_{\beta \rightarrow \infty} \frac{I_1}{\log(\beta)}  = \liminf_{\beta \rightarrow \infty} \frac{I_{1,\epsilon}}{\log(\beta)} 
%\end{equation}
where
\begin{equation}
	I_{1,\epsilon}:=b \beta \int \int_{Q_{\epsilon}}dx dy f_{11}(x,y) \left(f_{12}(x,y)+f_{13}(x,y) \right).
\end{equation}
However, for $x,y \in Q_{\epsilon}$ the mean value theorem yields the estimate
\begin{equation}
  \left|f_{12}(x,y)+f_{13}(x,y)\right| \leq 2 + 2\left(1+\frac{1}{\beta b} \right)\frac{1}{1-\epsilon^2}
\end{equation}
again (compare Step 1 and Step 2). This is especially true for the choice $0< \epsilon < \delta$, and therefore
\begin{eqnarray}
\limsup_{\beta \rightarrow \infty} \frac{I_{1}}{\log(\beta)} & \leq  &  \liminf_{\beta \rightarrow \infty}  \left(2+2\frac{1}{1-\epsilon^2}\right) \frac{ \beta b \int_{Q_{\epsilon}} dx dy \left|f_{\beta,11}(x,y)\right|}{\log(\beta)}\nonumber \\
& \leq  &  \liminf_{\beta \rightarrow \infty}  \left(2+2\frac{1}{1-\epsilon^2}\right) \frac{ \beta b \int_{Q_{\delta}} dx dy \left|f_{\beta,11}(x,y)\right|}{\log(\beta)}.
\end{eqnarray}
In the last expression $\epsilon$ can be sent to zero, proving the upper bound, Eq.\ (\ref{eqn:asymptoticallyimportantpart1}).

\item[{\bf Step 4:}] {\it Derive a logarithmic bound for $I_1$ and show that $I_2$ increases sub-logarithmically}.
We will show that 
\begin{equation}
\limsup_{\beta \rightarrow \infty} \frac{I_1}{\log(\beta)} \leq \min\left\{\frac{108 \alpha}{\log(2) \min\left\{1,\alpha\right\}^3},\frac{16 \alpha}{\log(2) \left|1-\alpha\right| \min\left\{1,\alpha\right\}^2}\right\}
\end{equation}
and
\begin{equation}
\label{eq:discussI2away}
\limsup_{\beta \rightarrow \infty} \frac{I_2}{\log(\beta)} = 0.
\end{equation}
We use Eq.\ (\ref{eqn:asymptoticallyimportantpart1}) from Step 3, with which we make a linear substitution and exploit the
symmetry of the integrand to get
\begin{eqnarray}
\limsup_{\beta \rightarrow \infty} \frac{I_1}{\log(\beta)} & \leq  &  
\limsup_{\beta \rightarrow \infty}   \frac{8\left(J_{11}+J_{12}\right)}{\log(\beta)},
\end{eqnarray}
where
\begin{equation}
J_{11}=\int^{\delta \beta b}_{0} \int^{\delta \beta b}_{0} dx dy \frac{f_{\alpha}(x) - f_{\alpha}(y)}{y-x}
\end{equation}
and
\begin{equation}
J_{12}=\int^{\delta \beta b}_{0} \int^{\delta \beta b}_{0} dx dy \frac{f_{\alpha}(x) + f_{\alpha}(y)}{x+y}.
\end{equation}
Here, $f_{\alpha}$ are the functions introduced in Step 2. 
Fix $0<\epsilon<\min\left\{\alpha,1\right\}$. By Eq.\ (\ref{eq:estimderfalpha}),
one finds
\begin{equation}
  -{f_{\alpha}}'(x) \leq C_{\alpha} \exp(-\kappa_{\alpha}x).
\end{equation}
A look at Eq.\ (\ref{eq:estimderfalpha}) and Eq.\ (\ref{eq:fonedashexp}) in turn reveals that
\begin{equation}
C_{\alpha} := \frac{2 \alpha}{\log(2) \epsilon}, \kappa_{\alpha}:=2(\min\left\{\alpha,1\right\}-\epsilon)
\end{equation}
is a possible choice. Proceeding as in Step 3 yields
\begin{equation}
 \limsup_{\beta \rightarrow \infty} \frac{J_{11}}{\log(\beta)} \leq \limsup_{\beta \rightarrow \infty} \frac{C_{\alpha}}{(\kappa_{\alpha})^2} \frac{A_{0, \delta b \beta \kappa_{\alpha},0,\delta b \beta \kappa_{\alpha}}}{\log(\beta)}=\frac{2 C_{\alpha}}{(\kappa_{\alpha})^2}
\end{equation}
and optimizing for $\epsilon$ gives
\begin{equation}
 \limsup_{\beta \rightarrow \infty} \frac{J_{11}}{\log(\beta)} \leq \frac{27 \alpha}{8\log(2) \left(\min\left\{1,\alpha\right\}\right)^3}.
\end{equation}
Another possibility (working for $\alpha \neq 1$ only) is to use the bound 
\begin{equation}
-{f_{\alpha}}'(x) \leq \frac{2 \alpha}{\left|1-\alpha\right|\log(2)} \exp(-2 \min\left\{a,1\right\}x) ,
\end{equation}
giving the asymptotic bound
\begin{equation}
 \limsup_{\beta \rightarrow \infty} \frac{J_{11}}{\log(\beta)} \leq \frac{\alpha}{\left|1-\alpha\right|\log(2) \left(\min\left\{a,1\right\}\right)^2}.
\end{equation}
To bound $J_{12}$, note that $x+y \geq \left|x-y\right|$ for any $x,y\geq 0$. Therefore,
\begin{equation}
	J_{12}\leq J_{11}.
\end{equation}
To show the validity of Eq.\ (\ref{eq:discussI2away}), we
use the mean value theorem to yield the following bounds
\begin{equation}
	\left|\frac{\sech^2(\tilde{\epsilon}_{\beta}(x)) - 
	\sech^2(\tilde{\epsilon}_{\beta}(y))}{\tanh(\tilde{\epsilon}_{\beta}(x)) - 
	\tanh(\tilde{\epsilon}_{\beta}(x))}\left(\frac{1}{\sqrt{1-x^2}\sqrt{1-y^2}}-1\right) \right| \leq 2\left(\frac{1}{1-\delta^2} -1 \right)
\end{equation}
for all $x,y \in Q_{\delta}$. By the same argument,
\begin{equation}
	\left|\frac{x^2\sech^2(\tilde{\epsilon}_{\beta}(x)) - 
	y^2\sech^2(\tilde{\epsilon}_{\beta}(y))}{\tanh(\tilde{\epsilon}_{\beta}(x)) - 
	\tanh(\tilde{\epsilon}_{\beta}(x))}\left(\frac{1}{\sqrt{1-x^2}\sqrt{1-y^2}}\right) \right| \leq \frac{\frac{2\delta}{\beta b} +2 \delta^2}{1-\delta^2}
\end{equation}
and therefore
\begin{equation}
	\left|f_{12} - f_{13}\right| \leq 2\left(\frac{2\delta^2 +\frac{\delta}{\beta b}}{1-\delta^2}\right)  ,
\end{equation}
repeating the argument for the $I_1$ integral yields the existence of some constant $C_{\alpha}>0$ such that
\begin{equation}
  \limsup_{\beta \rightarrow \infty} \frac{I_1}{\log(\beta)} \leq C_{\alpha} 2\left(\frac{2\delta^2}{1-\delta^2}\right).
\end{equation}
Since this is true for all $\delta \geq 0$, we have
\begin{equation}
  \limsup_{\beta \rightarrow \infty} \frac{I_1}{\log(\beta)} = 0,
\end{equation}
finishing Step 4. 
\end{itemize}

Therefore for $\left|a\right| < 2b$, by Step 4 and Eq. (\ref{eq:muualinformationbyI1I2}),
\begin{eqnarray}
\limsup_{\beta \rightarrow \infty} \frac{I_{\alpha,\infty}(A:B)}{\log(\beta)} & \leq & \limsup_{\beta \rightarrow \infty} \frac{2 J_{11}}{\pi^2 \log(\beta)}\\ 
&\leq& \frac{2\alpha}{\pi^2 \log(2)\left(\min\left\{1,\alpha\right\}\right)^2}  \min\left\{\frac{27}{2  \left(\min\left\{1,\alpha\right\}\right)},\frac{4}{\left|1-\alpha\right|}\right\},
\nonumber
\end{eqnarray}
completing the proof of the low-temperature limit.
\qed
\end{proof}

\end{document}